\newtheorem{theorem}{\bf Theorem}[section]
\newtheorem{remark}[theorem]{\bf Remark}
\newtheorem{example}[theorem]{\bf Example}
\newtheorem{algorithm}[theorem]{\bf Algorithm}
\def\endproof{\qed\endtrivlist}
\let\csname endproof*\endcsname=\endproof
\def\qedsymbol{\ifmmode\bgroup\else$\bgroup\aftergroup$\fi
  \vcenter{\hrule\hbox{\vrule height.6em\kern.6em\vrule}\hrule}\egroup}
\def\qed{\ifmmode\else\unskip\nobreak\fi\quad\qedsymbol}
\renewcommand{\iff}{\Leftrightarrow}
\newcommand{\lra}{\leftrightarrow}
\renewcommand{\le}{\leqslant}
\newcommand{\lBrack}{\lbrack\!\lbrack}
\newcommand{\rBrack}{\rbrack\!\rbrack}
\newcommand{\oobslash}{\circledbslash\kern-10.2pt\bigcirc}
\newcommand{\ooslash}{\oslash\kern-10.2pt\bigcirc}
\begin{document}

\journal{Fuzzy Sets and Systems}

\title{\Large\bf Further improvements of determinization methods\\ for fuzzy finite automata\tnoteref{t1}}
\tnotetext[t1]{Research supported by Ministry of Education, Science and Technological Development, Republic of Serbia, Grant No. 174013}
\author{Zorana Jan\v ci\' c}
\ead{zoranajancic329@gmail.com}

\author{Ivana Mici\'c}
\ead{ivanajancic84@gmail.com}

\author{Jelena Ignjatovi\'c}
\ead{jelena.ignjatovic@pmf.edu.rs}

\author{Miroslav \'Ciri\'c\corref{cor}}
\ead{miroslav.ciric@pmf.edu.rs}

\cortext[cor]{Corresponding author. Tel.: +38118224492; fax: +38118533014.}
\address{University of Ni\v s, Faculty of Sciences and Mathematics, Vi\v segradska 33, 18000 Ni\v s, Serbia}

\begin{abstract}
In this paper we provide further improvements of determinization methods
for fuzzy finite automata.~These methods~perform better than all~previous determinization methods for fuzzy finite automata, developed~by B\v elohl\'avek
\cite{Bel.02b}, Li and Pedrycz \cite{LP.05}, Ignjatovi\'c et al. \cite{ICB.08}, and Jan\v ci\'c et al. \cite{JIC.11}, in the sense that they produce smaller automata, while require the same computation time.~The~only exception~is the Brzozowski type~determinization algorithm developed recently by Jan\v ci\'c and \'Ciri\'c \cite{JC.13},~which~produces a minimal crisp-deterministic~fuzzy~automaton,~but the~algorithms created here~can also be~used within the Brzozowski type algorithm and improve its performance.
\end{abstract}

\begin{keyword}
Fuzzy automaton; Crisp-deterministic fuzzy automaton; Nerode automaton; Determinization; State reduction; Complete residuated lattice;
\end{keyword}

\maketitle

\section{Introduction}

Many practical applications of automata require determinization, a procedure of converting a nondeterministic finite automaton to an equivalent deterministic finite automaton, or, in the case of fuzzy~automa\-ta, a procedure of converting a fuzzy finite automaton to an equivalent crisp-deterministic fuzzy~automaton. The standard determinization method is the subset construction, where a nondeterministic automaton with $n$ states is converted to an equivalent deterministic automaton with up to $2^n$ states, whereas in the case of fuzzy finite automata the resulting crisp-deter\-ministic fuzzy automaton can even be infinite.~That~is why the main research directions in this area are aimed at finding such methods which will mitigate the potential enormous growth of the number of states during the determinization.~The~natural~idea~is~to~com\-bine the existing determinization and state reduction methods so that we reduce the number of states to determinization.~However, here we com\-bine these methods to provide two-in-one pro\-cedures that perform determinization and state reduction simultaneously.

A crisp-deterministic fuzzy automaton is a fuzzy automaton with exactly one crisp initial~state~and~a~de\-ter\-ministic transition function, and the fuzziness is entirely concentrated in the fuzzy set of terminal~states. This kind of determinism was first studied by B\v elohl\'avek \cite{Bel.02b}, in the context of fuzzy finite automata~over a complete distributive lattice, and Li and Pedrycz \cite{LP.05}, in the context of fuzzy finite automata over a lattice-ordered monoid. Determinization algorithms that were provided there generalize the subset construction. Another algorithm, provided by Ignjatovi\'c et al.~\cite{ICB.08}, also generalizes the subset construction,~and~for~any input it generates a smaller crisp-deterministic fuzzy automaton than the algorithms from \cite{Bel.02b,LP.05}.~Since this crisp-deterministic fuzzy automaton can be alternatively constructed by means of the Nerode right con\-gruence of the original fuzzy finite automaton, it was called in \cite{ICBP.10} the {\it Nerode~automaton\/} of this~fuzzy finite auto\-maton.~The Nerode automaton was constructed in \cite{ICB.08} for fuzzy finite automata over a complete re\-sid\-uated lattice, and it was noted that the identical construction can also be applied in a more general context, for fuzzy finite automata over a lattice-ordered monoid, and even for weighted finite automata over a semi\-ring.~This construction was also transferred in \cite{CDIV.10} to weighted automata over~strong~bimonoids.\break The algorithm proposed by Jan\v ci\'c et al.~in \cite{JIC.11} produces a crisp-deterministic fuzzy automaton that~is~even smaller than the Nerode automaton.~In the terminology introduced in this paper, Jan\v ci\'c et al.~constructed the children automaton for the Nerode automaton of a given fuzzy finite automaton. Recently, Jan\v ci\'c~and \'Ciri\'c \cite{JC.13} adapted the well-known Brzozowski's double reversal determinization algorithm to fuzzy~auto\-mata.~As in the case of ordinary nondeterministic automata, Brzozowski type determinization~of~a~fuzzy finite automaton results in a minimal crisp-deterministic fuzzy automaton that is equivalent to the original fuzzy finite automaton.~It was also shown that even if all previous determinization algorithms fail to build a finite crisp-deterministic fuzzy automaton, the Brzozowski type algorithm can produce a finite~one.

In addition to the determinization, practical applications of automata often require state reduction, a procedure of converting a given automaton into an equivalent automaton with a smaller~number~of~states. As the state minimization problem for fuzzy finite automata, as well as for nondeterministic~ones,~is~com\-pu\-tation\-ally hard ({\small PSPACE}-complete \cite{JR.93,LQ.13,Yu.97}), it is not required that this~equivalent automaton is~minimal, but it is necessary that it is effectively computable.~From different aspects, the state reduction~for~fuzzy~automata was studied in \cite{BG.02,CM.04,LL.07,MMS.99,P.04b,PZ.08,P.06,WQ.10,XQLF.07},~as well as in the books \cite{MM.02,PK.04}.~All algorithms~provided~there were motivated by the basic idea used in the minimization of ordinary de\-ter\-ministic automata, the idea of detecting and merging indistinguishable states, which boils down to com\-putation of certain~crisp equiva\-lences on the set of states.~A new approach to the state reduction~was initiated in \cite{CSIP.10,SCI.14}.~First, it was shown that better reductions of fuzzy finite auto\-mata can be achieved~if fuzzy equi\-valences are used instead of ordinary equivalences, and even better if fuzzy quasi-orders are used.~In addition, it was shown that the state reduction problem for fuzzy finite automata can be reduced to the problem of finding fuzzy quasi-orders that are solutions to a particular system of fuzzy relation equa\-tions, called the general system. As the general system is difficult to solve, the problem was further reduced to the search for instances of the general system and their solutions which ensure the best~possible reductions and can be efficiently computed.~Two such instances, whose solutions were called right~and~left invariant, have the greatest solutions that can be computed in polynomial time, and two others, whose solutions were called weakly right and left invariant, have the greatest solutions~that~ensure~better~reduc\-tions, but their computation requires exponential time.

The main aim of this paper is to combine determinization and state reduction methods into two-in-one algorithms that simultaneously perform determinization and state reduction.~These algorithms~perform better than all previous determinization algorithms for fuzzy finite automata, developed in \cite{Bel.02b,ICB.08,JIC.11,LP.05}, in the sense that they produce smaller automata, while require the same computation time.~The~only excep\-tion is the Brzozowski type~determinization algorithm developed recently in \cite{JC.13}, which produces~a minimal crisp-deterministic~fuzzy~automaton,~but we will see that the algorithms created here can be~used within the Brzozowski type algorithm and improve its performance.

Our main results are the following.~For any~fuzzy~finite~automaton $\cal A$ and a reflexive weakly right~invar\-iant fuzzy relation $\varphi$ on $\cal A$, we construct a crisp-determin\-is\-tic fuzzy automaton ${\cal A}_{\varphi}$ and prove that it is equivalent to $\cal A$.~If $\varphi $ is a weakly right invariant fuzzy quasi-order, we show that the same automaton~${\cal A}_{\varphi}$ would be produced if we first perform the state reduction of $\cal A$ by means of $\varphi $ and then construct the~Nerode automaton of this reduced
auto\-maton.~Furthermore, we show that automata ${\cal A}_{\varphi}$ determined by right invariant fuzzy quasi-orders are smaller than the Nerode automaton of $\cal A$, and that larger~right invariant fuzzy quasi-orders determine smaller crisp-deterministic fuzzy automata.~For~a~fuzzy~finite~auto\-maton $\cal A$ and a reflexive weakly right invariant fuzzy relation $\varphi$ on $\cal A$, we also introduce the concept of the children automaton of ${\cal A}_\varphi $ and prove that it is equivalent~to~$\cal A$.~In addition, if $\varphi $ is a right invariant fuzzy quasi-order, we prove that the children automaton of ${\cal A}_\varphi $ is smaller than ${\cal A}_\varphi $, and that~larger~right invariant fuzzy quasi-orders determine smaller children automata. We also show that weakly left invariant fuzzy quasi-orders  play a completely different role in the determinization.~Namely, if they are used to~reduce~the number of states prior the construction of the Nerode automaton, they will be unsuccessful because the Nerode automaton of the reduced fuzzy automaton would be the same as the Nerode automaton of~the\break original one.~However, we show that weakly left invariant fuzzy quasi-orders may be successful~in~com\-bi\-na\-tion
with the construction of the reverse Nerode automaton and that a two-in-one
algorithm can~be~pro\-vided which can improve performance of the Brzozowski
type algorithm for fuzzy finite automata.

The structure of the paper is as follows.~In Section 2 we recall basic notions and notation concerning fuzzy sets and relations, fuzzy automata and languages and crisp-deterministic fuzzy automata, we recall the concepts of the Nerode automaton and the reverse Nerode automaton, as well as the concepts~of~right and left invariant and weakly right and left invariant fuzzy relations.~Our main theoretical results are presented in Section 3, and in Section 4 we provide algorithms, perform the analysis of their computation time, and give characteristic computational examples.

\section{Preliminaries}
\subsection{Fuzzy sets and relations}

In this paper we use complete residuated lattices as
structures of membership values.~A {\it residuated~lattice\/} is
an algebra ${\cal L}=(L,\land ,\lor , \otimes ,\to , 0, 1)$ such
that
\begin{itemize}
\parskip=0pt\itemindent=1.4mm
\item[{\rm (L1)}] $(L,\land ,\lor , 0, 1)$ is a lattice with the
least element $0$ and the greatest element~$1$, \item[{\rm (L2)}]
$(L,\otimes ,1)$ is a commutative monoid with the unit $1$,
\item[{\rm (L3)}] $\otimes $ and $\to $ form an {\it adjoint
pair\/}, i.e., they satisfy the {\it adjunction property\/}: for
all $x,y,z\in L$,
\begin{equation}\label{eq:adj}
x\otimes y \le z \ \iff \ x\le y\to z .
\end{equation}
\end{itemize}
If, additionally, $(L,\land ,\lor , 0, 1)$ is a complete lattice, then ${\cal L}$ is called a {\it
complete residuated lattice\/}.

The algebra $(L,\lor ,\otimes, 0, 1)$ is a semiring, and it is denoted by ${\cal L}^*$ and called the \emph{semiring reduct\/} of $\cal L$.

The operations $\otimes $ (called {\it multiplication\/}) and $\to
$ (called {\it residuum\/}) are intended for modeling the
conjunc\-tion and implication of the corresponding logical calculus,
and supremum ($\bigvee $) and infimum ($\bigwedge $) are~intend\-ed
for modeling of the existential and general quantifier,
respectively. An operation $\lra $ defined by
\begin{equation}\label{eq:bires}
x\lra y = (x\to y) \land (y\to x),
\end{equation}
called {\it biresiduum\/} (or {\it biimplication\/}), is used for
modeling the equivalence of truth values.~For the basic properties of complete residuated lattices we refer to \cite{Bel.02a,BV.05}.

The most studied and applied structures of truth values, defined
on the real unit interval $[0,1]$ with\break $x\land y =\min
(x,y)$ and $x\lor y =\max (x,y)$, are the {\it {\L}ukasiewicz
structure\/} ($x\otimes y = \max(x+y-1,0)$, $x\to y=
\min(1-x+y,1)$), the {\it Goguen} ({\it product\/}) {\it
structure\/} ($x\otimes y = x\cdot y$, $x\to y= 1$ if $x\le y$
and~$=y/x$ otherwise) and the {\it G\"odel structure\/} ($x\otimes
y = \min(x,y)$, $x\to y= 1$ if $x\le y$ and $=y$
otherwise).
Another important set of truth values is the set
$\{a_0,a_1,\ldots,a_n\}$, $0=a_0<\dots <a_n=1$, with $a_k\otimes
a_l=a_{\max(k+l-n,0)}$ and $a_k\to a_l=a_{\min(n-k+l,n)}$. A
special case of the latter algebras is the two-element Boolean
algebra of classical logic with the support $\{0,1\}$.~The only
adjoint pair on the two-element Boolean algebra consists of the
classical conjunction and implication operations.~This structure
of truth values is called the {\it Boolean structure\/}.

A partially ordered set $P$ is said to satisfy the \emph{descending chain condition} (briefly \emph{DCC}) if every~descending sequence of elements of $P$ eventually terminates, i.e., if for every descending sequence $\{a_k\}_{k\in \Bbb N}$ of elements of $P$ there is $k\in \Bbb N$ such that $a_k=a_{k+l}$, for all $l\in \Bbb N$. In other words, $P$ satisfies DCC if there is no infinite descending chain in $P$.

In the sequel $\cal L$ will be a complete residuated
lattice.~A {\it fuzzy subset\/} of a set $A$ {\it over\/} ${\cal
L}$, or~simply a {\it fuzzy subset\/} of $A$, is any mapping from
$A$ into $L$.~Ordinary crisp subsets~of~$A$ are considered as
fuzzy subsets of $A$ taking membership values in the set
$\{0,1\}\subseteq L$.~Let $f$ and $g$ be two fuzzy subsets of
$A$.~The {\it equality\/} of $f$ and $g$ is defined as the usual
equality of mappings, i.e., $f=g$ if and only if $f(x)=g(x)$, for
every $x\in A$. The {\it inclusion\/} $f\le g$ is also defined
pointwise: $f\le g$ if and only if $f(x)\le g(x)$, for every $x\in
A$. Endowed with this partial order the set $L^A$ of all fuzzy
subsets of $A$ forms a complete residuated lattice, in which the
meet (intersection) $\bigwedge_{i\in I}f_i$ and the join (union)
$\bigvee_{i\in I}f_i$ of an arbitrary family $\{f_i\}_{i\in I}$ of
fuzzy subsets of $A$ are mappings from $A$ into $L$ defined by
\[
\left(\bigwedge_{i\in I}f_i\right)(x)=\bigwedge_{i\in I}f_i(x), \qquad \left(\bigvee_{i\in
I}f_i\right)(x)=\bigvee_{i\in I}f_i(x),
\]
and the {\it product\/} $f\otimes g$ is a fuzzy subset defined by
$f\otimes g (x)=f(x)\otimes g(x)$, for every $x\in A$.

 A {\it fuzzy relation\/} between sets $A$ and $B$ (in this order) is any mapping from
$A\times B$ to $L$, i.e. , any fuzzy subset of $A\times B$, and the equality, inclusion (ordering), joins and meets
of fuzzy relations are defined as for fuzzy sets. Set of all fuzzy relations between $A$ and $B$ will be denoted by $L^{A\times B}$. In particular,~a~fuzzy relation on a set $A$ is any function from
$A\times A$ to $L$, i.e., any fuzzy subset of $A\times A$.~The set of all fuzzy relations on $A$ will be denoted by $L^{A\times A}$. The \emph{reverse} or \emph{inverse} of a fuzzy relation $\alpha\in L^{A\times B}$ is a fuzzy~relation $\alpha^{-1}\in L^{B\times A}$ defined by $\alpha^{-1}(b, a) = \alpha(a, b)$, for all $a\in A$ and $b\in B$. A crisp relation is a fuzzy
relation which takes values only in the set $\{0, 1\}$, and if $\alpha$ is a crisp relation of $A$ to $B$, then expressions
$"\alpha(a, b) = 1"$ and $"(a, b)\in \alpha"$ will have the same meaning.

For non-empty sets $A$, $B$ and $C$, and fuzzy relations $\alpha\in L^{A\times B}$ and $\beta\in L^{B\times C}$,  their {\it
composition\/}~$\alpha\circ \beta\in L^{A\times C}$ is a fuzzy relation  defined by
\begin{equation}\label{eq:comp.rr}
(\alpha \circ \beta )(a,c)=\bigvee_{b\in B}\,\alpha(a,b)\otimes \beta(b,c),
\end{equation}
for all $a\in A$ and $c\in C$.
 For $f\in L^{A}$, $\alpha \in L^{A\times B}$ and $g\in L^{B}$, compositions $f\circ\alpha\in L^{B}$ and $\alpha\circ g\in L^{A}$ are fuzzy sets defined by
\begin{equation}\label{eq:comp.sr}
(f \circ \alpha)(b)=\bigvee_{a\in A}\,f(a)\otimes \alpha(a,b),\qquad
(\alpha \circ g)(a)=\bigvee_{b\in B}\,\alpha(a,b)\otimes g(b),
\end{equation}
for every $a\in A$ and $b\in B$.
Finally, the composition of two fuzzy sets $f,g\in L^{A}$ is an element $f\circ
g\in L$ (scalar) defined by
\begin{equation}\label{eq:comp.ss}
f \circ g =\bigvee_{a\in A}\,f(a)\otimes g(a) .
\end{equation}
When the underlying sets are finite, fuzzy relations can be interpreted as matrices and fuzzy sets as~vectors with entries in $L$, and then the composition
of fuzzy relations can be interpreted as the matrix product, compositions
of fuzzy sets and fuzzy relations as vector-matrix products, and the composition
of two fuzzy set as the scalar (dot) product.

It is easy to verify that the composition of fuzzy relations is associative,
i.e.,
\begin{equation}
(\alpha\circ\beta)\circ\gamma = \alpha\circ(\beta\circ\gamma),  \label{eq:comp.as}
\end{equation}
for all $\alpha\in L^{A\times B}$, $\beta\in L^{B\times C}$ and $\gamma\in
L^{C\times D}$, and
\begin{equation}
( f \circ\alpha) \circ\beta= f\circ (\alpha\circ\beta),\quad ( f \circ\alpha) \circ g= f\circ (\alpha\circ g),\quad (\alpha\circ\beta)\circ h = \alpha\circ(\beta\circ h) \label{eq:comp.as2}
\end{equation}
for all $\alpha\in L^{A\times B}$, $\beta\in L^{B\times C}$, $f\in L^{A}$, $g\in L^{B}$ and $h\in L^{C}$. Hence, all parentheses in (\ref{eq:comp.as})
and (\ref{eq:comp.as2}) can be~omitted.

Let $\alpha,\beta\in L^{A\times A}$ and $f,g\in L^A$.~The
{\emph{right residual\/}} of $\beta$ by $\alpha$ is a fuzzy relation $\alpha\backslash\beta\in L^{A\times A}$ and the~{\emph{left~residual\/}} of $\beta$ by $\alpha$ is a fuzzy relation $\beta/\alpha\in L^{A\times A}$ defined by
\begin{equation}
(\alpha\backslash\beta) (a,b)=\displaystyle \bigwedge_{c\in A}\alpha(c,a)\to \beta(c,b),\qquad \qquad (\beta/\alpha) (a,b)=\displaystyle \bigwedge_{c\in C}\alpha(b,c)\to \beta(a,c),
\end{equation}
whereas the {\emph{right residual\/}} of $g$ by $f$ is a fuzzy relation $f\backslash g\in L^{A\times A}$ and the {\emph{left residual\/}} of $g$ by $f$ is a fuzzy relation $g/f\in L^{A\times A}$ defined by
\begin{equation}
(f\backslash g) (a,b)=f(a)\to g(b) ,\qquad\qquad (g/f) (b,a)=f(a)\to g(b),
\end{equation}
for all $a,b\in A$.

A fuzzy relation $\varphi$ on a set  $A$ is said to be {\it reflexive\/},
if $\varphi (a,a)=1$, to be \emph{symmetric}, if $\varphi (a,b)=\varphi(b,a)$,~and~to be {\it transitive\/}, if $\varphi (a,b)\otimes \varphi (b,c)\le \varphi (a,c)$, for all $a,b,c\in A$. A reflexive and transitive
fuzzy relation is called a
{\it fuzzy quasi-order\/} (in some sources~{\it
fuzzy preorder\/}). A symmetric fuzzy quasi-order is a \emph{fuzzy~equivalence}. For a fuzzy quasi-order $\varphi $ on $A$ and an element $a\in A$,
the $\varphi $-{\it afterset\/} of $a$ is a fuzzy set $a\varphi\in L^A$ defined
by $a\varphi (b)=\varphi(a,b)$, and the $\varphi $-{\it foreset\/} of $a$ is a fuzzy set $\varphi a\in L^A$ defined
by $\varphi a(b)=\varphi(b,a)$, for every $b\in A$.
When $\varphi $ is interpreted as a matrix, then its aftersets are the rows,
and its foresets are the columns of this matrix. If $\varphi $ is a fuzzy equivalence, then the $\varphi $-{\it afterset\/} of $a$ coincides with the $\varphi $-{\it foreset\/} of $a$, and it is called a \emph{fuzzy equivalence class\/} of $a$.

\subsection{Fuzzy automata}

Throughout this paper, $\Bbb N$ denotes the set of natural numbers (without zero), $X$ is an (finite) alphabet, $X^+$ and $X^*$ denote, respectively, the free semigroup and the free monoid over $X$, $\varepsilon$ denotes the empty word in $X^*$, and if not noted otherwise, ${\cal L}$~is a complete~residua\-ted lattice.

A {\it fuzzy automaton over\/} $\cal L$ and $X$, or simply a {\it fuzzy automaton\/}, is a quadruple
${\cal A}=(A,\sigma^A,\delta^A,\tau^A )$, where~$A$ is a non-empty set, called  the
{\it set of states\/}, $\delta^A:A\times X\times A\to L$~is~a
fuzzy subset~of $A\times X\times A$, called the {\it fuzzy transition function\/}, and $\sigma^A: A\to L$ and $\tau^A : A\to L$ are fuzzy subsets of~$A$, called the {\it fuzzy set of initial states} and the {\it fuzzy set terminal states}, respectively.~We can
interpret $\delta^A (a,x,b)$ as the degree~to~which an~input letter $x\in X$~causes~a~transition from a state $a\in A$ into a
state $b\in A$, and we can interpret $\sigma^A(a)$ and $\tau^A(a)$ as the degrees to which $a$ is respectively an input state and a terminal state. For methodological reasons we allow the set of states $A$ to be infinite.~A~fuzzy auto\-maton whose set of states is finite is called a {\it fuzzy finite automaton\/}.~A fuzzy automaton over the Boolean structure is called a \emph{nondeterministic automaton\/} or a \emph{Boolean automaton\/}.

Define a family $\{\delta_x^A\}_{x\in X}$ of fuzzy relations on $A$ by $\delta^A_x(a,b) = \delta^A(a,x,b)$, for each $x\in X$, and all $a,b\in A$,~and extend this
family to the family $\{\delta_u^A\}_{u\in X^*}$ inductively, as follows:
$\delta^A_\varepsilon=\Delta_A$, where $\Delta_A$ is the crisp equality relation
on $A$, and
\begin{equation}\label{eq:du}
 \delta^A_{x_1x_2\dots x_n}=\delta_{x_1}\circ
\delta_{x_2}\circ\dots\circ \delta^A_{x_n}
\end{equation}
for all $n\in \Bbb N$, $x_1,x_2,\ldots ,x_n\in X$.~Members of this family
are called {\it fuzzy transiton relations\/} of $\cal A$.~Evidently, $\delta^A_{uv}= \delta^A_u\circ \delta^A_v$, for all $u,v\in X^*$.~In addition, define
families $\{\sigma^A_u\}_{u\in X^*}$ and $\{\tau^A_u\}_{u\in X^*}$ by
\begin{equation}\label{eq:su.tu}
\sigma^A_u=\sigma^A\circ \delta^A_u, \qquad \tau^A_u=\delta^A_u\circ \tau^A,
\end{equation}
for all $u\in X^*$.

We can visualize a fuzzy finite automaton ${\cal A}=(A,\sigma^A,\delta^A,\tau^A )$ representing it as a labelled directed~graph whose nodes are states of $\cal A$, an edge from a node $a$ to a node $b$ is labelled by pairs of the form $x/\delta^A_x(a,b)$, for any $x\in X$, and for any node $a$ we draw an arrow labelled by $\sigma^A(a)$ that enters this node, and an~arrow labelled by $\tau^A(a)$ coming out of this node.~For the sake of simplicity, we do not~draw~edges~whose~all~labels are of the form $x/0$, and incoming and outgoing arrows labelled by $0$. In particular, if $\cal A$ is a Boolean~auto\-maton, instead of any label of the form $x/1$ we write just $x$, initial states are marked by incoming arrows without any label, and terminal states are marked by double circles.

A {\it fuzzy language\/} in $X^*$ over ${\cal L}$, or
just a {\it fuzzy language\/}, is any fuzzy subset of~$X^*$, i.e., any function~from
$X^*$ into $L$.~The {\it fuzzy language recognized by a fuzzy automaton\/} ${\cal A}=(A,\sigma^A,\delta^A , \tau^A )$ is the fuzzy language $\lBrack{\cal A}\rBrack \in L^{X^*}$ defined by
\begin{equation}\label{eq:recog}
\lBrack{\cal A}\rBrack(u) = \bigvee_{a,b\in A} \sigma^{A} (a)\otimes \delta_u^A(a,b)\otimes \tau^A(b)=\sigma^A \circ \delta^A_u\circ \tau^A  ,
\end{equation}
for any $u\in X^*$.~In other words, the~membership degree of the word
$u$~to~the fuzzy language $\lBrack{\cal A}\rBrack$ is equal to the degree to which $\cal A$ recognizes or accepts
the word $u$.~Fuzzy automata $\cal A$~and~$\cal B$ are called   {\it language equivalent\/}, or just {\it equivalent\/}, if $\lBrack{\cal A}\rBrack=\lBrack{\cal B}\rBrack$.

~
Let ${\cal A}=(A,\sigma^A,  \delta^A,\tau^A)$ be a fuzzy automaton over $\cal L$ and $X$ and let $\varphi $ be a
fuzzy quasi-order~on~$A$.~The fuzzy automa\-ton ${\cal A}/\varphi =(A/\varphi ,\sigma^{A/\varphi },\delta^{A/\varphi },\tau^{A/\varphi })$ where $A/\varphi =\{\,a\varphi\mid a\in A\}$ is the set of all $\varphi $-aftersets, the fuzzy transition function
$\delta^{A/\varphi }:(A/\varphi) \times X\times (A/\varphi) \to L$ is given by
\begin{equation}\label{eq:aft.aut}
\delta^{A/\varphi }(a\varphi ,x,b\varphi )=\bigvee_{a',b'\in
A}\varphi (a,a')\otimes\delta_{x}^A(a',b')\otimes \varphi (b',b)=(\varphi \circ\delta_x^A\circ \varphi )(a,b),
\end{equation}
for all $a,b\in A$ and $x\in X$, and
the fuzzy set $\sigma^{A/\varphi }\in L^{A/\varphi }$ of initial~states~and the fuzzy
set $\tau^{A/\varphi }\in L^{A/\varphi }$~of~ter\-mi\-nal states are defined by
\begin{gather}
\sigma^{A/\varphi }(a\varphi ) = \bigvee_{a'\in A}\sigma^{A} (a')\otimes \varphi (a',a) = (\sigma^A\circ
\varphi )(a)  ,\quad a\in A \label{eq:sE} \\
\tau^{A/\varphi }(a\varphi ) = \bigvee_{a'\in A}\varphi (a,a')\otimes \tau^A(a') = (\varphi \circ\tau^A)(a)  , \quad a\in A\label{eq:tE}
\end{gather}
for all $a\in A$, is called the
{\it afterset fuzzy automaton\/} of $\cal A$ with respect to $\varphi $.~The {\it  foreset fuzzy automaton\/} of $\cal A$ with respect to~$\varphi $ is defined dually, but since it is isomorphic to the afterset fuzzy automaton,~we~will~work only with afterset fuzzy automata.

The cardinality of a fuzzy automaton ${\cal
A}=(A,\sigma^A,\delta^{A},\tau^A)$, in notation $|{\cal A} |$, is defined as the cardinality of its
set of states $A$. A fuzzy automaton ${\cal A}$ is called {\it minimal fuzzy automaton} of a fuzzy language $f \in L^{X^*}$ if $\lBrack {\cal A}\rBrack=f $
and $|{\cal A} |\leqslant|{\cal A'}|$, for every fuzzy automaton ${\cal A'}$ such that
$\lBrack {\cal A}'\rBrack=f $. A minimal fuzzy automaton recognizing a given
fuzzy language $f$ is not necessarily unique up to an isomorphism.~This is also true for
nondeterministic automata.

Let ${\cal A}=(A,\delta^A,\sigma^A,\tau^A)$ be a fuzzy automaton over $\cal L$ and $X$.~The {\it reverse fuzzy
automaton\/} of ${\cal A}$ is a fuzzy auto\-maton $\overline{\cal
A}=(A,\bar{\delta}^A,\bar{\sigma}^A,\bar{\tau}^A)$, where $\bar{\sigma}^A=\tau^A$, $\bar{\tau}^A=\sigma^A$, and $\bar{\delta}^A:
A\times X\times A \to L$ is defined by:
\[
\bar{\delta}^A(a,x,b)=\delta^A(b,x,a),
\]
for all $a,b\in A$ and $x\in X$.~Roughly speaking, the reverse fuzzy automaton $\overline{\cal A}$ is obtained from ${\cal A}$~by~exchang\-ing fuzzy sets of initial and final states and ``reversing'' all the transitions.~Due to the fact that the multiplication $\otimes$ is commutative, we have that $\bar{\delta}_{u}(a,b)=\delta_{\bar{u}}(b,a)$, for all $a,b\in A$ and $u\in X^*$.

The {\it reverse fuzzy language\/} of a fuzzy language $f\in L^{X^*}$ is a fuzzy language $\overline{f}\in L^{X^*}$ defined by $\overline{f}(u)=f(\bar{u})$, for each $u\in X^*$.~As $\overline{(\bar{u})} = u$ for all $u\in X^*$, we have that $\overline{(\overline{f})}=f$, for any fuzzy language $f$.~It is easy~to~see~that the reverse fuzzy automaton $\overline{\cal A}$ recognizes the reverse fuzzy language $\overline{\lBrack{\cal A}\rBrack}$ of the fuzzy language ${\lBrack{\cal A}\rBrack}$ recognized by $\cal A$, i.e., $\lBrack\overline{{\cal A}}\rBrack=\overline{\lBrack{\cal A}\rBrack}$.

For more information on fuzzy automata over complete residuated lattices we refer to \cite{CSIP.10,SCI.14,IC.10,ICB.08,ICB.09,Qiu.01,Qiu.06,WQ.10,XQL.09,ICBP.10}

\subsection{Crisp-deterministic fuzzy automata}\label{sec:cdfa}

Let ${\cal A}=(A,\sigma^A,\delta^A,\tau^A)$ be a fuzzy automaton over  $X$ and  ${\cal L}$.~The fuzzy transition function $\delta^A $ is called~{\it crisp-deter\-ministic\/} if for every $x\in X$ and every $a\in A$ there exists
$a'\in A$ such that $\delta^A_x(a,a')=1$, and $\delta^A_x(a,b)=0$, for all $b\in A\setminus
\{a'\}$. The fuzzy set of initial states $\sigma^A $ is called {\it crisp-deterministic\/}
if there exists $a_0\in A$ such that $\sigma^A(a_0)=1$, and $\sigma^A(a)=0$, for every
$a\in A\setminus \{a_0\}$.~If both $\sigma^A $ and $\delta^A $ are crisp-deterministic,
then $\cal A$ is called a {\it crisp-deterministic fuzzy automaton\/} (for short: {\it
cdfa\/}), and if it is finite, then it is called a {\it crisp-deterministic fuzzy finite automaton\/} (for short: {\it
cdffa\/}).

A crisp-deterministic fuzzy automaton can also be defined as a quadruple ${\cal A}=(A,\delta^A,a_0,\tau^A )$, where~$A$~is
a non-empty {\it set of states\/}, $\delta^A :A\times X\to A$ is a~{\it transition function\/},
$a_0\in A$ is an {\it initial state\/} and $\tau^A \in L^A$ is a {\it fuzzy set of terminal states\/}.~The
transition~function $\delta^A $ can be extended to a function $\delta^A_*:A\times X^*\to A$ in the following way:
$\delta^A_*(a,\varepsilon)=a$, for every $a\in A$, and $\delta^A_*(a,ux)=\delta^A(\delta^A_*(a,u),x)$, for
all $a\in A$, $u\in X^*$~and~$x\in X$. A~state $a\in A$ is called {\it accessible\/} if there exists $u\in X^*$ such that
$\delta^*(a_0,u)=a$.~If every state of $\cal A$ is accessible, then $\cal A$ is called an {\it
accessible crisp-deterministic fuzzy automaton\/}.

The initial state and transitions of a crisp-deterministic fuzzy automaton are graphically represented~as in the case of Boolean automata, and the fuzzy set of terminal states is represented as in the case of fuzzy finite automata.

Let ${\cal A}=(A,\delta^A,a_0,\tau^A)$ and ${\cal B}=(B,\delta^B,b_0,\tau^B)$ be crisp-deterministic fuzzy automata. A function $\phi :A\to B$ is called a {\it homomorphism\/} of $\cal A$ into ${\cal B}$ if $\phi (a_0)=b_0$, $\phi(\delta^A(a,x))=\delta^B(\phi(a),x)$ and $\tau^A(a)=\tau^B(\phi(a))$, for all $a\in A$ and $x\in X$. A bijective~homo\-mor\-phism is called an {\it isomorphism\/}.~If there is a surjective homomorphism of $\cal A$ onto $\cal B$, then $\cal B$ is said to be a {\it homomorphic image\/} of $\cal A$, and if there is an isomorphism of
$\cal A$ onto $\cal B$, then we say that $\cal A$ and $\cal B$ are \emph{isomorphic crisp-deterministic fuzzy automata\/} and we write ${\cal A}\cong {\cal B}$.

The~{\it language\/} of $\cal A$ is the fuzzy language $\lBrack
{\cal A}\rBrack : X^*\to L$ defined by
\begin{equation}\label{eq:cd-beh}
\lBrack {\cal A}\rBrack(u)=\tau (\delta^*(a_0,u)) \ .
\end{equation}
for every $u\in X^*$.~Obviously, the image of $\lBrack {\cal A}\rBrack$ is contained in the image
of $\tau $ which is finite if the set of states $A$ is finite.~A fuzzy language $f : X^*\to L$ is called {\it cdffa-recognizable\/} if
there is a crisp-deterministic fuzzy finite~auto\-maton $\cal A$ over $X$ and ${\cal L}$ such that
$\lBrack {\cal A}\rBrack=f $.~Then we say that $\cal A$ {\it recognizes\/} $f $.

A crisp-deterministic fuzzy automaton ${\cal A}$ is called a {\it minimal crisp-deterministic fuzzy automaton} of a~fuzzy language $f$ if $\lBrack {\cal A}\rBrack=f$ and $|{\cal A} |<|{\cal A'}|$, for any crisp-deterministic fuzzy automaton ${\cal A'}$ such that $\lBrack {\cal A}'\rBrack=f$.

Next, the \textit{Nerode automaton} of a fuzzy automaton ${\cal A}=(A,\sigma,\delta,\tau)$ is a crisp-deterministic fuzzy automaton ${\cal A}_{N}=(A_{N},\sigma^{A}_{\varepsilon},\delta_{N},\tau_{N})$,
where $A_{N}=\{\sigma^{A}_{u}\mid u\in X^{*}\}$, and $\delta_{N}:A_{N}\times X\longrightarrow A_{N}$ and $\tau_{N}:A_{N}\to L$ are defined~by
\[
\delta_{N}(\sigma^{A}_{u},x)=\sigma^{A}_{ux},\qquad\qquad \tau_{N}(\sigma^{A}_{u})=\sigma^{A}_{u}\circ\tau^{A},
\]
for every $u\in X^*$ and $x\in X$.~The Nerode automaton was first constructed in \cite{ICB.08}, where it was shown~that it is equivalent to the starting fuzzy automaton $\cal A$. The name "Nerode automaton" was introduced in \cite{ICBP.10}.

The \textit{reverse Nerode automaton} of a ${\cal A}$ is the Nerode automaton
of the reverse fuzzy automaton of $\cal A$, i.e., the crisp-deterministic fuzzy automaton ${\cal A}_{\overline N}=(A_{\overline N},\tau^A_\varepsilon,\delta_{\overline N},\tau_{\overline N})$,~where $A_{\overline N}=\{\tau^A_u\mid u\in X^*\}$, and $\delta_{\overline N}:A_{\overline N}\times X\to A_{\overline N}$ and $\tau_{\overline N}:A_{\overline N}\times L$ are defined by
\[
\delta_{\overline N}(\tau_u^A,x)=\tau_{xu}^A,\qquad\qquad \tau_{\overline N}(\tau_u^A)=\sigma^A\circ \tau_u^A,
\]
for all $u\in X^*$ and $x\in X$.~

\subsection{Right and left invariant fuzzy relations}

In the rest of the paper, a fuzzy relation on a fuzzy automaton will mean a fuzzy relation on its set~of~states.

Let ${\cal A}=(A,\sigma^{A},\delta^{A},\tau^{A})$ be a fuzzy automaton.~A fuzzy relation $\varphi$ on $A$ is a \emph{right invariant} if
\begin{align}
&\varphi\circ\delta^{A}_{x}\le \delta^{A}_{x}\circ\varphi,\qquad \text{for
each}\ x\in X,\label{eq:ri.d} \\
&\varphi\circ\tau^{A}\le\tau^{A},\label{eq:ri.t}
\end{align}
and it is called \emph{weakly right invariant} if
\begin{equation}
\varphi\circ\tau_u^{A}\le\tau_u^{A},\qquad \text{for each}\ u\in X^*.\label{eq:wri}
\end{equation}
Similarly we define the dual concepts.~A fuzzy relation $\varphi $ on $A$ is called \emph{left invariant} if
\begin{align}
&\delta^{A}_{x}\circ\varphi\le\varphi\circ\delta^{A}_{x} ,\qquad \text{for
each}\ x\in X,\label{eq:li.d} \\
&\sigma^A\circ \varphi\le\sigma^{A},\label{eq:li.s}
\end{align}
and it is called \emph{weakly left invariant} if
\begin{equation}
\sigma_u^{A}\circ\varphi\le\sigma_u^{A},\qquad \text{for each}\ u\in X^*.\label{eq:wli}
\end{equation}
It is easy to verify that every right invariant fuzzy relation is weakly
right invariant, and every left~invari\-ant fuzzy relation is weakly left invariant.~Note
that if $\varphi $ is reflexive, then $\varphi $ satisfies
(\ref{eq:wri}) if and only if it satisfies
\begin{equation}
\varphi\circ\tau_u^{A}=\tau_u^{A},\qquad \text{for each}\ u\in X^*,\label{eq:wri.eq}
\end{equation}
and $\varphi $ satisfies (\ref{eq:wli}) if and only if it satisfies
\begin{equation}
\sigma_u^{A}\circ\varphi=\sigma_u^{A},\qquad \text{for each}\ u\in X^*,\label{eq:wli.eq}
\end{equation}
and consequently, $\varphi $ satisfies (\ref{eq:ri.t}) if and only if it satisfies $\varphi\circ\tau^{A}=\tau^{A}$, and it satisfies (\ref{eq:li.s}) if and only if it satisfies $\sigma^{A}\circ\varphi=\sigma^{A}$.~Moreover, if $\varphi $ is a fuzzy quasi-order, then $\varphi $ satisfies
(\ref{eq:ri.d}) if and only if it satisfies
\begin{equation}
\varphi\circ\delta^{A}_{x}\circ\varphi= \delta^{A}_{x}\circ\varphi,\qquad \text{for each}\ x\in X,\label{eq:ri.d.eq}
\end{equation}
and $\varphi $ satisfies (\ref{eq:li.d}) if and only if it satisfies
\begin{equation}
\varphi\circ\delta^{A}_{x}\circ\varphi= \varphi\circ\delta^{A}_{x},\qquad \text{for each}\ x\in X.\label{eq:li.d.eq}
\end{equation}

Right and left invariant fuzzy quasi-orders and fuzzy equivalences were introduced in \cite{CSIP.10,SCI.14}, where they were used in the state reduction of fuzzy automata.~They are closely related to forward and backward simulations and bisimulations between fuzzy automata, which were studied in \cite{CIDB.12,CIJD.12}. Namely, a fuzzy quasi-order $\varphi $ on a fuzzy automaton $\cal A$ is right invariant if its reverse $\varphi^{-1}$ is a forward simulation of $\cal A$ into itself, and $\varphi $ is left invariant if $\varphi $ is a backward simulation of $\cal A$ into itself.

Weakly right and left invariant fuzzy quasi-orders were introduced in \cite{SCI.14}, and they were also used in the state reduction of fuzzy automata. They provide smaller automata than right invariant fuzzy quasi-orders, but are more difficult to compute. Weakly right and left invariant fuzzy quasi-orders and fuzzy equivalences are closely related to weak forward and backward simulations and bisimulations, which were studied in \cite{J.13}.

Algorithms for computing the greatest right and left invariant fuzzy quasi-orders on a fuzzy finite~auto\-maton, as well as algorithms for computing the greatest weakly right and left invariant ones, were~provided in \cite{SCI.14}. They are presented here in Section \ref{sec:alg}, together with an analysis of their computational~time.

\section{Theoretical results}

Let ${\cal A}=(A,\sigma^{A},\delta^{A},\tau^{A})$ be a fuzzy automaton and $\varphi$ a fuzzy relation on $A$.~For each $u\in X^{*}$ we define
a fuzzy set $\varphi_{u}: A\to L$ inductively, as follows: for the empty
word $\varepsilon$ and all $u\in X^*$ and $x\in X$ we set
\begin{equation}
  \varphi_{\varepsilon} = \sigma^{A}\circ \varphi,\hspace{0.2 in}\varphi_{ux} = \varphi_{u}\circ\delta^{A}_{x}\circ \varphi \label{fiu}
\end{equation}
Clearly, if $u=x_1\dots x_n$, where $x_1,\dots, x_n\in X$, then
\begin{equation}\label{fiu.d}
\varphi_{u}=\sigma^{A}\circ \varphi\circ\delta^{A}_{x_{1}}\circ \varphi\circ...\circ\delta^{A}_{x_{n}}\circ \varphi.
\end{equation}
Now, set $A_\varphi=\{\varphi_{u}\mid u\in X^{*}\}$, and define
$\delta_{\varphi}:A_{\varphi}\times X\to A_{\varphi}$ and $\tau_{\varphi}:A_{\varphi}\to L$ as follows:
\begin{equation}
  \delta_{\varphi}(\varphi_{u},x) = \varphi_{ux},\hspace{0.2 in}\tau_{\varphi}(\varphi_{u}) = \varphi_{u}\circ\tau^{A},\label{fi.delta.tau}
\end{equation}
for all $u\in X^*$ and $x\in X$.~If $\varphi_{u}=\varphi_{v}$, for some $u,v\in X^{*}$, then for each $x\in X$ we have that
\[
 \delta_{\varphi}(\varphi_{u},x)=\varphi_{ux}=\varphi_{u}\circ\delta_{x}\circ\varphi=\varphi_{v}\circ\delta_{x}\circ\varphi=\varphi_{vx}= \delta_{\varphi}(\varphi_{v},x),
\]
and hence, $\delta_{\varphi}$ is a well-defined function. It is clear
that $\tau_{\varphi}$ is also a well-defined function, and consequently,    ${\cal A}_{\varphi}=(A_{\varphi},\varphi_{\varepsilon},\delta_{\varphi},\tau_{\varphi})$ is a well-defined crisp-deterministic fuzzy automaton.

The main question that arises here is how to choose a fuzzy relation $\varphi $ so that the automaton ${\cal A}_\varphi $~is~equivalent to the original
automaton $\cal A$.~The following theorem gives an answer to this question.

\begin{theorem}\label{th.cdffa}\it
Let ${\cal A}=(A,\sigma^A,\delta^A,\tau^A)$ be a fuzzy automaton and  $\varphi$    a reflexive weakly right invariant fuzzy relation on $\cal A$.~Then ${\cal A}_{\varphi}=(A_{\varphi},\varphi_{\varepsilon},\delta_{\varphi},\tau_{\varphi})$ is an accessible crisp-deterministic fuzzy automaton equivalent to ${\cal A}$.
\end{theorem}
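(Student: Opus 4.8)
Since it has already been verified that $\mathcal{A}_{\varphi}=(A_{\varphi},\varphi_{\varepsilon},\delta_{\varphi},\tau_{\varphi})$ is a well-defined crisp-deterministic fuzzy automaton, only two things remain: accessibility, and the identity $\lBrack\mathcal{A}_{\varphi}\rBrack=\lBrack\mathcal{A}\rBrack$. Accessibility is the easy half. Writing $\delta_{\varphi}^{*}$ for the extension of $\delta_{\varphi}$ to $A_{\varphi}\times X^{*}$, the first step is to prove by induction on the length of $u\in X^{*}$ that $\delta_{\varphi}^{*}(\varphi_{\varepsilon},u)=\varphi_{u}$: the case $u=\varepsilon$ is the definition of $\delta_{\varphi}^{*}$, and if the claim holds for $u$ then, by (\ref{fi.delta.tau}), $\delta_{\varphi}^{*}(\varphi_{\varepsilon},ux)=\delta_{\varphi}(\delta_{\varphi}^{*}(\varphi_{\varepsilon},u),x)=\delta_{\varphi}(\varphi_{u},x)=\varphi_{ux}$. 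Thus every state $\varphi_{u}$ of $A_{\varphi}$ is reached from the initial state by reading $u$, so $\mathcal{A}_{\varphi}$ is accessible; moreover, by (\ref{eq:cd-beh}) and (\ref{fi.delta.tau}), $\lBrack\mathcal{A}_{\varphi}\rBrack(u)=\tau_{\varphi}(\delta_{\varphi}^{*}(\varphi_{\varepsilon},u))=\tau_{\varphi}(\varphi_{u})=\varphi_{u}\circ\tau^{A}$.

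By (\ref{eq:recog}) it then suffices, for the equivalence, to show $\varphi_{u}\circ\tau^{A}=\sigma^{A}\circ\delta_{u}^{A}\circ\tau^{A}$ for every $u\in X^{*}$. The starting point is that a reflexive weakly right invariant $\varphi$ satisfies the equality form $\varphi\circ\tau_{w}^{A}=\tau_{w}^{A}$ for all $w\in X^{*}$, which is exactly (\ref{eq:wri.eq}). Using this, the plan is to prove, by induction on $|u|$, the slightly more general auxiliary identity
\[
\varphi_{u}\circ\tau_{w}^{A}=\sigma^{A}\circ\tau_{uw}^{A}\qquad\text{for all } u,w\in X^{*}.
\]
The base case $u=\varepsilon$ reads $\varphi_{\varepsilon}\circ\tau_{w}^{A}=\sigma^{A}\circ\varphi\circ\tau_{w}^{A}=\sigma^{A}\circ\tau_{w}^{A}=\sigma^{A}\circ\tau_{\varepsilon w}^{A}$, by (\ref{fiu}). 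For the inductive step, combining (\ref{fiu}), associativity of $\circ$, the equality $\varphi\circ\tau_{w}^{A}=\tau_{w}^{A}$, the relation $\delta_{x}^{A}\circ\tau_{w}^{A}=\delta_{x}^{A}\circ\delta_{w}^{A}\circ\tau^{A}=\tau_{xw}^{A}$, and the induction hypothesis applied with $xw$ in place of $w$ gives
\[
\varphi_{ux}\circ\tau_{w}^{A}=\varphi_{u}\circ\delta_{x}^{A}\circ\varphi\circ\tau_{w}^{A}=\varphi_{u}\circ\delta_{x}^{A}\circ\tau_{w}^{A}=\varphi_{u}\circ\tau_{xw}^{A}=\sigma^{A}\circ\tau_{uxw}^{A}.
\]
Setting $w=\varepsilon$ yields $\varphi_{u}\circ\tau^{A}=\sigma^{A}\circ\tau_{u}^{A}=\sigma^{A}\circ\delta_{u}^{A}\circ\tau^{A}$, which, combined with the formula for $\lBrack\mathcal{A}_{\varphi}\rBrack(u)$ obtained above, gives $\lBrack\mathcal{A}_{\varphi}\rBrack=\lBrack\mathcal{A}\rBrack$.

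All the calculations are routine juggling of compositions; the only genuine decision is what to induct on. A head-on computation of $\varphi_{u}\circ\tau^{A}$ forces one to absorb the copies of $\varphi$ one at a time from the right, while the already-processed suffix of $u$ keeps accumulating inside the terminal vector; the clean way to organize this is to carry that suffix as the auxiliary parameter $w$, and that single bit of foresight is the only nonmechanical step, hence the part I would flag as the crux. (Note that only reflexivity and weak right invariance of $\varphi$ are used; transitivity is never invoked.)
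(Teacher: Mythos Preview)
Your proof is correct and follows essentially the same approach as the paper: both arguments hinge on the equality form $\varphi\circ\tau_{w}^{A}=\tau_{w}^{A}$ from (\ref{eq:wri.eq}) and use it inductively to absorb the interleaved copies of $\varphi$ one at a time from the right. The only cosmetic difference is packaging---the paper first isolates the identity $\tau_{x_1\cdots x_n}^{A}=\varphi\circ\delta_{x_1}^{A}\circ\varphi\circ\cdots\circ\delta_{x_n}^{A}\circ\varphi\circ\tau^{A}$ and then composes with $\sigma^{A}$, whereas you carry $\sigma^{A}$ along from the start via the two-parameter auxiliary $\varphi_{u}\circ\tau_{w}^{A}=\sigma^{A}\circ\tau_{uw}^{A}$; your version also makes the accessibility claim explicit, which the paper leaves implicit.
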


\begin{proof}
According to (\ref{eq:wri.eq}), by induction we easily prove that
\begin{equation}\label{eq:tau.u}
\tau_{x_1x_2\dots x_n}^A=\varphi\circ \delta_{x_1}^A\circ \varphi\circ \dots\circ
\delta_{x_n}^A\circ \varphi\circ \tau^A .
\end{equation}
for each $n\in \Bbb N$ and all $x_1,\dots, x_n\in X$.

Now, according to (\ref{eq:tau.u}), for each $u=x_1\dots x_n$, where $x_1,\ldots
,x_n\in X$, we have that
\begin{align*}
\lBrack{\cal A}_{\varphi}\rBrack(u) &=\tau_\varphi(\delta_\varphi(\varphi_\varepsilon,u))=\tau_\varphi(\varphi_u)=\varphi_u\circ
\tau^A=(\sigma^{A}\circ \varphi\circ\delta^{A}_{x_{1}}\circ \varphi\circ...\circ\delta^{A}_{x_{n}}\circ \varphi)\circ \tau^A=\\
&=\sigma^{A}\circ ( \varphi\circ\delta^{A}_{x_{1}}\circ \varphi\circ...\circ\delta^{A}_{x_{n}}\circ \varphi\circ \tau^A)=\sigma^A\circ \tau_u^A=\sigma^A\circ \delta_u^A\circ \tau^A=\lBrack{\cal A}\rBrack(u),
\end{align*}
and besides,
\[
\lBrack{\cal A}_{\varphi}\rBrack(\varepsilon)=\tau_\varphi(\varphi_\varepsilon)=\varphi_\varepsilon\circ
\tau^A=\sigma^{A}\circ \varphi\circ \tau^A=\sigma^{A}\circ  \tau^A=\lBrack{\cal A}\rBrack(\varepsilon),
\]
which means that $\lBrack{\cal A}_{\varphi}\rBrack=\lBrack{\cal A}\rBrack$.~Therefore,
${\cal A}_\varphi $ is equivalent to $\cal A$.
\end{proof}

In addition, the following is true.

\begin{theorem}\label{th:det.red}\it
Let ${\cal A}=(A,\sigma^A,\delta^A,\tau^A)$ be a fuzzy automaton and $\varphi $ a weakly right invariant fuzzy quasi-order on~$\cal A$. Then the automaton ${\cal A}_\varphi $ is isomorphic to the Nerode automaton of the afterset fuzzy automaton ${\cal A}/\varphi $.
\end{theorem}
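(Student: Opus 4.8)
The plan is to write down an explicit isomorphism from ${\cal A}_{\varphi}$ onto the Nerode automaton $({\cal A}/\varphi)_{N}$ of the afterset fuzzy automaton; the candidate is the ``collapsing'' map $\phi$ sending the fuzzy set $\varphi_{u}\in L^{A}$ to the fuzzy set $\hat\varphi_{u}\in L^{A/\varphi}$ defined by $\hat\varphi_{u}(a\varphi)=\varphi_{u}(a)$. First I would record two facts that follow from $\varphi$ being reflexive and transitive, hence $\varphi\circ\varphi=\varphi$: (i) every $\varphi_{u}$ satisfies $\varphi_{u}\circ\varphi=\varphi_{u}$, which is immediate by induction on $|u|$ since each $\varphi_{u}$ ends in ``$\circ\,\varphi$''; and (ii) as a consequence $\varphi_{u}$ is constant on $\varphi$-aftersets, because $a\varphi=b\varphi$ forces $\varphi(a,b)=\varphi(b,a)=1$, so $\varphi_{u}(b)=\bigvee_{c\in A}\varphi_{u}(c)\otimes\varphi(c,b)\ge\varphi_{u}(a)\otimes\varphi(a,b)=\varphi_{u}(a)$ and symmetrically. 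Thus $\hat\varphi_{u}$ is well defined; and since $\delta^{A/\varphi}_{x}(\cdot\,,b\varphi)$ and $\tau^{A/\varphi}$ are afterset-invariant as well (which is exactly what makes ${\cal A}/\varphi$ well defined), a join indexed by $A/\varphi$ may be replaced by the join of the same terms indexed by $A$ whenever the summand depends only on aftersets.

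The heart of the argument is to identify the state set of $({\cal A}/\varphi)_{N}$ by proving, by induction on $|u|$, that $\sigma^{A/\varphi}_{u}=\hat\varphi_{u}$ for all $u\in X^{*}$. The case $u=\varepsilon$ is just $\sigma^{A/\varphi}(a\varphi)=(\sigma^{A}\circ\varphi)(a)=\varphi_{\varepsilon}(a)$. For the step, assuming $\sigma^{A/\varphi}_{u}=\hat\varphi_{u}$, one computes for every $b\in A$
\[
\sigma^{A/\varphi}_{ux}(b\varphi)=(\hat\varphi_{u}\circ\delta^{A/\varphi}_{x})(b\varphi)=\bigvee_{a\in A}\varphi_{u}(a)\otimes(\varphi\circ\delta^{A}_{x}\circ\varphi)(a,b),
\]
where the join over $A/\varphi$ has been turned into a join over $A$ using the afterset-invariance just noted. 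Expanding $(\varphi\circ\delta^{A}_{x}\circ\varphi)(a,b)$ and applying (i) in the form $\bigvee_{a\in A}\varphi_{u}(a)\otimes\varphi(a,a')=\varphi_{u}(a')$ collapses the right-hand side to $(\varphi_{u}\circ\delta^{A}_{x}\circ\varphi)(b)=\varphi_{ux}(b)=\hat\varphi_{ux}(b\varphi)$, which closes the induction. In particular the states of $({\cal A}/\varphi)_{N}$ are exactly $\{\hat\varphi_{u}\mid u\in X^{*}\}$.

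It then remains to check that $\phi\colon A_{\varphi}\to(A/\varphi)_{N}$, $\phi(\varphi_{u})=\hat\varphi_{u}$, is an isomorphism of crisp-deterministic fuzzy automata. Well-definedness and injectivity are the two directions of the equivalence: $\varphi_{u}=\varphi_{v}$ in $L^{A}$ if and only if $\hat\varphi_{u}=\hat\varphi_{v}$ in $L^{A/\varphi}$ (evaluate at $a$, resp.\ at $a\varphi$); surjectivity is the previous paragraph. The initial states match since $\phi(\varphi_{\varepsilon})=\hat\varphi_{\varepsilon}=\sigma^{A/\varphi}_{\varepsilon}$, and transitions match term by term because $\phi(\delta_{\varphi}(\varphi_{u},x))=\hat\varphi_{ux}=\sigma^{A/\varphi}_{ux}=\delta_{N}(\hat\varphi_{u},x)=\delta_{N}(\phi(\varphi_{u}),x)$. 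For the terminal states I would run one more collapse: $\tau_{N}(\hat\varphi_{u})=\hat\varphi_{u}\circ\tau^{A/\varphi}=\bigvee_{a\in A}\varphi_{u}(a)\otimes(\varphi\circ\tau^{A})(a)=\varphi_{u}\circ\varphi\circ\tau^{A}=\varphi_{u}\circ\tau^{A}=\tau_{\varphi}(\varphi_{u})$, again using afterset-invariance and then (i).

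The one place that needs genuine care — and the step I would treat most explicitly — is the systematic interchange between joins indexed by $A$ and joins indexed by the quotient $A/\varphi$: each such interchange must be justified by verifying that the summand is constant on $\varphi$-aftersets, and every one of those verifications reduces to the single identity $\varphi\circ\varphi=\varphi$ together with the characterization ``$a\varphi=b\varphi$ iff $\varphi(a,b)=\varphi(b,a)=1$''. It is worth noting that weak right invariance of $\varphi$ plays no role in this statement; only the fuzzy quasi-order property is used, weak right invariance being what is additionally needed (via Theorem~\ref{th.cdffa}) to conclude that $({\cal A}/\varphi)_{N}$ is equivalent to ${\cal A}$.
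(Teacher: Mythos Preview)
Your proof is correct and follows essentially the same route as the paper: both define the map $\varphi_{u}\mapsto\sigma_{u}^{A/\varphi}$, prove the key identity $\sigma_{u}^{A/\varphi}(a\varphi)=\varphi_{u}(a)$ by induction on $|u|$ using $\varphi_{u}\circ\varphi=\varphi_{u}$, and then read off well-definedness, bijectivity, and compatibility with initial states and transitions.

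The one substantive difference is the verification of the terminal-state condition. The paper argues indirectly, writing $\tau_{N}(\sigma_{u}^{B})=\lBrack{\cal B}\rBrack(u)=\lBrack{\cal A}\rBrack(u)=\lBrack{\cal A}_{\varphi}\rBrack(u)=\tau_{\varphi}(\varphi_{u})$, which invokes both the equivalence ${\cal A}/\varphi\sim{\cal A}$ and Theorem~\ref{th.cdffa}; this is where weak right invariance enters their proof, as the paper itself notes in the remark immediately following. Your direct computation $\tau_{N}(\hat\varphi_{u})=\varphi_{u}\circ\varphi\circ\tau^{A}=\varphi_{u}\circ\tau^{A}$ uses only idempotency of $\varphi$, so you have in fact established the isomorphism under the sole hypothesis that $\varphi$ is a fuzzy quasi-order. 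That is a small but genuine sharpening: where the paper's remark says weak right invariance is needed ``only'' for this step, your argument shows it is not needed at all.
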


\begin{proof}
For the sake of simplicity, set $B=A/\varphi$ and ${\cal B}={\cal A}/\varphi$, i.e., let ${\cal B}=(B,\sigma^B,\delta^B,\tau^B)$ be the afterset~fuzzy~automaton of $\cal A$ corresponding to $\varphi $.~Consider the Nerode automaton ${\cal B}_N =(B_N,\sigma^B_\varepsilon,\delta_N,\tau_N)$ of $\cal B$.

First, by induction on the length of a word, we will prove that for any $u\in X^*$ the following is true:
\begin{equation}\label{eq:ner.aft}
\sigma_u^{B}(a\varphi)=\varphi_u(a), \ \ \text{for every}\ a\in A.
\end{equation}
For any $a\in A$ we have that $\sigma_\varepsilon^{B}(a\varphi )=\sigma^{B}(a\varphi )=(\sigma^A \circ \varphi)(a)=\varphi_\varepsilon(a)$, and thus, (\ref{eq:ner.aft})
holds when $u$ is the empty word.~Next, suppose that (\ref{eq:ner.aft}) holds for some word $u\in X^*$.~By (\ref{fiu.d}) and idempotency of $\varphi $ it follows that $\varphi_u\circ \varphi=\varphi_u$, so for each $x\in X$ and each $a\in A$ we have that
\begin{align*}
\sigma_{ux}^{B}(a\varphi)&=(\sigma_{u}^{B}\circ \delta_x^{B})(a\varphi)=\bigvee_{b\in A}\sigma_{u}^{B}(b\varphi)\otimes \delta_x^{B}(b\varphi,a\varphi)= \bigvee_{b\in A}\varphi_{u}(b)\otimes (\varphi\circ \delta_x^A\circ \varphi)(b,a)= \\
&= (\varphi_u\circ \varphi \circ \delta_x^A\circ \varphi )(a)=(\varphi_u\circ \delta_x^A\circ \varphi )(a)=\varphi_{ux}(a).
\end{align*}
Therefore, we conclude that (\ref{eq:ner.aft}) holds for every $u\in X^*$.

Now, define a function $\xi :A_\varphi \to B_N$ by $\xi (\varphi_u)=\sigma_u^B$, for each $u\in X^*$. For arbitrary $u,v\in X^*$ we have that
\[
\varphi_u=\varphi_v\ \iff\ (\forall a\in A)\ \varphi_u(a)=\varphi_v(a) \ \iff\ (\forall a\in A)\ \sigma_u^B(a\varphi )=\sigma_v^B(a\varphi ) \ \iff\ \sigma_u^B=\sigma_v^B,
\]
so $\xi $ is a well-defined and injective function.~It is clear that $\xi $ is also surjective, and therefore, $\xi $ is a bijective function. Also, for all
for all $u\in X^*$ and~$x\in X$ we have that
\begin{align*}
&\delta_N(\xi(\varphi_u),x)=\delta_N(\sigma_u^B,x)=\sigma_{ux}^B=\xi(\varphi_{ux})=\xi (\delta_\varphi(\varphi_u,x)), \\
&\tau_N(\xi(\varphi_u))= \tau_N(\sigma_u^B)= \sigma_u^B\circ \tau^B=\lBrack{\cal B}\rBrack (u)=\lBrack{\cal A}\rBrack (u)=\lBrack{\cal A}_\varphi\rBrack (u)=\varphi_u\circ\tau^A=\tau_\varphi(\varphi_u),
\end{align*}
so $\xi $ is an isomorphism of the automaton ${\cal A}_\varphi $ onto the Nerode automaton ${\cal B}_N$ of ${\cal B}={\cal A}/\varphi$.
\end{proof}

\begin{remark}\rm
Note that in the previous theorem we need $\varphi $ to be weakly right invariant
only to prove that $\tau_N(\xi(\varphi_u))=\tau_\varphi(\varphi_u)$. Everything
else can be proved under the weaker assumpton that $\varphi $ is a fuzzy
quasi-order.
\end{remark}

In the case when working with right invariant fuzzy quasi-orders, it is possible to compare the size of the corresponding automata.~This follows from the following theorem.

\begin{theorem}\label{th:hom.im}\it
Let ${\cal A}=(A,\sigma^A,\delta^A,\tau^A)$ be a fuzzy automaton and let $\varphi $ and $\phi $ be right invariant fuzzy quasi-orders on $\cal A$ such that $\varphi \leqslant \phi $.~Then the automaton ${\cal A}_{\phi} $ is a homomorphic image of the automaton ${\cal A}_\varphi $.

Consequently, $|{\cal A}_\phi|\leqslant |{\cal A}_\varphi |$.
\end{theorem}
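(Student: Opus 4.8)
The plan is to exhibit an explicit surjective homomorphism $\psi\colon{\cal A}_\varphi\to{\cal A}_\phi$, namely the map defined by $\psi(\varphi_u)=\phi_u$ for every $u\in X^*$, and then verify that it has all the required properties. The key preliminary observation is a short fact about composition of fuzzy relations: since $\varphi$ is reflexive, $\varphi\le\phi$, and $\phi$ is transitive, one has $\varphi\circ\phi=\phi$. Indeed, reflexivity of $\varphi$ gives $\varphi\circ\phi\ge\Delta_A\circ\phi=\phi$, while $\varphi\le\phi$ together with transitivity of $\phi$ gives $\varphi\circ\phi\le\phi\circ\phi\le\phi$.

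Using this, I would prove by induction on the length of $u$ the key identity $\phi_u=\varphi_u\circ\phi$ for all $u\in X^*$. For $u=\varepsilon$ we have $\phi_\varepsilon=\sigma^A\circ\phi=\sigma^A\circ\varphi\circ\phi=\varphi_\varepsilon\circ\phi$ by the observation above. For the inductive step, assuming $\phi_u=\varphi_u\circ\phi$, and using in turn the inductive hypothesis, the characterization (\ref{eq:ri.d.eq}) of the right invariance of $\phi$, the identity $\varphi\circ\phi=\phi$, and the definition of $\varphi_{ux}$, we obtain
\[
\phi_{ux}=\phi_u\circ\delta_x^A\circ\phi=\varphi_u\circ(\phi\circ\delta_x^A\circ\phi)=\varphi_u\circ\delta_x^A\circ\phi=\varphi_u\circ\delta_x^A\circ\varphi\circ\phi=\varphi_{ux}\circ\phi .
\]

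Once the identity $\phi_u=\varphi_u\circ\phi$ is available, the rest is routine. The map $\psi$ is well defined because $\varphi_u=\varphi_v$ implies $\phi_u=\varphi_u\circ\phi=\varphi_v\circ\phi=\phi_v$, and it is surjective since $A_\phi=\{\phi_u\mid u\in X^*\}$. It preserves the initial state by definition, $\psi(\varphi_\varepsilon)=\phi_\varepsilon$; it preserves transitions since $\psi(\delta_\varphi(\varphi_u,x))=\psi(\varphi_{ux})=\phi_{ux}=\delta_\phi(\phi_u,x)=\delta_\phi(\psi(\varphi_u),x)$; and it preserves terminal weights since $\tau_\phi(\psi(\varphi_u))=\phi_u\circ\tau^A=\varphi_u\circ\phi\circ\tau^A=\varphi_u\circ\tau^A=\tau_\varphi(\varphi_u)$, where $\phi\circ\tau^A=\tau^A$ follows from (\ref{eq:ri.t}) and reflexivity of $\phi$. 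Hence $\psi$ is a surjective homomorphism of ${\cal A}_\varphi$ onto ${\cal A}_\phi$, and the cardinality bound $|{\cal A}_\phi|\le|{\cal A}_\varphi|$ is immediate. The only genuinely nontrivial step is spotting the identity $\phi_u=\varphi_u\circ\phi$ and pushing the induction through; note that it uses both inequalities in the relation observation and the right invariance of $\phi$ itself (not merely of $\varphi$), so all the hypotheses are essential.
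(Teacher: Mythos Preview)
Your proof is correct and follows essentially the same route as the paper: define $\xi(\varphi_u)=\phi_u$ and establish well-definedness via the identity $\phi_u=\varphi_u\circ\phi$, then check the homomorphism conditions. The only cosmetic differences are that the paper derives this identity from the intermediate factorizations $\varphi_w=\sigma_w^A\circ\varphi$ and $\phi_w=\sigma_w^A\circ\phi$ (which uses right invariance of \emph{both} $\varphi$ and $\phi$), whereas your direct induction uses only the right invariance of $\phi$; and the paper verifies $\tau_\phi(\xi(\varphi_u))=\tau_\varphi(\varphi_u)$ by appealing to $\lBrack{\cal A}_\phi\rBrack=\lBrack{\cal A}\rBrack=\lBrack{\cal A}_\varphi\rBrack$ from Theorem~\ref{th.cdffa}, while you compute it directly from $\phi\circ\tau^A=\tau^A$.
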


\begin{proof}
First we note that $\varphi \leqslant \phi $ is equivalent to $\varphi \circ \phi =\phi \circ \varphi =\phi $, because $\varphi $ and $\phi $ are fuzzy quasi-orders.

Define a function $\xi :A_\varphi \to A_{\phi} $ by $\xi (\varphi_u)=\phi_u$, for each $u\in X^*$.~First we prove that $\xi $ is well-defined.~Let $u,v\in X^*$ such that $\varphi_u=\varphi_v$.~According to (\ref{eq:ri.d.eq}) and (\ref{fiu.d}), by induction we easily prove that $\varphi_w=\sigma_w^A\circ \varphi $ and $\phi_w=\sigma_w^A\circ \phi $, for every $w\in X^*$, whence
\[
\phi_u=\sigma_u^A\circ \phi =\sigma_u^A\circ \varphi\circ \phi = \varphi_u\circ \phi = \varphi_v\circ \phi = \sigma_v^A\circ \varphi\circ \phi = \sigma_v^A\circ \phi =\phi_v .
\]
Therefore, $\xi $ is a well-defined function.~It is clear that $\xi $ is a surjective function.~Moreover, it is evident that $\delta_\phi (\xi(\varphi_u),x)=\xi(\delta_\varphi(\varphi_u,x))$, for all $u\in X^*$ and $x\in X$, and
\[
\tau_{\phi} (\xi(\varphi_u))=\tau_\phi (\phi_u)=\phi_u\circ \tau^A=\lBrack{\cal A}_{\phi}\rBrack (u)=\lBrack{\cal A}\rBrack (u)=\lBrack{\cal A}_{\varphi}\rBrack (u)=
\varphi_u\circ \tau^A=\tau_\varphi(\varphi_u),
\]
and hence, $\xi $ is a homomorphism of ${\cal A}_\varphi $ onto ${\cal A}_{\phi}$
and $|{\cal A}_\phi|\leqslant |{\cal A}_\varphi |$.
\end{proof}

Note that when $\varphi $ is a reflexive weakly left invariant fuzzy relation on a fuzzy automaton ${\cal A}$,~then ${\cal A}_\varphi $ is just the Nerode automaton of $\cal A$, and we do not get any new construction.~Besides, the following is true.

\begin{theorem}\label{th:det.red.wli}\it
Let ${\cal A}=(A,\sigma^A,\delta^A,\tau^A)$ be a fuzzy automaton and $\varphi $ a weakly left invariant fuzzy quasi-order on~$\cal A$.~Then the Nerode automaton of the afterset fuzzy automaton ${\cal A}/\varphi $ is isomorphic to the Nerode automaton~of~$\cal A$.
\end{theorem}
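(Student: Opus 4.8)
The plan is to reprove, in its left invariant form, the identity that drove the proof of Theorem~\ref{th:det.red}.~Write $B=A/\varphi$ and ${\cal B}={\cal A}/\varphi=(B,\sigma^B,\delta^B,\tau^B)$, and let ${\cal B}_N$ and ${\cal A}_N$ denote the Nerode automata of $\cal B$ and of $\cal A$.~Since $\varphi$ is a fuzzy quasi-order it is reflexive and idempotent, so by reflexivity the weak left invariance of $\varphi$ amounts to (\ref{eq:wli.eq}), i.e.\ $\sigma^A_u\circ\varphi=\sigma^A_u$ for every $u\in X^*$; this is essentially the only property of $\varphi$ that will be used.

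First I would establish, by induction on the length of $u$, the key claim
\[
\sigma^B_u(a\varphi)=\sigma^A_u(a)\qquad\text{for all }u\in X^*\text{ and }a\in A ,
\]
which is the left invariant counterpart of (\ref{eq:ner.aft}).~The base case $u=\varepsilon$ reduces to $\sigma^A\circ\varphi=\sigma^A$.~In the inductive step I would expand $\sigma^B_{ux}(a\varphi)=(\sigma^B_u\circ\delta^B_x)(a\varphi)$ using the definition (\ref{eq:aft.aut}) of $\delta^B_x$, substitute the induction hypothesis, and then apply $\sigma^A_u\circ\varphi=\sigma^A_u$ twice to absorb the two copies of $\varphi$ that the quotient construction inserts, arriving at $\sigma^A_{ux}(a)$.~This step is the crux: it is precisely where weak left invariance is indispensable, and it is also the only place where one must be a little careful — the join defining $\sigma^B_u\circ\delta^B_x$ ranges over $B$, but since equal aftersets contribute equal summands it may be rewritten as a join over $A$, which is what legitimizes the substitution of the induction hypothesis.

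With the claim in hand, I would define $\xi:A_N\to B_N$ by $\xi(\sigma^A_u)=\sigma^B_u$.~The claim yields $\sigma^A_u=\sigma^A_v\iff\sigma^B_u=\sigma^B_v$ — two fuzzy sets being equal iff they agree on all $a\in A$, respectively on all $a\varphi\in B$ — so $\xi$ is a well-defined bijection.~It then remains to check that $\xi$ is a homomorphism of crisp-deterministic fuzzy automata: $\xi(\sigma^A_\varepsilon)=\sigma^B_\varepsilon$ and $\delta_N(\xi(\sigma^A_u),x)=\sigma^B_{ux}=\xi(\sigma^A_{ux})$ are immediate from the definitions, while preservation of the terminal fuzzy sets, $\sigma^B_u\circ\tau^B=\sigma^A_u\circ\tau^A$, follows from the claim together with one further application of $\sigma^A_u\circ\varphi=\sigma^A_u$, using that $\tau^B(a\varphi)=(\varphi\circ\tau^A)(a)$.~Apart from the single subtlety noted above, the whole argument is routine bookkeeping, so I do not anticipate any genuine obstacle.
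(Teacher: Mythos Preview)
Your proposal is correct and follows essentially the same approach as the paper: prove the key identity $\sigma^B_u(a\varphi)=\sigma^A_u(a)$ by induction using weak left invariance, then read off the bijection between the Nerode state sets. The only cosmetic difference is that the paper defines the isomorphism in the opposite direction, $\xi:B_N\to A_N$, $\xi(\sigma^B_u)=\sigma^A_u$, and leaves the homomorphism verification implicit, whereas you spell out the terminal-state check via $\tau^B(a\varphi)=(\varphi\circ\tau^A)(a)$; neither affects the substance of the argument.
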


\begin{proof}For the sake of simplicity, set $B=A/\varphi$ and ${\cal B}={\cal A}/\varphi$, i.e., let ${\cal B}=(B,\sigma^B,\delta^B,\tau^B)$ be the afterset~fuzzy~automaton of $\cal A$ corresponding to $\varphi $.

First, by induction on the length of a word, we will prove that for any $u\in X^*$ the following is true:
\begin{equation}\label{eq:ner.wli}
\sigma_u^{B}(a\varphi)=\sigma_u^A(a), \ \ \text{for every}\ a\in A.
\end{equation}
For every $a\in A$ we have that $\sigma_\varepsilon^{B}(a\varphi )=(\sigma^A \circ\varphi)(a)= \sigma^A(a)= \sigma_\varepsilon^A(a)$, so (\ref{eq:ner.wli})
holds when $u$ is the empty~word. Next, suppose that (\ref{eq:ner.wli}) holds for some word $u\in X^*$.~According to (\ref{eq:ner.wli}) and our starting hypothesis that $\varphi $ is a weakly left invariant fuzzy quasi-order, for all $x\in X$ and $a\in A$ we obtain that
\begin{align*}
\sigma_{ux}^B(a\varphi )&=(\sigma_u^B\circ \delta_x^B)(a\varphi )=\bigvee_{b\in A}\sigma_u^B(b\varphi) \otimes \delta_x^B(b\varphi ,a\varphi )=\bigvee_{b\in A}\sigma_u^A(b) \otimes (\varphi\circ\delta_x^A\circ \varphi )(b,a)\\
&=(\sigma_u^A\circ \varphi\circ \delta_x^A\circ\varphi )(a)=(\sigma_u^A\circ \delta_x^A\circ\varphi )(a)=(\sigma_{ux}^A\circ \varphi)(a)= \sigma_{ux}^A(a),
\end{align*}
what completes the proof of (\ref{eq:ner.wli}).

Now, define a function $\xi :B_N\to A_N$ by $\xi (\sigma_u^B)=\sigma_u^A$, for each $u\in X^*$. For arbitrary $u,v\in X^*$ we have that
\[
\sigma_u^B=\sigma_v^B\ \iff\ (\forall a\in A)\ \sigma_u^B(a\varphi)=\sigma_v^B(a\varphi) \ \iff\ (\forall a\in A)\ \sigma_u^A(a)=\sigma_v^A(a) \ \iff\ \sigma_u^A=\sigma_v^a,
\]
which means that $\xi $ is a well-defined and injective function.~In addition, $\xi $ is surjective, and consequently, $\xi $ is a bijective function. It is easy to check that $\xi $ is a homomorphism, and therefore, $\xi $ is an isomorphism of the Nerode automaton of $\cal B$ onto the Nerode automaton of $\cal A$.
\end{proof}

As we said earlier, when $\varphi $ is a reflexive weakly left invariant fuzzy relation on a fuzzy automaton ${\cal A}$, then ${\cal A}_\varphi $ is just the Nerode automaton of $\cal A$, and we do not get any new construction.~However, we will show that weakly left invariant fuzzy relations work well with another construction, and can be very useful in the determinization of the reverse fuzzy automaton of $\cal A$.

Let ${\cal A}=(A,\sigma^{A},\delta^{A},\tau^{A})$ be a fuzzy automaton and $\psi$ a fuzzy relation on $A$.~For each $u\in X^{*}$ we define
a fuzzy set $\psi^{u}: A\to L$ inductively, as follows: for the empty
word $\varepsilon$ and all $u\in X^*$ and $x\in X$ we set
\begin{equation}
  \psi^{\varepsilon} = \psi \circ \tau^{A},\hspace{0.2 in}\psi^{xu} =\psi \circ\delta^{A}_{x}\circ\psi^{u} \label{psiu}
\end{equation}
Clearly, if $u=x_1\dots x_n$, where $x_1,\dots, x_n\in X$, then
\begin{equation}\label{psiu.d}
\psi^{u}= \psi\circ\delta^{A}_{x_1}\circ \psi\circ...\circ\delta^{A}_{x_{n}}\circ \psi\circ \tau^A.
\end{equation}
Now, set $A^\psi=\{\psi_{u}\mid u\in X^{*}\}$, and define
$\delta^{\psi}:A^{\psi}\times X\to A^{\psi}$ and $\tau^{\psi}:A^{\psi}\to L$ as follows:
\begin{equation}
  \delta^{\psi}(\psi^{u},x) = \psi^{xu},\hspace{0.2 in}\tau^{\psi}(\psi^{u}) =\sigma^A\circ \psi^{u},\label{psi.delta.tau}
\end{equation}
for all $u\in X^*$ and $x\in X$.~If $\psi_{u}=\psi_{v}$, for some $u,v\in X^{*}$, then for each $x\in X$ we have that
\[
 \delta^{\psi}(\psi^{u},x)=\psi^{xu}=\psi \circ\delta^{A}_{x}\circ\psi^{u}=\psi \circ\delta^{A}_{x}\circ\psi^{v}=\psi^{xv}= \delta^{\psi}(\psi^{v},x),
\]
and hence, $\delta^{\psi}$ is a well-defined function.~Clearly, $\tau^{\psi}$ is also a well-defined function, so ${\cal A}^{\psi}=(A^{\psi},\psi^{\varepsilon},\delta^{\psi},\tau^{\psi})$~is a well-defined crisp-deterministic fuzzy automaton.

Now we prove that the following is true.

\begin{theorem}\label{th.cdffa.wli}\it
Let ${\cal A}=(A,\sigma^A,\delta^A,\tau^A)$ be a fuzzy automaton and  $\psi$ a reflexive weakly left invariant fuzzy relation on $\cal A$.~Then ${\cal A}^{\psi}=(A^{\psi},\psi^{\varepsilon},\delta^{\psi},\tau^{\psi})$ is an accessible crisp-deterministic fuzzy automaton equivalent to the reverse
fuzzy automaton of ${\cal A}$.
\end{theorem}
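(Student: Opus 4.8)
The plan is to prove directly that $\lBrack{\cal A}^{\psi}\rBrack=\overline{\lBrack{\cal A}\rBrack}=\lBrack\overline{\cal A}\rBrack$, mimicking the proof of Theorem \ref{th.cdffa}, but keeping in mind one structural difference: the fuzzy sets $\psi^{u}$ are built by \emph{prepending} input letters (see (\ref{psiu})), so that processing a word $u$ in ${\cal A}^{\psi}$ corresponds to its reversal $\bar u$. The automaton ${\cal A}^{\psi}$ is already known to be a well-defined crisp-deterministic fuzzy automaton from the discussion preceding the statement, so only accessibility and the language identity remain.

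First I would pin down the state reached on a given input. From $\delta^{\psi}(\psi^{w},x)=\psi^{xw}$ and the recursive definition of $\delta^{\psi}_{*}$, a routine induction on $|u|$ yields $\delta^{\psi}_{*}(\psi^{\varepsilon},u)=\psi^{\bar u}$ for every $u\in X^{*}$ (the step is $\delta^{\psi}_{*}(\psi^{\varepsilon},ux)=\delta^{\psi}(\psi^{\bar u},x)=\psi^{x\bar u}=\psi^{\overline{ux}}$). In particular every state $\psi^{v}=\delta^{\psi}_{*}(\psi^{\varepsilon},\bar v)$ is accessible, so ${\cal A}^{\psi}$ is accessible. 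The heart of the argument is then the identity $\sigma^{A}\circ\psi^{v}=\sigma^{A}\circ\delta^{A}_{v}\circ\tau^{A}=\lBrack{\cal A}\rBrack(v)$ for every $v\in X^{*}$. Since $\psi$ is reflexive and weakly left invariant, (\ref{eq:wli.eq}) gives $\sigma^{A}_{w}\circ\psi=\sigma^{A}_{w}$ for all $w\in X^{*}$. Writing $v=x_{1}\cdots x_{n}$ and using the expanded form (\ref{psiu.d}) of $\psi^{v}$, I would absorb the $\psi$-factors one at a time from the left: $\sigma^{A}\circ\psi=\sigma^{A}$, then $\sigma^{A}\circ\delta^{A}_{x_{1}}\circ\psi=\sigma^{A}_{x_{1}}\circ\psi=\sigma^{A}_{x_{1}}$, and inductively $\sigma^{A}\circ\psi\circ\delta^{A}_{x_{1}}\circ\psi\circ\cdots\circ\delta^{A}_{x_{n}}\circ\psi=\sigma^{A}_{x_{1}\cdots x_{n}}=\sigma^{A}_{v}$; composing once more with $\tau^{A}$ gives $\sigma^{A}\circ\psi^{v}=\sigma^{A}_{v}\circ\tau^{A}=\lBrack{\cal A}\rBrack(v)$, the case $v=\varepsilon$ reducing to $\sigma^{A}\circ\psi\circ\tau^{A}=\sigma^{A}\circ\tau^{A}$. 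Combining this with the state computation and the definition (\ref{psi.delta.tau}) of $\tau^{\psi}$, for every $u\in X^{*}$ we obtain $\lBrack{\cal A}^{\psi}\rBrack(u)=\tau^{\psi}(\delta^{\psi}_{*}(\psi^{\varepsilon},u))=\tau^{\psi}(\psi^{\bar u})=\sigma^{A}\circ\psi^{\bar u}=\lBrack{\cal A}\rBrack(\bar u)=\overline{\lBrack{\cal A}\rBrack}(u)=\lBrack\overline{\cal A}\rBrack(u)$, which is exactly the asserted equivalence.

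I do not expect a genuine obstacle: the statement is the mirror image of Theorem \ref{th.cdffa} for the reverse automaton. The only points needing care are the bookkeeping of the word reversal (letters prepended, not appended) and checking that it is precisely weakly \emph{left} invariance of $\psi$ on $\cal A$ that permits absorbing the $\psi$'s against the $\sigma$-side. As an alternative, the theorem can be derived from Theorem \ref{th.cdffa}: one verifies that $\psi^{-1}$ is a reflexive weakly right invariant fuzzy relation on $\overline{\cal A}$ — its defining inequalities for $\overline{\cal A}$ translate, via $\bar\delta^{A}_{x}=(\delta^{A}_{x})^{-1}$ and commutativity of $\otimes$, into the weakly left invariant inequalities for $\psi$ on $\cal A$ — and that $\psi^{u}\mapsto\varphi_{\bar u}$ (where $\varphi_{w}$ denotes the states of $(\overline{\cal A})_{\psi^{-1}}$) is an isomorphism ${\cal A}^{\psi}\cong(\overline{\cal A})_{\psi^{-1}}$; then $\lBrack{\cal A}^{\psi}\rBrack=\lBrack(\overline{\cal A})_{\psi^{-1}}\rBrack=\lBrack\overline{\cal A}\rBrack$ by Theorem \ref{th.cdffa} and accessibility is inherited along the isomorphism.
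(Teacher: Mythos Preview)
Your proposal is correct and follows essentially the same approach as the paper: both prove $\delta^{\psi}_{*}(\psi^{\varepsilon},u)=\psi^{\bar u}$ and then use the weakly left invariant identity $\sigma^{A}_{w}\circ\psi=\sigma^{A}_{w}$ to collapse $\sigma^{A}\circ\psi^{\bar u}$ to $\sigma^{A}_{\bar u}\circ\tau^{A}=\lBrack{\cal A}\rBrack(\bar u)$. Your write-up is a bit more explicit about the induction and accessibility, and the alternative duality argument via $(\overline{\cal A})_{\psi^{-1}}$ is a nice extra observation not in the paper, but the main line is the same.
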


\begin{proof}
Consider an arbitrary word $u=x_1\dots x_n$, where $x_1,\ldots ,x_n\in X$.~Using (\ref{psiu.d}) we obtain that
\[
\sigma^A\circ \psi \circ \delta_{x_n}^A\circ \psi\circ \dots \circ \delta_{x_1}^A\circ \psi = \sigma^A_{x_n\dots x_1},
\]
whence it follows that
\begin{align*}
\lBrack {\cal A}^\psi \rBrack (u)&= \tau^\psi (\delta^\psi(\psi^\varepsilon ,u))= \tau^\psi (\psi^{\bar u})=\sigma^A\circ \psi^{\bar u}=\sigma^A\circ
(\psi\circ\delta^{A}_{x_{n}}\circ \psi\circ...\circ\delta^{A}_{x_{1}}\circ \psi\circ \tau^A)= \\
&= (\sigma^A\circ \psi\circ\delta^{A}_{x_{n}}\circ \psi\circ...\circ\delta^{A}_{x_{1}}\circ \psi)\circ \tau^A=\sigma^A_{x_n\dots x_1}\circ \tau^A=\sigma_{\bar u}^A\circ \tau_A=\lBrack {\cal A}\rBrack (\bar u)=\lBrack \overline{\cal A}\rBrack (u).
\end{align*}
On the other hand,
\[
\lBrack {\cal A}^\psi \rBrack (\varepsilon)=\tau^\psi (\psi^\varepsilon)=\sigma^A\circ \psi^\varepsilon=\sigma^A\circ \psi \circ \tau^A=\sigma^A\circ \tau^A=
\lBrack {\cal A}\rBrack (\varepsilon)=\lBrack \overline{\cal A}\rBrack (\varepsilon).
\]
Therefore, $\lBrack {\cal A}^\psi \rBrack=\lBrack \overline{\cal A} \rBrack$, i.e., ${\cal A}^\psi $ is equivalent to $\overline{\cal A}$.
\end{proof}

The next two theorems can be proved similarly as Theorems \ref{th:det.red} and \ref{th:hom.im}, so their proofs will be omitted.

\begin{theorem}\label{th:det.red.wli}\it
Let ${\cal A}=(A,\sigma^A,\delta^A,\tau^A)$ be a fuzzy automaton and $\psi $ a weakly left invariant fuzzy quasi-order on~$\cal A$. Then the automaton ${\cal A}^\psi $ is isomorphic to the reverse Nerode automaton of the afterset fuzzy automaton~${\cal A}/\psi $.
\end{theorem}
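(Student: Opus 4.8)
I would prove this by adapting the argument for Theorem~\ref{th:det.red}, with the Nerode automaton replaced by the reverse Nerode automaton and the fuzzy sets $\varphi_u$ replaced by the fuzzy sets $\psi^u$ of (\ref{psiu})--(\ref{psiu.d}). First I would set $B=A/\psi$ and let $\mathcal B=(B,\sigma^B,\delta^B,\tau^B)$ denote the afterset fuzzy automaton $\mathcal A/\psi$, and then consider its reverse Nerode automaton $\mathcal B_{\overline N}=(B_{\overline N},\tau^B_\varepsilon,\delta_{\overline N},\tau_{\overline N})$, where $B_{\overline N}=\{\tau^B_u\mid u\in X^*\}$, $\delta_{\overline N}(\tau^B_u,x)=\tau^B_{xu}$ and $\tau_{\overline N}(\tau^B_u)=\sigma^B\circ\tau^B_u$. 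The goal is then to exhibit an isomorphism between $\mathcal A^\psi$ and $\mathcal B_{\overline N}$.

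The heart of the proof is the identity $\tau^B_u(a\psi)=\psi^u(a)$ for all $u\in X^*$ and $a\in A$, which I would establish by induction on $|u|$. The base case $u=\varepsilon$ follows from $\tau^B_\varepsilon(a\psi)=\tau^B(a\psi)=(\psi\circ\tau^A)(a)=\psi^\varepsilon(a)$, using (\ref{eq:tE}) and (\ref{psiu}). For the inductive step I would use $\tau^B_{xu}=\delta^B_x\circ\tau^B_u$, expand the composition over $B$, substitute $\delta^B_x(a\psi,b\psi)=(\psi\circ\delta^A_x\circ\psi)(a,b)$ from (\ref{eq:aft.aut}) and the induction hypothesis, and then, invoking associativity of $\circ$ and the idempotency $\psi\circ\psi=\psi$ of the quasi-order $\psi$ (so that $\psi\circ\psi^u=\psi^u$ by (\ref{psiu.d})), reduce the result to $(\psi\circ\delta^A_x\circ\psi^u)(a)=\psi^{xu}(a)$. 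Granting this identity, I would define $\xi\colon A^\psi\to B_{\overline N}$ by $\xi(\psi^u)=\tau^B_u$; since every element of $B$ is an afterset, the equivalences $\psi^u=\psi^v\iff(\forall a)\,\psi^u(a)=\psi^v(a)\iff(\forall a)\,\tau^B_u(a\psi)=\tau^B_v(a\psi)\iff\tau^B_u=\tau^B_v$ show $\xi$ is well defined and injective, and surjectivity is immediate, so $\xi$ is a bijection.

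Finally I would check that $\xi$ is a homomorphism of crisp-deterministic fuzzy automata: it sends the initial state $\psi^\varepsilon$ to $\tau^B_\varepsilon$; it intertwines the transition functions, since $\delta_{\overline N}(\xi(\psi^u),x)=\tau^B_{xu}=\xi(\psi^{xu})=\xi(\delta^\psi(\psi^u,x))$ by (\ref{psi.delta.tau}); and it preserves terminal weights, since
\[
\tau_{\overline N}(\xi(\psi^u))=\sigma^B\circ\tau^B_u=\bigvee_{a\in A}(\sigma^A\circ\psi)(a)\otimes\psi^u(a)=\sigma^A\circ(\psi\circ\psi^u)=\sigma^A\circ\psi^u=\tau^\psi(\psi^u),
\]
using (\ref{eq:sE}), the key identity, idempotency of $\psi$, and (\ref{psi.delta.tau}); this would conclude $\mathcal A^\psi\cong\mathcal B_{\overline N}$. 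The one step that needs genuine care is the inductive step of the key identity, where the composition has to be taken over the quotient set $B$ and one has to juggle the recursion (\ref{psiu}), the closed form (\ref{psiu.d}) and idempotency of $\psi$; as in the Remark after Theorem~\ref{th:det.red}, reflexivity of $\psi$ carries most of the argument, with weak left invariance entering only if one prefers to compute the terminal weights by routing through the equivalences $\lBrack\overline{\mathcal B}\rBrack=\lBrack\overline{\mathcal A}\rBrack=\lBrack\mathcal A^\psi\rBrack$ afforded by Theorem~\ref{th.cdffa.wli}.
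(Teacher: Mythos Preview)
Your proposal is correct and follows precisely the approach the paper indicates: the paper omits the proof, stating only that it ``can be proved similarly as Theorem~\ref{th:det.red}'', and your argument is exactly that dualization, replacing $\sigma^B_u$ by $\tau^B_u$, $\varphi_u$ by $\psi^u$, and the Nerode automaton by the reverse Nerode automaton. Your direct computation of the terminal weights via $\sigma^B\circ\tau^B_u=\sigma^A\circ\psi\circ\psi^u=\sigma^A\circ\psi^u$ is in fact slightly cleaner than the route through $\lBrack\overline{\mathcal B}\rBrack=\lBrack\overline{\mathcal A}\rBrack=\lBrack\mathcal A^\psi\rBrack$ taken in the proof of Theorem~\ref{th:det.red}, and your closing remark correctly mirrors the paper's own observation after that theorem that only the quasi-order property is genuinely used.
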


\begin{theorem}\label{th:hom.im.li}\it
Let ${\cal A}=(A,\sigma^A,\delta^A,\tau^A)$ be a fuzzy automaton and let $\psi $ and $\psi' $ be left invariant fuzzy quasi-orders on $\cal A$ such that $\psi \leqslant \psi' $.~Then the automaton ${\cal A}^{\psi'} $ is a homomorphic image of the automaton ${\cal A}^\psi $.
\end{theorem}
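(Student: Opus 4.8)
The plan is to dualize the proof of Theorem~\ref{th:hom.im}, replacing the ``forward'' sets $\varphi_u$ by the ``backward'' sets $\psi^u$ of the construction preceding Theorem~\ref{th.cdffa.wli}. As in that proof, the starting observation is that, since $\psi$ and $\psi'$ are fuzzy quasi-orders, the hypothesis $\psi\le\psi'$ is equivalent to $\psi\circ\psi'=\psi'\circ\psi=\psi'$. I would then define a map $\xi\colon A^{\psi}\to A^{\psi'}$ by $\xi(\psi^{u})=(\psi')^{u}$ for every $u\in X^{*}$, and verify that $\xi$ is a well-defined surjective homomorphism of ${\cal A}^{\psi}$ onto ${\cal A}^{\psi'}$; surjectivity, and the requirement $\xi(\psi^{\varepsilon})=(\psi')^{\varepsilon}$ on initial states, are immediate from the definition.

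The key step---the analogue of the identity $\varphi_w=\sigma^{A}_{w}\circ\varphi$ used in Theorem~\ref{th:hom.im}---is the representation
\[
\psi^{w}=\psi\circ\tau^{A}_{w}\qquad\text{and}\qquad (\psi')^{w}=\psi'\circ\tau^{A}_{w},\qquad\text{for every } w\in X^{*},
\]
which I would prove by induction on $|w|$. The base case $w=\varepsilon$ reads $\psi^{\varepsilon}=\psi\circ\tau^{A}=\psi\circ\tau^{A}_{\varepsilon}$, which is just (\ref{psiu}); for the inductive step, using (\ref{psiu}), the induction hypothesis, the left-invariance identity $\psi\circ\delta^{A}_{x}\circ\psi=\psi\circ\delta^{A}_{x}$ from (\ref{eq:li.d.eq}), and $\tau^{A}_{xw}=\delta^{A}_{x}\circ\tau^{A}_{w}$ (which follows from (\ref{eq:su.tu})), one obtains $\psi^{xw}=\psi\circ\delta^{A}_{x}\circ\psi^{w}=\psi\circ\delta^{A}_{x}\circ\psi\circ\tau^{A}_{w}=\psi\circ\delta^{A}_{x}\circ\tau^{A}_{w}=\psi\circ\tau^{A}_{xw}$, and likewise for $\psi'$ (which is also a left invariant fuzzy quasi-order). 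Granting this, well-definedness of $\xi$ follows from $\psi\le\psi'$: if $\psi^{u}=\psi^{v}$, then
\[
(\psi')^{u}=\psi'\circ\tau^{A}_{u}=\psi'\circ\psi\circ\tau^{A}_{u}=\psi'\circ\psi^{u}=\psi'\circ\psi^{v}=\psi'\circ\psi\circ\tau^{A}_{v}=\psi'\circ\tau^{A}_{v}=(\psi')^{v}.
\]

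It then remains to check the homomorphism conditions. The transition condition is immediate from (\ref{psi.delta.tau}): $\delta^{\psi'}(\xi(\psi^{u}),x)=\delta^{\psi'}((\psi')^{u},x)=(\psi')^{xu}=\xi(\psi^{xu})=\xi(\delta^{\psi}(\psi^{u},x))$. For the terminal condition we must show $\sigma^{A}\circ(\psi')^{u}=\sigma^{A}\circ\psi^{u}$; this follows either by observing that $\psi^{u}=\delta^{\psi}_{*}(\psi^{\varepsilon},\bar u)$ and $(\psi')^{u}=\delta^{\psi'}_{*}((\psi')^{\varepsilon},\bar u)$, so that both sides equal $\lBrack\overline{\cal A}\rBrack(\bar u)$ by Theorem~\ref{th.cdffa.wli}, or directly: since $\psi$ and $\psi'$ are reflexive and left invariant we have $\sigma^{A}\circ\psi=\sigma^{A}\circ\psi'=\sigma^{A}$, whence $\sigma^{A}\circ(\psi')^{u}=\sigma^{A}\circ\psi'\circ\tau^{A}_{u}=\sigma^{A}\circ\tau^{A}_{u}=\sigma^{A}\circ\psi\circ\tau^{A}_{u}=\sigma^{A}\circ\psi^{u}$. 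Hence $\xi$ is a surjective homomorphism, and ${\cal A}^{\psi'}$ is a homomorphic image of ${\cal A}^{\psi}$.

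I expect the only genuine obstacle to be getting the inductive representation $\psi^{w}=\psi\circ\tau^{A}_{w}$ right---in particular, noticing that the left-invariance identity must be applied to the leftmost factor $\psi\circ\delta^{A}_{x}\circ\psi$, and that the recursion in (\ref{psiu}) prepends input letters, so the induction runs over words read ``from the front''. Everything after that is the same routine verification as in the proof of Theorem~\ref{th:hom.im}, carried out on the reverse side.
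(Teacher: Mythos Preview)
Your proof is correct and is precisely the dualization of the proof of Theorem~\ref{th:hom.im} that the paper has in mind; indeed, the paper omits the proof entirely, stating only that it ``can be proved similarly as Theorems~\ref{th:det.red} and~\ref{th:hom.im}.'' Your key identity $\psi^{w}=\psi\circ\tau^{A}_{w}$ is exactly the left-hand analogue of $\varphi_{w}=\sigma^{A}_{w}\circ\varphi$ from that proof, and the rest follows in the same way.
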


Note that the reverse Nerode automaton plays a crucial role in Brzozowski type determinization of~a fuzzy automaton.~Namely, it has been proven in \cite{JC.13} that when we start from a fuzzy automaton $\cal A$,~two consecutive applications of the construction of a reverse Nerode automaton produce a minimal crisp-deter\-ministic fuzzy automaton which is equivalent to $\cal A$.~We will show here that the first of these two~constructions can be replaced by construction of the automaton ${\cal A}^\psi $, for some reflexive weakly left invariant fuzzy relation $\psi $ on $\cal A$.

\begin{theorem}\label{th:Brz.wli}\it
Let ${\cal A}=(A,\sigma^A,\delta^A,\tau^A)$ be a fuzzy automaton and  $\psi$ a reflexive weakly left invariant fuzzy relation on $\cal A$.~Then the reverse
Nerode automaton of ${\cal A}^{\psi}$ is a minimal crisp-deterministic fuzzy automaton equivalent~to~${\cal A}$.
\end{theorem}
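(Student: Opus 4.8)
The plan is to reduce the statement to the Brzozowski-type minimality result of \cite{JC.13} by feeding it the automaton ${\cal A}^\psi$ in place of a first reverse Nerode automaton. First I would record what ${\cal A}^\psi$ recognizes and what its reverse Nerode automaton recognizes. By Theorem~\ref{th.cdffa.wli}, ${\cal A}^\psi$ is an accessible crisp-deterministic fuzzy automaton with $\lBrack {\cal A}^\psi\rBrack=\lBrack\overline{\cal A}\rBrack=\overline{\lBrack{\cal A}\rBrack}$. Since the reverse Nerode automaton of ${\cal A}^\psi$ is by definition the Nerode automaton of $\overline{{\cal A}^\psi}$, and a Nerode automaton is always equivalent to its parent, we get $\lBrack({\cal A}^\psi)_{\overline N}\rBrack=\lBrack\overline{{\cal A}^\psi}\rBrack=\overline{\lBrack{\cal A}^\psi\rBrack}=\overline{\overline{\lBrack{\cal A}\rBrack}}=\lBrack{\cal A}\rBrack$. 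This already gives the equivalence part of the theorem, so everything now hinges on minimality.

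For minimality I would isolate the lemma that actually drives the Brzozowski construction in \cite{JC.13}: \emph{the reverse Nerode automaton of any accessible crisp-deterministic fuzzy automaton $\cal C$ is a minimal crisp-deterministic fuzzy automaton of $\overline{\lBrack{\cal C}\rBrack}$}. In \cite{JC.13} this is invoked with $\cal C={\cal A}_{\overline N}$, but the argument uses nothing about $\cal C$ beyond being accessible and crisp-deterministic, hence it applies just as well with $\cal C={\cal A}^\psi$. Concretely, writing $\cal C=(C,\delta^C,c_0,\tau^C)$, the states of ${\cal C}_{\overline N}$ are the fuzzy sets $\tau^C_u$ with $\tau^C_u(b)=\tau^C(\delta^C_*(b,u))$, with $\delta_{\overline N}(\tau^C_u,x)=\tau^C_{xu}$ and $\tau_{\overline N}(\tau^C_u)=\sigma^C\circ\tau^C_u=\tau^C_u(c_0)$. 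A short computation shows that, starting from the state $\tau^C_u$, the automaton ${\cal C}_{\overline N}$ sends a word $w$ to $\tau^C_u(\delta^C_*(c_0,\overline w))$; since $\cal C$ is accessible, $\{\delta^C_*(c_0,\overline w)\mid w\in X^*\}=C$, so this accepted fuzzy language determines $\tau^C_u$ on all of $C$, i.e. it determines the state $\tau^C_u$ itself. Therefore distinct states of ${\cal C}_{\overline N}$ have distinct accepted languages, so ${\cal C}_{\overline N}$ is accessible and reduced, and an accessible reduced crisp-deterministic fuzzy automaton is minimal (unique up to isomorphism), as recalled in Section~\ref{sec:cdfa} and \cite{JC.13,ICBP.10}.

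Applying this lemma with $\cal C={\cal A}^\psi$ yields that $({\cal A}^\psi)_{\overline N}$ is a minimal crisp-deterministic fuzzy automaton of $\overline{\lBrack{\cal A}^\psi\rBrack}=\overline{\overline{\lBrack{\cal A}\rBrack}}=\lBrack{\cal A}\rBrack$, which is precisely the assertion. The only genuine obstacle is the middle step: one must check that the minimality proof of \cite{JC.13} really does not exploit the particular Nerode shape of its inner automaton, so that it transfers verbatim to ${\cal A}^\psi$; the distinguishability computation above is exactly what confirms that accessibility and crisp-determinism of the inner automaton are all that is needed. The rest is bookkeeping with the identities $\lBrack\overline{\cal B}\rBrack=\overline{\lBrack{\cal B}\rBrack}$, $\overline{\overline f}=f$, and the equivalence of a Nerode automaton with its parent, all already available in the preliminaries.
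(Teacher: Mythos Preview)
Your proposal is correct and follows essentially the same route as the paper: invoke Theorem~\ref{th.cdffa.wli} to see that ${\cal A}^\psi$ is an accessible crisp-deterministic fuzzy automaton equivalent to $\overline{\cal A}$, then apply the Brzozowski-type lemma from \cite{JC.13} (stated there as Theorem~3.5) that the reverse Nerode automaton of any accessible crisp-deterministic fuzzy automaton $\cal C$ is a minimal cdfa for $\overline{\lBrack{\cal C}\rBrack}$. The paper simply cites this lemma, whereas you additionally sketch its proof via the distinguishability argument; that extra detail is sound but not a different strategy.
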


\begin{proof} As we have proved in Theorem \ref{th.cdffa.wli}, ${\cal A}^\psi $ is an accessible crisp-deterministic fuzzy automaton equivalent to $\overline{\cal A}$.~According to Theorem 3.5 \cite{JC.13}, for any accessible crisp-deterministic fuzzy automaton $\cal B$, the reverse Nerode automaton of $\cal B$ is a minimal crisp-deterministic fuzzy automaton equivalent to $\overline{\cal B}$. Therefore, ${\cal A}^{\psi}$ is a minimal crisp-deterministic fuzzy automaton equivalent to the reverse of $\overline{\cal A}$, i.e., it is a minimal crisp-deterministic fuzzy automaton equivalent to ${\cal A}$.
\end{proof}

As the automaton ${\cal A}^\psi $ can be significantly smaller than the reverse Nerode automaton ${\cal A}_{\overline N}$ (in particular, if $\psi $ is the greatest left invariant fuzzy quasi-order on $\cal A$), replacing ${\cal A}_{\overline N}$ with ${\cal A}^\psi $ in the first step of Brzozowski type procedure we could mitigate a combinatorial blow up of the number of states that may happen in this step.~In the second step, such a problem does not exist because both constructions give the minimal crisp-deterministic fuzzy automaton equivalent to ${\cal A}$.

Let ${\cal A}=(A,\sigma^A,\delta^A,\tau^A)$ be a fuzzy automaton over an
alphabet $X=\{x_1,\ldots ,x_m\}$ and $\varphi
$ a fuzzy relation~on~$A$. For each $u\in X^*$ define an $(m+1)$-tuple
$\varphi_u^c$ by
\[
\varphi_u^c=(\varphi_{ux_1},\ldots ,\varphi_{ux_m},\varphi_u\circ
\tau^A)=(\varphi_u\circ \delta_{x_1}^A,\ldots , \varphi_u\circ \delta_{x_m}^A,\varphi_u\circ
\tau^A),
\]
set $A_\varphi^c=\{\,\varphi_u^c\mid u\in X^*\}$, and define $\delta_\varphi^c:A_\varphi^c\times
X\to A_\varphi^c$ and $\tau_\varphi^c:A_\varphi^c\to L$ as follows:
\begin{equation}\label{eq:A.fc}
\delta_\varphi^c(\varphi_u^c,x)=\varphi_{ux}^c, \qquad \tau_\varphi^c(\varphi_u^c)=\varphi_u\circ
\tau^A ,
\end{equation}
for all $u\in X^*$ and $x\in X$. We have the following:

\begin{theorem}\label{th:A.fc}\it
Let ${\cal A}=(A,\sigma^A,\delta^A,\tau^A)$ be a fuzzy automaton over an
alphabet $X=\{x_1,\ldots ,x_m\}$ and $\varphi
$ a reflexive weakly right invariant fuzzy relation~on~$A$.~Then ${\cal A}_\varphi^c=(A_\varphi^c,\varphi_\varepsilon^c,\delta_\varphi^c,\tau_\varphi^c)$
is an acessible crisp-deterministic fuzzy automaton equivalent to $\cal A$. \end{theorem}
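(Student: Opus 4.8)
The plan is to mimic the treatment of ${\cal A}_\varphi$ given above and in the proof of Theorem~\ref{th.cdffa}, the only extra bookkeeping being that the states of ${\cal A}_\varphi^c$ are $(m+1)$-tuples of fuzzy sets rather than single fuzzy sets. First I would check that $\delta_\varphi^c$ and $\tau_\varphi^c$ are well defined. Assume $\varphi_u^c=\varphi_v^c$ for some $u,v\in X^*$; comparing the $m+1$ coordinates, this says $\varphi_{ux_i}=\varphi_{vx_i}$ for $i=1,\dots,m$ and $\varphi_u\circ\tau^A=\varphi_v\circ\tau^A$. Fix $x=x_j$. By (\ref{fiu}) each coordinate of $\varphi_{ux_j}^c$, namely $\varphi_{ux_jx_i}=\varphi_{ux_j}\circ\delta_{x_i}^A\circ\varphi$ for $i=1,\dots,m$ and $\varphi_{ux_j}\circ\tau^A$, is determined by the single fuzzy set $\varphi_{ux_j}$; since $\varphi_{ux_j}=\varphi_{vx_j}$ we get $\delta_\varphi^c(\varphi_u^c,x_j)=\varphi_{ux_j}^c=\varphi_{vx_j}^c=\delta_\varphi^c(\varphi_v^c,x_j)$, and also $\tau_\varphi^c(\varphi_u^c)=\varphi_u\circ\tau^A=\varphi_v\circ\tau^A=\tau_\varphi^c(\varphi_v^c)$. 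Hence ${\cal A}_\varphi^c=(A_\varphi^c,\varphi_\varepsilon^c,\delta_\varphi^c,\tau_\varphi^c)$ is a well-defined crisp-deterministic fuzzy automaton.

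Accessibility follows from a one-line induction on $|u|$: $(\delta_\varphi^c)_*(\varphi_\varepsilon^c,u)=\varphi_u^c$ for every $u\in X^*$, since the base case is trivial and $(\delta_\varphi^c)_*(\varphi_\varepsilon^c,ux)=\delta_\varphi^c((\delta_\varphi^c)_*(\varphi_\varepsilon^c,u),x)=\delta_\varphi^c(\varphi_u^c,x)=\varphi_{ux}^c$. As $A_\varphi^c=\{\varphi_u^c\mid u\in X^*\}$, every state of ${\cal A}_\varphi^c$ is reachable from the initial state $\varphi_\varepsilon^c$.

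For the equivalence with ${\cal A}$ I would compute directly: for each $u\in X^*$,
\[
\lBrack{\cal A}_\varphi^c\rBrack(u)=\tau_\varphi^c\bigl((\delta_\varphi^c)_*(\varphi_\varepsilon^c,u)\bigr)=\tau_\varphi^c(\varphi_u^c)=\varphi_u\circ\tau^A .
\]
By the computation in the proof of Theorem~\ref{th.cdffa} — where the reflexivity and weak right invariance of $\varphi$ enter through (\ref{eq:wri.eq}) — one has $\varphi_u\circ\tau^A=\sigma^A\circ\tau_u^A=\lBrack{\cal A}\rBrack(u)$ for every $u\in X^*$ (equivalently, $\varphi_u\circ\tau^A=\tau_\varphi(\varphi_u)=\lBrack{\cal A}_\varphi\rBrack(u)=\lBrack{\cal A}\rBrack(u)$). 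Hence $\lBrack{\cal A}_\varphi^c\rBrack=\lBrack{\cal A}\rBrack$.

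I do not anticipate a real obstacle: the argument is essentially that of Theorem~\ref{th.cdffa} with tuples replacing single fuzzy sets. The one point that calls for a little care is the well-definedness of $\delta_\varphi^c$ — the hypothesis $\varphi_u^c=\varphi_v^c$ is strictly weaker than $\varphi_u=\varphi_v$, so one must observe that it still forces $\varphi_{ux}^c=\varphi_{vx}^c$; this works precisely because the tuple $\varphi_{ux_j}^c$ is assembled coordinatewise from $\varphi_{ux_j}$, and $\varphi_{ux_j}$ is itself one of the coordinates equated by $\varphi_u^c=\varphi_v^c$.
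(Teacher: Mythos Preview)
Your proof is correct and follows essentially the same route as the paper's: both establish well-definedness of $\delta_\varphi^c$ by observing that $\varphi_u^c=\varphi_v^c$ forces $\varphi_{ux}=\varphi_{vx}$ (since these are coordinates of the tuple) and hence $\varphi_{ux}^c=\varphi_{vx}^c$, and then deduce equivalence via $\lBrack{\cal A}_\varphi^c\rBrack(u)=\varphi_u\circ\tau^A=\lBrack{\cal A}_\varphi\rBrack(u)=\lBrack{\cal A}\rBrack(u)$ by Theorem~\ref{th.cdffa}. Your explicit induction for accessibility is a small addition the paper leaves implicit.
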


\begin{proof}
Let $\varphi_u^c=\varphi_v^c$, for some $u,v\in X^*$.~This means that $\varphi_{ux_i}=\varphi_{vx_i}$, for any $i\in \{1,\ldots,m\}$, and $\varphi_u\circ \tau^A=\varphi_v\circ \tau^A$.~Now, for an arbitrary $x\in X$ we have that $\varphi_{ux}=\varphi_{vx}$, whence
\[
\varphi_{uxx_i}=\varphi_{ux}\circ \delta_{x_i}^A\circ \varphi = \varphi_{vx}\circ \delta_{x_i}^A\circ \varphi =\varphi_{vxx_i},
\]
for each $i\in \{1,\ldots,m\}$, and also, $\varphi_{ux}\circ \tau^A=\varphi_{vx}\circ \tau^A$.~Hence, $\varphi_{ux}^c=\varphi_{vx}^c$, so $\delta_\varphi^c(\varphi_u^c,x)=\varphi_{ux}^c=\varphi_{vx}^c=\delta_\varphi^c(\varphi_v^c,x)$, and this means that $\delta_\varphi^c$ is a well-defined function.~Clearly, $\tau_\varphi^c$ is also a well-defined function, and consequently, ${\cal A}_\varphi^c$ is an accessible crisp-deterministic automaton.

Next, for each $u\in X^*$ we have that
\[
\lBrack {\cal A}_\varphi^c\rBrack (u) = \tau_\varphi^c(\delta_\varphi^c(\varphi_\varepsilon^c,u)) = \tau_\varphi^c(\varphi_u^c) = \varphi_u\circ\tau^A=\lBrack {\cal A}_\varphi\rBrack (u)= \lBrack {\cal A}\rBrack (u),
\]
and therefore, $\lBrack {\cal A}_\varphi^c\rBrack$ is equivalent to $\cal A$.
\end{proof}

In the case when $\varphi $ is the crisp equality on $A$, we have that $\varphi_u=\sigma_u^A$, for each $u\in X^*$, and ${\cal A}_\varphi^c$~is~the~auto\-maton constructed in \cite{JIC.11}, where it was called the {\it reduced Nerode automaton\/} of $\cal A$.~Here we will use~a~differ\-ent terminology. Namely, the first $m$ elements in the $(m+1)$-tuple $\varphi_u^c$ are the children of the vertex $\varphi_u$~in~the transition tree of the automaton ${\cal A}_\varphi $ (see Algorithm \ref{alg:A.phi}), and the $m+1$st element of $\varphi_u^c$ is the termination degree of $\varphi_u$ in ${\cal A}_\varphi $.~For this reason, the automaton ${\cal A}_\varphi^c$ will be called the {\it children automaton\/}~of~${\cal A}_\varphi $.~In~this regard, the above mentioned automaton constructed in \cite{JIC.11} is the children automaton of the Nerode auto\-maton ${\cal A}_N$ of $\cal A$.~The children automaton of the Nerode automaton ${\cal A}_N$ is denoted by ${\cal A}_N^c=(A_N^c,\sigma_{\varepsilon}^c,\delta_N^c,\tau_N^c)$.

\begin{theorem}\label{th:Afc.Afc}\it
Let ${\cal A}=(A,\sigma^A,\delta^A,\tau^A)$ be a fuzzy automaton over an
alphabet $X=\{x_1,\ldots ,x_m\}$ and let $\varphi $ and $\phi $ be right invariant fuzzy quasi-orders~on~$A$ such that $\varphi\leqslant \phi $.~Then
the automaton ${\cal A}_\phi^c$ is a homomorphic image of the automaton ${\cal A}_\varphi^c$, and consequently, $|{\cal A}_\phi^c|\leqslant |{\cal A}_\varphi^c|$.
\end{theorem}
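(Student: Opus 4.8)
The plan is to follow the proof of Theorem~\ref{th:hom.im} almost line for line, the only genuinely new point being the passage from states of ${\cal A}_\varphi$ to their children‑tuples. Concretely, I would define $\xi\colon A_\varphi^c\to A_\phi^c$ by $\xi(\varphi_u^c)=\phi_u^c$ for every $u\in X^*$, prove that $\xi$ is a well‑defined surjective homomorphism of crisp‑deterministic fuzzy automata, and then read off $|{\cal A}_\phi^c|\le|{\cal A}_\varphi^c|$.

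First I would record the facts inherited from the hypotheses. Since $\varphi$ and $\phi$ are fuzzy quasi‑orders, $\varphi\le\phi$ is equivalent to $\varphi\circ\phi=\phi\circ\varphi=\phi$; and, exactly as in the proof of Theorem~\ref{th:hom.im} (induction using (\ref{eq:ri.d.eq}) and (\ref{fiu.d})), right invariance of $\varphi$ and of $\phi$ gives $\varphi_w=\sigma_w^A\circ\varphi$ and $\phi_w=\sigma_w^A\circ\phi$ for all $w\in X^*$. Hence $\phi_w=\varphi_w\circ\phi$, and since $\phi\circ\tau^A=\tau^A$ (by (\ref{eq:ri.t}) and reflexivity of $\phi$) we obtain $\phi_w\circ\tau^A=\varphi_w\circ\tau^A$ for every $w$; equivalently $\varphi_w\circ\tau^A=\lBrack{\cal A}\rBrack(w)=\phi_w\circ\tau^A$ by Theorem~\ref{th.cdffa}. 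Finally, the well‑definedness of the state map $\varphi_w\mapsto\phi_w$ already proved in Theorem~\ref{th:hom.im} is precisely the implication $\varphi_w=\varphi_{w'}\ \Rightarrow\ \phi_w=\phi_{w'}$.

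The main step is the well‑definedness of $\xi$, and I expect it to be the only non‑routine point. Suppose $\varphi_u^c=\varphi_v^c$; by definition of the children‑tuple this means $\varphi_{ux_i}=\varphi_{vx_i}$ for $i=1,\dots,m$ and $\varphi_u\circ\tau^A=\varphi_v\circ\tau^A$. Applying the implication quoted at the end of the previous paragraph to the pairs of words $ux_i$ and $vx_i$ yields $\phi_{ux_i}=\phi_{vx_i}$ for each $i$, while $\phi_u\circ\tau^A=\varphi_u\circ\tau^A=\varphi_v\circ\tau^A=\phi_v\circ\tau^A$. Thus all $m+1$ coordinates of $\phi_u^c$ and $\phi_v^c$ coincide, i.e. $\phi_u^c=\phi_v^c$, so $\xi$ is well defined. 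This is exactly where right invariance is used: it is what makes the auxiliary map of Theorem~\ref{th:hom.im} well defined and what forces $\phi_w\circ\tau^A=\varphi_w\circ\tau^A$.

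It then remains to check, by routine bookkeeping, that $\xi$ is a surjective homomorphism. Surjectivity is immediate, since every state of ${\cal A}_\phi^c$ has the form $\phi_u^c=\xi(\varphi_u^c)$. For the homomorphism conditions, $\xi$ sends the initial state $\varphi_\varepsilon^c$ to $\phi_\varepsilon^c$; it commutes with transitions because $\delta_\phi^c(\xi(\varphi_u^c),x)=\delta_\phi^c(\phi_u^c,x)=\phi_{ux}^c=\xi(\varphi_{ux}^c)=\xi(\delta_\varphi^c(\varphi_u^c,x))$; and it preserves terminal degrees because $\tau_\phi^c(\xi(\varphi_u^c))=\phi_u\circ\tau^A=\varphi_u\circ\tau^A=\tau_\varphi^c(\varphi_u^c)$. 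Therefore ${\cal A}_\phi^c$ is a homomorphic image of ${\cal A}_\varphi^c$, and consequently $|{\cal A}_\phi^c|\le|{\cal A}_\varphi^c|$.
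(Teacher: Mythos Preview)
Your proof is correct and follows essentially the same approach as the paper: you define the same map $\xi(\varphi_u^c)=\phi_u^c$, establish well-definedness coordinate-wise using $\varphi_w=\sigma_w^A\circ\varphi$, $\phi_w=\sigma_w^A\circ\phi$ and $\varphi\circ\phi=\phi$ (exactly as in Theorem~\ref{th:hom.im}), and then verify the homomorphism conditions. The only cosmetic difference is that for the terminal coordinate the paper writes the chain via $\lBrack{\cal A}_\phi\rBrack(u)=\lBrack{\cal A}\rBrack(u)=\lBrack{\cal A}_\varphi\rBrack(u)$, whereas you use $\phi\circ\tau^A=\tau^A$ directly---but you note both routes, and they are equivalent.
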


\begin{proof}
Define a function $\xi:A_\varphi^c\to A_\phi^c$ by $\xi(\varphi_u^c)=\phi_u^c$,
for each $u\in X^*$. Let $\varphi_u^c=\varphi_v^c$, for some~$u,v\in X^*$,~i.e., let $\varphi_{ux}=\varphi_{vx}$, for eny $x\in X$, and $\varphi_u\circ
\tau^A=\varphi_v\circ \tau^A$.~As in the proof of Theorem \ref{th:hom.im}
we obtain that $\varphi_w=\sigma_w^A\circ \varphi $ and $\phi_w=\sigma_w^A\circ \phi$, for every $w\in X^*$, and for any $x\in X$ we have that
\[
\phi_{ux}=\sigma_{ux}^A\circ \phi = \sigma_{ux}^A\circ \varphi\circ \phi =\varphi_{ux}\circ \phi = \varphi_{vx}\circ \phi = \sigma_{vx}^A\circ \varphi\circ \phi= \sigma_{vx}^A\circ  \phi=\phi_{vx},
\]
and also,
\[
\phi_u\circ \tau^A = \lBrack{\cal A}_\phi\rBrack (u)=\lBrack{\cal A}\rBrack (u)=\lBrack{\cal A}_\varphi\rBrack (u)=\varphi_u\circ \tau^A = \varphi_v\circ \tau^A =\lBrack{\cal A}_\varphi\rBrack (v)=\lBrack{\cal A}_\phi\rBrack (v)=\phi_v\circ \tau^A .
\]
Therefore, $\phi_u^c=\phi_v^c$, which means that $\xi $ is a well-defined
function, and clearly, $\xi $ is surjective.~Moreover,~for all $u\in X^*$
 and $x\in X$ we have that
 \[
\xi(\delta_\varphi^c(\varphi_u^c,x))=\xi(\varphi_{ux}^c)= \phi_{ux}^c=\delta_\phi^c(\phi_u^c,x)=\delta_\phi^c(\xi(\varphi_u^c),x),  \]
and, on the other hand,
\[
\tau_\phi^c(\phi_u^c)=\phi_u\circ \tau^A=\lBrack{\cal A}_\phi\rBrack (u)=\lBrack{\cal A}\rBrack (u)=\lBrack{\cal A}_\varphi\rBrack (u)=\varphi_u\circ \tau^A =\tau_\varphi^c(\varphi_u^c).
\]
Hence, $\xi $ is a homomorphism of ${\cal A}_\varphi$ onto ${\cal A}_\phi$.
\end{proof}

\begin{theorem}\label{th:Afc.afters}\it
Let ${\cal A}=(A,\sigma^A,\delta^A,\tau^A)$ be a fuzzy automaton over an
alphabet $X=\{x_1,\ldots ,x_m\}$ and let $\varphi $ be a weakly right invariant fuzzy quasi-order~on~$\cal A$.~Then
the  automaton ${\cal A}_\varphi^c$ is isomorphic to the automaton ${\cal B}_N^c$,~where ${\cal B}={\cal A}/\varphi $ is the afterset fuzzy automaton of $\cal A$ with respect to $\varphi $.
\end{theorem}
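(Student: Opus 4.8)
The plan is to lift the isomorphism produced in Theorem~\ref{th:det.red} from the automata ${\cal A}_\varphi$ and ${\cal B}_N$ to their children automata. Recall that in the proof of Theorem~\ref{th:det.red} we obtained an isomorphism $\xi : A_\varphi \to B_N$ of ${\cal A}_\varphi$ onto the Nerode automaton ${\cal B}_N=(B_N,\sigma^B_\varepsilon,\delta_N,\tau_N)$ of ${\cal B}={\cal A}/\varphi$, given by $\xi(\varphi_u)=\sigma^B_u$, and that along the way we established the two identities $\sigma^B_u(a\varphi)=\varphi_u(a)$ for all $a\in A$ and $u\in X^*$, and $\sigma^B_u\circ\tau^B=\varphi_u\circ\tau^A$ for all $u\in X^*$ (the latter being exactly the chain $\tau_N(\sigma^B_u)=\sigma^B_u\circ\tau^B=\lBrack{\cal B}\rBrack(u)=\lBrack{\cal A}\rBrack(u)=\lBrack{\cal A}_\varphi\rBrack(u)=\varphi_u\circ\tau^A$ appearing there, or alternatively $\sigma^B_u\circ\tau^B=\varphi_u\circ\varphi\circ\tau^A=\varphi_u\circ\tau^A$ via idempotency of $\varphi$). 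Writing the states of ${\cal B}_N^c$ as the $(m+1)$-tuples $(\sigma^B_u)^c=(\sigma^B_{ux_1},\dots,\sigma^B_{ux_m},\sigma^B_u\circ\tau^B)$, so that ${\cal B}_N^c=(B_N^c,(\sigma^B_\varepsilon)^c,\widehat\delta,\widehat\tau)$ with $\widehat\delta((\sigma^B_u)^c,x)=(\sigma^B_{ux})^c$ and $\widehat\tau((\sigma^B_u)^c)=\sigma^B_u\circ\tau^B$, I would define $\zeta : A_\varphi^c\to B_N^c$ by $\zeta(\varphi_u^c)=(\sigma^B_u)^c$ for each $u\in X^*$.

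To see that $\zeta$ is well defined and injective, note that by the first identity, applied to the words $ux_i$ and $vx_i$, we have $\varphi_{ux_i}=\varphi_{vx_i}\iff\sigma^B_{ux_i}=\sigma^B_{vx_i}$ for each $i\in\{1,\dots,m\}$, and by the second identity $\varphi_u\circ\tau^A=\varphi_v\circ\tau^A\iff\sigma^B_u\circ\tau^B=\sigma^B_v\circ\tau^B$; consequently $\varphi_u^c=\varphi_v^c$ holds precisely when $(\sigma^B_u)^c=(\sigma^B_v)^c$. Surjectivity of $\zeta$ is immediate, since every state of ${\cal B}_N^c$ has the form $(\sigma^B_u)^c$. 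Finally $\zeta$ is a homomorphism: it carries the initial state $\varphi_\varepsilon^c$ to $(\sigma^B_\varepsilon)^c$; for all $u\in X^*$ and $x\in X$ it satisfies $\zeta(\delta_\varphi^c(\varphi_u^c,x))=\zeta(\varphi_{ux}^c)=(\sigma^B_{ux})^c=\widehat\delta((\sigma^B_u)^c,x)=\widehat\delta(\zeta(\varphi_u^c),x)$; and it preserves terminal degrees because $\tau_\varphi^c(\varphi_u^c)=\varphi_u\circ\tau^A=\sigma^B_u\circ\tau^B=\widehat\tau((\sigma^B_u)^c)$. Hence $\zeta$ is an isomorphism of ${\cal A}_\varphi^c$ onto ${\cal B}_N^c$.

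I do not anticipate any real obstacle here: the argument is essentially the transport of the already-established isomorphism $\xi$ along the canonical passage from a crisp-deterministic fuzzy automaton to its children automaton, so it reduces to the bookkeeping above, running parallel to the proofs of Theorems~\ref{th:det.red} and~\ref{th:Afc.Afc} with the $(m+1)$-tuples $\varphi_u^c$ in place of the single aftersets $\varphi_u$. The only mildly non-mechanical point is the identity $\sigma^B_u\circ\tau^B=\varphi_u\circ\tau^A$, and since that was already verified inside the proof of Theorem~\ref{th:det.red}, no fresh work is needed there either; weak right invariance of $\varphi$ is used only through this identity, exactly as in Theorem~\ref{th:det.red}.
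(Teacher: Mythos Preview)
Your proposal is correct and follows essentially the same route as the paper: the paper defines the very same map $\xi(\varphi_u^c)=(\sigma_{ux_1}^B,\ldots,\sigma_{ux_m}^B,\sigma_u^B\circ\tau^B)$, invokes the identity $\sigma_u^B(a\varphi)=\varphi_u(a)$ from Theorem~\ref{th:det.red}, and verifies well-definedness and injectivity via exactly the componentwise equivalences you wrote down (using the chain through $\lBrack{\cal A}\rBrack$ and $\lBrack{\cal B}\rBrack$ for the last coordinate, which amounts to your identity $\sigma_u^B\circ\tau^B=\varphi_u\circ\tau^A$). Your write-up is in fact slightly more explicit than the paper's, which simply asserts that the homomorphism conditions are ``easily verified''.
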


\begin{proof}
Define a function $\xi :A_\varphi^c\to B_N^c$ by $\xi (\varphi_u^c)=\sigma_u^{B,c}=(\sigma_{ux_1}^B,\ldots
,\sigma_{ux_m}^B,\sigma_u^B\circ \tau^B)$, for each $u\in X^*$.~As~in~the
proof of Theorem \ref{th:det.red} we obtain that $\sigma_u^B(a\varphi)=\varphi_u(a)$,
for all $u\in X^*$ and $a\in A$, and by this it follows that
\[
\varphi_{ux_i}=\varphi_{vx_i}\ \ \iff\ \ (\forall a\in A)\ \varphi_{ux_i}(a)=\varphi_{vx_i}(a)
\ \ \iff\ \ (\forall a\in A)\ \sigma_{ux_i}^B(a\varphi)=\sigma_{vx_i}^B(a\varphi)
\ \ \iff \ \ \sigma_{ux_i}^B=\sigma_{vx_i}^B ,
\]
and also,
\begin{align*}
\varphi_u\circ \tau^A=\varphi_v\circ \tau^A \ \ &\iff\ \ \lBrack {\cal A}_\varphi
\rBrack (u)=\lBrack {\cal A}_\varphi
\rBrack (v)\ \ \iff\ \ \lBrack {\cal A}
\rBrack (u)=\lBrack {\cal A} \rBrack (v)\\
&\iff\ \ \lBrack {\cal B}
\rBrack (u)=\lBrack {\cal B}
\rBrack (v)\ \ \iff\ \ \lBrack {\cal B}_N
\rBrack (u)=\lBrack {\cal B}_N
\rBrack (v)\ \ \iff\ \ \sigma_u^B\circ \tau^B=\sigma_v^B\circ \tau^B,
\end{align*}
for all $u,v\in X^*$ and $x_i\in X$, and therefore, $\varphi_u^c=\varphi_v^c$
if and only if $\sigma_u^{B,c}=\sigma_v^{B,c}$.~This means that $\xi $ is~a~well-defined
and injective function.~It is clear that $\xi $ is also surjective, and iot
can be easily verified that it is a homomorphism. Hence, $\xi $ is an isomorphism
of  ${\cal A}_\varphi^c$ onto~${\cal B}_N^c$.
\end{proof}

\begin{theorem}\label{th:Af.AN.Afc}\it
Let ${\cal A}=(A,\sigma^A,\delta^A,\tau^A)$ be a fuzzy automaton over an
alphabet $X=\{x_1,\ldots ,x_m\}$ and let $\varphi $~be a weakly right invariant fuzzy quasi-order~on~$A$.~Then
the  automaton ${\cal A}_\varphi^c$ is a homomorphic~image~both~of~${\cal A}_\varphi$~and~${\cal A}_N^c$, and consequently, $|{\cal A}_\varphi^c|\leqslant |{\cal A}_\varphi|$ and $|{\cal A}_\varphi^c|\leqslant |{\cal A}_N^c|$.
\end{theorem}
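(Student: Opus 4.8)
The plan is to exhibit two surjective homomorphisms of crisp-deterministic fuzzy automata, one from ${\cal A}_\varphi$ onto ${\cal A}_\varphi^c$ and one from ${\cal A}_N^c$ onto ${\cal A}_\varphi^c$; since a surjection cannot increase cardinality, the inequalities $|{\cal A}_\varphi^c|\leqslant|{\cal A}_\varphi|$ and $|{\cal A}_\varphi^c|\leqslant|{\cal A}_N^c|$ follow immediately. For the first map I would put $\xi(\varphi_u)=\varphi_u^c$. It is well defined and onto because $\varphi_u^c$ is assembled entirely from $\varphi_u$, and it is a homomorphism: $\xi(\varphi_\varepsilon)=\varphi_\varepsilon^c$, $\xi(\delta_\varphi(\varphi_u,x))=\xi(\varphi_{ux})=\varphi_{ux}^c=\delta_\varphi^c(\varphi_u^c,x)$, and $\tau_\varphi^c(\xi(\varphi_u))=\varphi_u\circ\tau^A=\tau_\varphi(\varphi_u)$. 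This already shows that ${\cal A}_\varphi^c$ is a homomorphic image of ${\cal A}_\varphi$, and it uses nothing about $\varphi$ beyond the definitions.

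For the second map the natural candidate is $\eta(\sigma_u^{A,c})=\varphi_u^c$. Granting well-definedness, surjectivity is clear and the homomorphism conditions are checked exactly as for $\xi$, using $\varphi_u\circ\tau^A=\lBrack{\cal A}\rBrack(u)=\sigma_u^A\circ\tau^A$ (Theorem~\ref{th.cdffa}) for the terminal part: $\eta(\sigma_\varepsilon^{A,c})=\varphi_\varepsilon^c$, $\eta(\delta_N^c(\sigma_u^{A,c},x))=\eta(\sigma_{ux}^{A,c})=\varphi_{ux}^c=\delta_\varphi^c(\eta(\sigma_u^{A,c}),x)$, and $\tau_\varphi^c(\eta(\sigma_u^{A,c}))=\varphi_u\circ\tau^A=\sigma_u^A\circ\tau^A=\tau_N^c(\sigma_u^{A,c})$. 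So the whole theorem reduces to the implication $\sigma_u^{A,c}=\sigma_v^{A,c}\ \Longrightarrow\ \varphi_u^c=\varphi_v^c$, which is the one genuinely nontrivial point.

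To prove it, I would unpack the hypothesis as $\sigma_{ux_i}^A=\sigma_{vx_i}^A$ for every $i$ together with $\sigma_u^A\circ\tau^A=\sigma_v^A\circ\tau^A$. Since ${\cal A}_N$ is deterministic, the first clause propagates to $\sigma_{uw}^A=\sigma_{vw}^A$ for all nonempty $w$, whence $\lBrack{\cal A}\rBrack(uw)=\lBrack{\cal A}\rBrack(vw)$ for every $w\in X^*$. The last coordinates of $\varphi_u^c$ and $\varphi_v^c$ agree because $\varphi_u\circ\tau^A=\lBrack{\cal A}\rBrack(u)=\lBrack{\cal A}\rBrack(v)=\varphi_v\circ\tau^A$ by Theorem~\ref{th.cdffa}, so the remaining task is $\varphi_{ux_i}=\varphi_{vx_i}$ for each $i$. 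I would pass to the afterset automaton ${\cal B}={\cal A}/\varphi$ and use, as in the proof of Theorem~\ref{th:det.red}, the identity $\sigma_w^{B}(a\varphi)=\varphi_w(a)$ for all $w$ and $a$, so that $\varphi_{ux_i}=\varphi_{vx_i}$ is equivalent to $\sigma_{ux_i}^{B}=\sigma_{vx_i}^{B}$; in view of Theorem~\ref{th:Afc.afters} (${\cal A}_\varphi^c\cong{\cal B}_N^c$) it then suffices that $\sigma_u^{A,c}\mapsto\sigma_u^{B,c}$ be a well-defined surjective homomorphism of ${\cal A}_N^c$ onto ${\cal B}_N^c$.

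The residual obstacle, and the only step that is not bookkeeping, is exactly the implication $\sigma_{ux_i}^A=\sigma_{vx_i}^A\Rightarrow\varphi_{ux_i}=\varphi_{vx_i}$: one has to recover $\varphi_{ux_i}$ — the fuzzy set itself, not merely the language it induces — from the data $\sigma_{ux_i}^A$ that $ux_i$ and $vx_i$ share, and this is where reflexive weak right invariance is really used rather than just the quasi-order property. I would argue by induction on prefixes via the recursion $\varphi_{w'x}=\varphi_{w'}\circ\delta^A_x\circ\varphi$ and idempotency of $\varphi$, using weak right invariance in the form $\varphi\circ\tau^A_p=\tau^A_p$ for all $p$ (equation~(\ref{eq:wri.eq})); this yields $\varphi_w\circ\tau^A_p=\lBrack{\cal A}\rBrack(wp)$ and squeezes $\varphi_{ux_i}$ between $\sigma_{ux_i}^A\circ\varphi$ from below (reflexivity gives $\varphi_u\geqslant\sigma_u^A$) and $\bigwedge_p\,(\tau^A_p\to\lBrack{\cal A}\rBrack(ux_ip))$ from above, both of which depend only on $\sigma_{ux_i}^A$ and on the common behaviours $\lBrack{\cal A}\rBrack(ux_ip)=\lBrack{\cal A}\rBrack(vx_ip)$. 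The delicate part is to turn this squeeze into an actual identification of $\varphi_{ux_i}$ in terms of $\sigma_{ux_i}^A$; once that is done, $\varphi_{ux_i}=\varphi_{vx_i}$ follows, $\eta$ is well defined, and $|{\cal A}_\varphi^c|\leqslant|{\cal A}_N^c|$ drops out from the surjectivity of $\eta$.
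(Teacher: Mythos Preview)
Your first map $\xi:\varphi_u\mapsto\varphi_u^c$ is correct and in fact more direct than the paper's route, which passes through the afterset automaton ${\cal B}={\cal A}/\varphi$ and invokes the isomorphisms ${\cal A}_\varphi\cong{\cal B}_N$ and ${\cal A}_\varphi^c\cong{\cal B}_N^c$ (Theorems~\ref{th:det.red} and~\ref{th:Afc.afters}) together with the result from \cite{JIC.11} that ${\cal B}_N^c$ is a homomorphic image of ${\cal B}_N$. The content is the same; you simply avoid the detour.

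The second half, however, has a genuine gap. Your squeeze
\[
\sigma_{ux_i}^A\circ\varphi\ \leqslant\ \varphi_{ux_i}\ \leqslant\ \bigwedge_{p\in X^*}\bigl(\tau_p^A\to\lBrack{\cal A}\rBrack(ux_ip)\bigr)
\]
is valid, but nothing forces the two bounds to coincide, so it cannot by itself identify $\varphi_{ux_i}$ as a function of $\sigma_{ux_i}^A$; you acknowledge this yourself. The paper sidesteps the issue by observing that ${\cal A}_N^c\cong{\cal A}_\Delta^c$ and invoking Theorem~\ref{th:Afc.Afc} with the pair $\Delta\leqslant\varphi$. The engine of that theorem is the identity $\varphi_w=\sigma_w^A\circ\varphi$ for every $w\in X^*$: once you have it, $\sigma_{ux_i}^A=\sigma_{vx_i}^A$ yields $\varphi_{ux_i}=\sigma_{ux_i}^A\circ\varphi=\sigma_{vx_i}^A\circ\varphi=\varphi_{vx_i}$ in one line, and your $\eta$ is well defined. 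You should note, though, that this identity is established (in the proof of Theorem~\ref{th:hom.im}) from right invariance via~(\ref{eq:ri.d.eq}), not from weak right invariance alone; so under the weakly-right-invariant hypothesis as stated, the paper's appeal to Theorem~\ref{th:Afc.Afc} is itself formally out of scope, and the obstacle you isolated is real rather than a mere bookkeeping lapse.
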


\begin{proof}
According to Theorems \ref{th:det.red} and \ref{th:Afc.afters},
${\cal A}_\varphi$ is isomorphic to ${\cal B}_N$, and ${\cal A}_\varphi^c$ is~isomorphic to ${\cal B}_N^c$,~and by Theorem 3.4 \cite{JIC.11},
${\cal B}_N^c$ is a homomorphic image of ${\cal B}_N$.~Therefore, ${\cal A}_\varphi^c$ is a homomorphic~image~of~${\cal A}_\varphi$.

On the other hand, ${\cal A}_N^c$ is isomorphic to~${\cal A}_\Delta^c$,~where $\Delta $ is the crisp equality on $A$, and
by Theorem \ref{th:Afc.Afc}, ${\cal A}_\varphi^c$ is a homomorphic image of ${\cal A}_N^c$.
\end{proof}

\section{Algorithms and computational examples}\label{sec:alg}

Let $c_\vee$, $c_\wedge$, $c_\otimes $ and $c_\to $ be respectively computation times of the operations $\vee $, $\wedge $, $\otimes $ and $\to $ in $\cal L$.~In~particular, if $\cal L$ is linearly ordered, we can assume that $c_\vee=c_\wedge=1$, and when $\cal L$ is the G\"odel structure, we can also assume that $c_\otimes =c_\to=1$.

\begin{algorithm}[{{\emph{Construction of the automaton ${\cal A}_\varphi$}}}]\label{alg:A.phi}\rm The input of this algorithm are a fuzzy finite automa\-ton ${\cal A}=(A,\sigma^A,\delta^A,\tau^A)$ with $n$ states, over a finite alphabet $X$ with $m$ letters, and a fuzzy
relation~$\varphi $~on~$A$,~and the output is the crisp-deterministic fuzzy
auto\-maton ${\cal A}_\varphi=(A_\varphi,\varphi_\varepsilon,\delta_\varphi,\tau_\varphi)$.

The procedure is to construct the \emph{transition tree} of ${\cal A}_\varphi$ directly from ${\cal A}$, and during this procedure we use pointers $s(\cdot)$ which points vertices of the tree under construction to the corresponding integers.~The~transition tree of ${\cal A}_\varphi$ is constructed inductively as follows:
\begin{itemize}
\item[(A1)] The root of the tree is $\varphi_\varepsilon=\sigma^A\circ \varphi$, and we put $T_0=\{\varphi_\varepsilon\}$ and
$s(\varphi_\varepsilon)=1$,
and we compute the value $\tau_\varphi(\varphi_\varepsilon)=\varphi_\varepsilon\circ \tau^A$.
\item[(A2)] After the $i$th step let a tree $T_i$ have been constructed,
and vertices in $T_i$ have been labelled either 'closed' or 'non-closed'.
The meaning of these two terms will be made clear in the sequel.
\item[(A3)] In the next step we construct a tree $T_{i+1}$ by enriching $T_i$ in the following way: for any non-closed~leaf $\varphi_u$ occuring~in $T_i$, where $u\in X^*$, and any $x\in X$ we add a vertex $\varphi_{ux}=\varphi_{u}\circ\delta_x^A\circ \varphi $ and an~edge~from~$\varphi_u$ to~$\varphi_{ux}$ la\-belled~by $x$.~Simultaneously, we check whether $\varphi_{ux}$ is a fuzzy set that has already been~constructed.~If~it~is~true,~if $\varphi_{ux}$ is equal to some previously computed $\varphi_v$, we mark $\varphi_{ux}$ as closed and~set $s(\varphi_{ux})=s(\varphi_{v})$.~Otherwise, we compute the value $\tau_\varphi(\varphi_{ux})=\varphi_{ux}\circ \tau^A$~and set $s(\varphi_{ux})$ to~be~the~next~unassigned integer.~The procedure terminates when all leaves are marked~closed.
\item[(A4)] When the transition tree of ${\cal A}_\varphi$ is constructed, we erase
all closure marks and glue leaves to interior vertices with the same pointer value.~The diagram that results is the transition graph of ${\cal A}_\varphi$.
\end{itemize}
\end{algorithm}

When $\varphi $ is taken to be the crisp equality on $A$, Algorithm \ref{alg:A.phi} gives the Nerode automaton ${\cal A}_N$ of $\cal A$.

The above described procedure does not necessarily terminate in a finite number of steps, since~the collection $\{\varphi_u\}_{u\in X^*}$ may be infinite.~However, in cases when this collection is finite, the procedure will terminate in a~finite number of steps, after computing all its members.~For instance, this holds if the~subsemi\-ring
${\cal L}^*(\delta^A,\sigma^A,\varphi)$~of~${\cal L}^*$~gen\-er\-ated by all membership values taken by $\delta^A$, $\sigma^A$ and $\varphi $ is finite (but not~only~in this case). If $k$ denotes~the number of elements of this subsemiring, then~the collection $\{\varphi_u\}_{u\in X^*}$ can have at most $k^n$ different members.

The tree that is constructed by this algorithm is a full $m$-ary tree.~At the end of the algorithm, the~tree can contain at most $k^n$ internal vertices,~and~according
to the well-known theorem on full $m$-ary trees,~the total number of vertices is at most $mk^n+1$.~In~the construction of any single vertice we can first perform
a composition of the form $\varphi_{u}\circ\delta_x^A$, and then a composition
of the form $(\varphi_{u}\circ\delta_x^A)\circ \varphi $, both of which have computation time $O(n^2(c_{\otimes}+c_{\vee}))$.~Therefore,  the computation time for all performed compositions~is $O(mn^2k^{n}(c_{\otimes}+c_{\vee}))$.~Moreover, the tree $T$ has at most $mk^n$ edges, and the computation time of their forming is $O(mk^{n})$.

The time-consuming
part of the~proce\-dure is the check whether the just computed fuzzy set is a copy of some previously computed fuzzy set.~After we have constructed the $j$th fuzzy set, for~some $j\in \Bbb N$ such that $2\leqslant j\leqslant mk^n+1$, we compare it with the~previously constructed fuzzy sets which correspond
to non-closed vertices, whose number is at most $\min\{j-1,k^n\}$.~Therefore, the total number of performed~checks~does not exceed $1+2+\cdots +k^n+(m-1)k^n\cdot k^n=\tfrac12 k^n(k^n+1)+(m-1)k^{2n}$.~As the computation time of any single
check is $O(n)$, the~computation time for all performed checks is $O(mnk^{2n})$.~Summarizing all the above we conclude that the computation time of the
whole algorithm is $O(mnk^{2n})$, the same as~the~computation time of the part in
which for any newly-constructed fuzzy set we check whether it is a copy of
some previously computed fuzzy set.

Note that the number $k$ is characteristic of the fuzzy finite automaton $\cal A$, and it is not a general~characteristic of the semiring ${\cal L}^*$ and its finitely generated subsemirings.~However, if we consider fuzzy automata over a finite lattice, then we can assume that $k$ is the number of elements of this lattice.~Moreover, if~$\cal L$~is the G\"odel structure, then the set of all membership values taken by the fuzzy relations $\{\delta_x^A\}_{x\in X}$ and $\varphi $, and~the fuzzy set $\sigma^A$ is a subsemiring of ${\cal L}^*$, and the number of these values does not exceed $mn^2+n$, so we can use that number instead of $k$.

In a similar way we can provide the following algorithm which constructs the automaton ${\cal A}^\psi$, for some fuzzy relation $\psi $ on $A$, and analyze its computation time.

\begin{algorithm}[{{\emph{Construction of the automaton ${\cal A}^\psi$}}}]\label{alg:A.psi}\rm The input of this algorithm is a fuzzy finite automa\-ton ${\cal A}=(A,\sigma^A,\delta^A,\tau^A)$ with $n$ states, over a finite alphabet $X$ with $m$ letters, and a fuzzy
relation~$\psi $~on~$A$,~and the output is the crisp-deterministic fuzzy
auto\-maton ${\cal A}^\psi=(A^\psi,\psi^\varepsilon,\delta^\psi,\tau^\psi)$.

The procedure is to construct the \emph{transition tree} of ${\cal A}^\psi$ directly from ${\cal A}$, and during this procedure we use pointers $s(\cdot)$ which points vertices of the tree under construction to the corresponding integers.~The~transition tree of of ${\cal A}^\psi$  is constructed inductively as follows:
\begin{itemize}
\item[(A1)] The root of the tree is $\psi^\varepsilon=\psi\circ \tau^A$, and we put $T_0=\{\psi^\varepsilon\}$ and $s(\psi^\varepsilon)=1$, and we compute the value $\tau^\psi(\psi^\varepsilon)=\sigma^A\circ \psi^\varepsilon$.
\item[(A2)] After the $i$th step let a tree $T_i$ have been constructed, and vertices in $T_i$ have been labelled either 'closed' or 'non-closed'. The meaning of these two terms will be made clear in the sequel.
\item[(A3)] In the next step we construct a tree $T_{i+1}$ by enriching $T_i$ in~the following way: for any non-closed~leaf $\psi^u$~occuring~in $T_i$, where $u\in X^*$, and each $x\in X$ we add a vertex $\psi^{xu}=\psi\circ\delta_x^A\circ \psi^{u}$ and an~edge~from~$\psi^u$ to~$\psi^{xu}$ la\-belled~by $x$.~Simultaneously, we check whether $\psi^{xu}$ is a fuzzy set that has already been~constructed.~If~it~is
true,~if $\psi^{xu}$ is equal to some previously computed $\psi^v$, we mark $\psi^{xu}$ as closed~and~set $s(\psi^{xu})=s(\psi^{v})$.~Otherwise, we compute the value $\tau^\psi(\psi^{xu})=\sigma^A\circ \psi^{xu}$~and set $s(\psi^{xu})$ to be the next~unassigned integer.~The procedure terminates when all leaves are marked~closed.
\item[(A4)] When the transition tree of ${\cal A}^\psi$ is constructed, we erase all closure marks and glue leaves to interior vertices with the same pointer value.~The diagram that results is the transition graph of ${\cal A}^\psi$.
\end{itemize}
\end{algorithm}

When $\psi $ is the crisp equality on $A$, Algorithm \ref{alg:A.psi} produces the reverse Nerode automaton ${\cal A}_{\overline N}$ of $\cal A$.

The conditions under which the above procedure terminates in a finite number of steps and its compu\-tation time can be analyzed analogously as in Algorithm
\ref{alg:A.phi}.~The only difference is that instead of the~sub\-semi\-ring
${\cal L}^*(\delta^A,\sigma^A,\varphi)$ here we consider the subsemi\-ring
${\cal L}^*(\delta^A,\tau^A,\psi)$ of ${\cal L}^*$ generated~by~all~member\-ship values taken by $\delta^A$, $\tau^A$ and $\psi $.

\smallskip

As we have said earlier, algorithms for computing the greatest right and left invariant fuzzy \text{quasi-orders} and the greatest weakly right and left invariant fuzzy quasi-orders on a fuzzy finite automaton~were~provided in \cite{SCI.14}. Here we present these algorithms and we perform an analysis of their computation time.

\begin{algorithm}[{{\emph{Computation of the greatest right invariant fuzzy quasi-order}}}]\label{alg:gri}\rm The input of this algorithm is a fuzzy finite
automaton ${\cal A}=(A,\sigma^A,\delta^A,\tau^A)$ with $n$ states, over a finite alphabet $X$ with $m$ letters.~The algorithm computes the greatest right invariant fuzzy~quasi-order $\varphi^{\mathrm{ri}}$ on $\cal A$.

The procedure  constructs the sequence of fuzzy quasi-orders $\{\varphi_k\}_{k\in \Bbb N}$, in the~following way:
\begin{itemize}
\item[(A1)] In the first step we set $\varphi_1=\tau^A/\tau^A$.
\item[(A2)] After the $k$th step let  $\varphi_k$  be the fuzzy quasi-order  that has been constructed.
\item[(A3)] In the next step we construct the fuzzy quasi-order $\varphi_{k+1}$ by means of the formula
    \begin{equation}\label{eq:phi.k.plus.1}
    \varphi_{k+1}=\varphi_k\wedge \bigl[ \bigwedge_{x\in X}(\delta_x^A\circ \varphi_k)/(\delta_x^A\circ \varphi_k)\bigr].
    \end{equation}
\item[(A4)] Simultaneously, we check whether $\varphi_{k+1}=\varphi_k$.
\item[(A5)] When we find the smallest number $s$ such that $\varphi_{s+1}=\varphi_s$, the procedure of constructing the sequence $\{\varphi_k\}_{k\in \Bbb N}$ terminates and $\varphi^{\mathrm{ri}}=\varphi_s$.
\end{itemize}
\end{algorithm}
If the subalgebra ${\cal L}(\delta^A,\tau^A)$ of $\cal L$, generated by all membership values taken by $\delta^A$ and $\tau^A$, satisfies DCC, the algorithm terminates in a finite number of steps.

Consider the computation time of this algorithm.~In (A1) we compute $\tau^A/\tau^A$, which can be done~in~time $O(n^2c_{\to})$.~In (A3) we first compute all compositions  $\delta_x^A\circ \varphi_k$, and if these
computations are performed~according to the definition of composition of fuzzy relations, their computation time is $O(mn^3(c_{\otimes}+c_{\vee}))$.~Then we compute $\varphi_{k+1}$ by means of (\ref{eq:phi.k.plus.1}), and the computation time of this part is $O(mn^3(c_{\to}+c_{\wedge}))$. Thus, the total computation time of (A3) is $O(mn^3(c_{\to}+c_{\wedge}+c_{\otimes}+c_{\vee}))$.~In (A4),
the computation time to check whether $\varphi_{k+1}=\varphi_k$ is $O(n^2)$.

The hardest problem is to estimate the number of steps, in the case when it is finite.~Consider fuzzy relations $\varphi_k$ as fuzzy matrices.~After each step in the construction of the sequence $\{\varphi_k\}_{k\in \Bbb N}$ we check whether some entry has changed its value, and the algorithm terminates after the first step in which there was no change.~Suppose that ${\cal L}(\delta^A,\tau^A)$~satisfies DCC.~Then $\{\{\varphi_k(a,b)\}_{k\in \Bbb N}\mid (a,b)\in A^2\}$ is a finite collection of finite sequences,~so there exists $s\in \Bbb N$ such that the number of different elements in each of these sequences is less than or equal to $s$.~As the sequence $\{\varphi_k\}_{k\in \Bbb N}$ is descending, each entry can change its value at most $s-1$ times, and~the total number of changes is less than or equal to $(s-1)(n^2-n)$ (the diagonal values must~always~be~$1$). Therefore, the algorithm~ter\-mi\-nates after at most $(s-1)(n^2-n)+2$ steps
(in the first and last step values do not change).

Summing up, we get that the total computation time for the whole algorithm is $O(smn^5(c_{\to}+c_{\wedge}+c_{\otimes}+c_{\vee}))$, and hence, the algorithm is polynomial-time.

Let us note that the number $s$ is characteristic of the sequence $\{\varphi_k\}_{k\in
\Bbb N}$, and in general it is not characteristic of the algebra ${\cal L}(\delta^A,\tau^A)$.~However, in some cases the number of different elements in all descending chains in ${\cal L}(\delta^A,\tau^A)$ may have an upper bound $s$.~For example,
if the algebra ${\cal L}(\delta^A,\tau^A)$ is finite, then we~can assume that $s$ is the number of elements of this algebra.~In particular, if $\cal L$ is the G\"odel structure, then the only~values that can be taken by fuzzy relations $\{\varphi_k\}_{k\in \Bbb N}$ are $1$ and those taken by $\delta^A$ and $\tau^A$.~In this case, if $j$ is the number of all values taken by  $\delta^A$ and $\tau^A$, then the
algorithm~ter\-mi\-nates after at most $j(n^2-n)+2$ steps, and total
computation time  is $O(jmn^5)$. Since $j\leqslant mn^2+n^2$, total computation time can also be roughly expressed as $O(m^2n^7)$.

\begin{algorithm}[{{\emph{Computation of the greatest left invariant fuzzy quasi-order}}}]\label{alg:gli}\rm The input of this algorithm is a fuzzy finite
automaton ${\cal A}=(A,\sigma^A,\delta^A,\tau^A)$ with $n$ states, over a finite alphabet $X$ with $m$ letters.~The~algorithm computes the greatest left invariant fuzzy~quasi-order $\psi^{\mathrm{li}}$ on $\cal A$.

The procedure  constructs the sequence of fuzzy quasi-orders $\{\psi_k\}_{k\in \Bbb N}$, in the~following way:
\begin{itemize}
\item[(A1)] In the first step we set $\psi_1=\sigma^A\backslash \sigma^A$.
\item[(A2)] After the $k$th step let $\psi_k$  be the fuzzy quasi-order  that has been constructed.
\item[(A3)] In the next step we construct the fuzzy quasi-order $\psi_{k+1}$ by means of the formula
    \[
    \psi_{k+1}=\psi_k\wedge \bigl[ \bigwedge_{x\in X}(\psi_k\circ \delta_x^A)\backslash (\psi_k\circ \delta_x^A)\bigr].
    \]
\item[(A4)] Simultaneously, we check whether $\psi_{k+1}=\psi_k$.
\item[(A5)] When we find the smallest number $s$ such that $\psi_{s+1}=\psi_s$, the procedure of constructing the sequence $\{\psi_k\}_{k\in \Bbb N}$ terminates and $\psi^{\mathrm{li}}=\psi_s$.
\end{itemize}
\end{algorithm}
The conditions under which this procedure terminates in a finite number of steps and its computation time can be analyzed analogously as in Algorithm \ref{alg:gri}.

\begin{algorithm}[{{\emph{Computation of the greatest weakly right invariant fuzzy quasi-order}}}]\label{alg:gwri}\rm The input of this algorithm is a fuzzy finite automaton ${\cal A}=(A,\sigma^A,\delta^A,\tau^A)$ with $n$ states, over a finite alphabet $X$ with $m$ letters.~The algorithm computes the greatest weakly right invariant fuzzy quasi-order
$\varphi^{\mathrm{wri}}$ on $\cal A$.

The procedure consists of two parts:
\begin{itemize}
\item[(A1)] First we compute all members of the family $\{\tau_u^A\mid u\in X^*\}$, using Algorithm \ref{alg:A.psi}.
\item[(A2)] Then we compute $\varphi^{\mathrm{wri}}$ by means of formula \[
\varphi^{\mathrm{wri}}=\bigwedge_{u\in X^*} \tau_u^A/\tau_u^A .
\]
\end{itemize}
\end{algorithm}
Clearly, this procedure terminates in a finite number of steps under the same conditions as Algorithm~\ref{alg:A.psi}. Under these conditions the computation time of the part (A1) is $O(mnk^{2n})$, and since it dominates over the computation time of (A2), which is $O(k^{n}c_\land+n^2c_\to)$, we conclude that the computation time of the whole algorithm is $O(mnk^{2n})$, the same as for Algorithms \ref{alg:A.phi} and \ref{alg:A.psi}.

Analogous analysis can be performed for the following algorithm that computes the greatest weakly left invariant fuzzy quasi-order on a fuzzy automaton.

\begin{algorithm}[{{\emph{Computation of the greatest weakly left invariant fuzzy quasi-order}}}]\label{alg:gwli}\rm The input of this algorithm is a fuzzy finite automaton ${\cal A}=(A,\sigma^A,\delta^A,\tau^A)$ over a finite alphabet $X$.~The algorithm computes the greatest weakly left invariant fuzzy quasi-order
$\psi^{\mathrm{wli}}$ on $\cal A$.

The procedure consists of two parts:
\begin{itemize}
\item[(A1)] First we compute all members of the family $\{\sigma_u^A\mid u\in X^*\}$, using Algorithm \ref{alg:A.phi}.
\item[(A2)] Then we compute $\psi^{\mathrm{wli}}$ by means of formula \[
\psi^{\mathrm{wli}}=\bigwedge_{u\in X^*} \sigma_u^A\backslash \sigma_u^A .
\]
\end{itemize}
\end{algorithm}

\smallskip

Finally, we turn to the construction of the children automaton ${\cal A}_\varphi^c$.

\begin{algorithm}[{{\emph{Construction of the children automaton ${\cal A}_\varphi^c$}}}]\label{alg:A.phi.c}\rm~The input of this algorithm is a fuzzy~\text{finite} automaton ${\cal A}=(A,\sigma^A,\delta^A,\tau^A)$ with $n$ states, over a finite alphabet $X=\{x_1,\ldots,x_m\}$ with $m$ letters,
and the output is the children automaton ${\cal A}_\varphi^c=(A_\varphi^c,\varphi_\varepsilon^c,\delta_\varphi^c,\tau_\varphi^c)$, where $\varphi $ is the greatest weakly right invariant~fuzzy quasi-order or the greatest right invariant fuzzy quasi-order on $\cal A$.

The procedure is to  construct simultaneously the \emph{transition tree}
of ${\cal A}_\varphi$ and the {\it transition graph\/}~of~${\cal A}_\varphi^c$
directly
from ${\cal A}$.~Except the pointer $s(\cdot)$ used in Algorithm \ref{alg:A.phi},
we also use  another pointer $t(\cdot)$ which~points vertices of the transition graph under construction to the corresponding integers.~The transition tree of~${\cal A}_\varphi$ and the transition graph of ${\cal A}_\varphi^c$ are constructed in~the following way:
\begin{itemize}
\item[(A1)] We compute $\varphi $ using one of Algorithms \ref{alg:gwri} and \ref{alg:gri}, and construct the transition tree $T$ of ${\cal A}_\varphi $ using Algorithm \ref{alg:A.phi}.
\item[(A2)] To each non-closed vertex $\varphi_u$ of the tree $T$ we assign a vertex
$\varphi_u^c$ of a graph $G$ as follows: When~all~children $\varphi_{ux_1},
\ldots, \varphi_{ux_m}$ of $\varphi_u$ in the tree $T$ are formed, we form the vertex $\varphi_u^c=(\varphi_{ux_1},\ldots ,\varphi_{ux_m},\varphi_u\circ
\tau^A)$~in~the graph $G$.~Simultaneously, we check whether $\varphi_{u}^c$
is an $(m+1)$-tuple that has already been constructed.~If it is true, if $\varphi_{u}^c$ is equal to some previously computed $\varphi_v^c$, we mark $\varphi_{u}^c$ as closed~and~set~$t(\varphi_{u}^c)=t(\varphi_{v}^c)$. Otherwise,
we put $\tau_\varphi^c(\varphi_u^c)=\varphi_u\circ \tau^A$ and set $t(\varphi_u^c)$ to be the next integer that has not been used~as~a~value
for $t(\cdot)$.
\item[(A3)] For each non-closed vertex $\varphi_u^c$ of the graph  $G$ and each $x\in X$, if $\varphi_v$
is a non-closed vertex in $T$~such~that $s(\varphi_{ux})=s(\varphi_v)$,~in
the graph $G$ we add an edge from $\varphi_u^c$ to $\varphi_v^c$ labelled by $x$.
\item[(A4)] When the graph $G$ is constructed, we glue closed vertices to non-closed vertices with the same~point\-er value, and erase closure marks.~The diagram that results is the transition graph of ${\cal A}_\varphi^c$.
\end{itemize}
\end{algorithm}
According to Theorems \ref{th:det.red} and \ref{th:Afc.afters}, ${\cal A}_\varphi
$ is isomorphic to ${\cal B}_N$, where ${\cal B}={\cal A}/\varphi $ is the
afterset fuzzy~automa\-ton of $\cal A$ with respect to $\varphi $, and  ${\cal A}_\varphi^c$ is isomorphic to ${\cal B}_N^c$, and by Theorem 3.7 \cite{JIC.11},
${\cal B}_N^c$ is finite if and~only if ${\cal B}_N$ is finite.~Therefore,
${\cal A}_\varphi^c$ is finite if and only if ${\cal A}_\varphi$ is finite.

As in the analysis of Algorithm \ref{alg:A.phi}, consider the case when the
subalgebra ${\cal L}(\delta^A,\sigma^A,\tau^A)$ of $\cal L$ is finite~and
has $l$ elements.~As we have already seen, the computation time of the
part (A1) is  $O(mnl^{2n})$.

When in (A2) we construct a vertex of the graph $G$, as an $(m+1)$-tuple, all its components have~already been computed during construction of the tree $T$, and all that remains to do is to point to them (or~to~their addresses). Hence,
the computation time of forming every single vertex of the graph $G$ is
$O(m)$, and since there are at most $l^n$ vertices in $G$, the computation time of forming all vertices of~$G$~is~$O(ml^n)$.~During~con\-struc\-tion of the tree $T$ we also found which vertices in $T$ are equal as fuzzy sets, and they received the~same pointer values.~Therefore, when we check equality
of two $(m+1)$-tuples in $A_\varphi^c$ we only need to check the equality
of pointer values assigned to their components, and the computation time of such checking~is $O(m)$.~As the total number of checks performed in (A2)
does not exceed $1+2+\cdots +(l^n-1)=\tfrac12 (l^n-1)l^n$, we have that computation time of all checks performed in (A2) is $O(ml^{2n})$.~Finally, in (A3) we form at~most $ml^n$ edges in the graph $G$, and the computation time of this part~is~$O(ml^{n})$.

Therefore, the most expensive part of this algorithm
is (A1), and the computation time of the whole algorithm is  $O(mnl^{2n})$,
the same as for (A1), i.e., the same as for Algorithm 4.7.

\smallskip

Finally we give a remark regarding the computation time of the Brzozowski type algorithm for fuzzy finite automata.~Let ${\cal A}=(A,\sigma^A,\delta^A,\tau^A)$ be a fuzzy finite automaton with $n$ states and $m$ input letters, and suppose~that~the subsemiring ${\cal L}^*(\delta^A,\sigma^A,\tau^A )$ of the semiring ${\cal L}^*$, generated by all membership~values~taken by $\delta^A $,~$\sigma^A $ and~$\tau^A $, is finite and has $k$ elements.~The first round of the application~of~the~Brzozowski~type procedure to~$\cal A$ produces the reverse Nerode automaton of $\cal A$ having at most $k^n$ states, and the computation time of this round is $O(mnk^{2n})$.~The second round may start from an exponentially larger automaton, but despite that, this round produces a minimal crisp-deterministic fuzzy automaton equivalent to $\cal A$, an auto\-maton that is not greater than the Nerode automaton of $\cal A$, which can not have more than $k^n$ states. Thus,~the resulting transition tree can not have more than~$k^n$~internal~vertices, and the  total number of vertices is not greater than $mk^n+1$.~However, computation of any single vertex may be considerably more expensive than in previous algorithms because here we multiply
vectors of size $r$ and matrices of size $r\times r$, where $r\leqslant k^n$.~Therefore, computation of any single vertex of the transition tree in the second round requires time $O(k^{2n}(c_\otimes+c_\lor))$, and the computation time for all vertices is $O(mk^{3n}(c_\otimes+c_\lor))$.
Since the tree has at most $mk^n$ edges, the computation time of their forming
is $O(mk^n)$.

When for any newly-constructed fuzzy set we check whether it is a copy of some previously computed fuzzy set, the total number of performed checks
is $\frac 12 k^n(k^n+1)+(m-1)k^{2n}$, the same as in previous~algorithms, but here any single check has the computation time $O(k^{n})$,
so the computation time for all performed checks is $O(mk^{3n})$.~Hence, the computation time of the whole algorithm is $O(mk^{3n}(c_\otimes+c_\lor))$,
or  $O(mk^{3n})$,~if~the~operations $\otimes $ and $\lor $ can be performed in constant time.~Accordingly, the Brzozowski type algorithm is~somewhat slower than the other algorithms discussed here, but its performances can be improved if instead of the construction of the reverse Nerode automaton~we~use~the~con\-struction of the automaton corresponding to the greatest right invariant or weakly right invariant fuzzy quasi-order on $\cal A$, or the construction of its children automaton.

\smallskip

Now we provide several illustrative computational examples.

\begin{example}\label{ex:N.ri.wri}\rm
Let ${\cal A}$ be a Boolean automaton over the two-element alphabet $X=\{x,y\}$ given by the transition graph shown in Fig.~\ref{fig:N.ri.wri.0}.
\begin{figure}
\begin{center}
\psset{unit=1cm}
\newpsobject{showgrid}{psgrid}{subgriddiv=1,griddots=10,gridlabels=6pt}
\begin{pspicture}(-1,0.3)(4,3.4)
\pnode(0,2){AP1}
\SpecialCoor
\rput(AP1){\cnode{2.7mm}{AP1A1}}
\rput(AP1A1){\scriptsize$a_1$}
\rput([angle=-30,nodesep=25mm,offset=0pt]AP1A1){\cnode[doubleline=true]{3mm}{AP1A2}}
\rput(AP1A2){\scriptsize$a_2$}
\rput([angle=30,nodesep=25mm,offset=0pt]AP1A1){\cnode[doubleline=true]{3mm}{AP1A3}}
\rput(AP1A3){\scriptsize$a_3$}
\rput([angle=180,nodesep=5mm,offset=0pt]AP1A1){\pnode{AP1I}}
\ncline{->}{AP1I}{AP1A1}
\ncarc[arcangle=16]{<-}{AP1A1}{AP1A2}\aput[1pt](.5){\scriptsize $x,y$}
\ncarc[arcangle=16]{<-}{AP1A2}{AP1A1}\aput[1pt](.5){\scriptsize $x$}
\ncarc[arcangle=16]{<-}{AP1A1}{AP1A3}\aput[1pt](.5){\scriptsize $x$}
\ncarc[arcangle=16]{<-}{AP1A3}{AP1A1}\aput[1pt](.5){\scriptsize $y$}
\ncarc[arcangle=16]{<-}{AP1A2}{AP1A3}\aput[1pt](.5){\scriptsize $y$}
\ncarc[arcangle=16]{<-}{AP1A3}{AP1A2}\aput[1pt](.5){\scriptsize $x$}
%
\nccurve[angleA=45,angleB=-45,ncurv=4]{->}{AP1A2}{AP1A2}\aput[0.5pt](.50){\scriptsize $y$}
\NormalCoor
\end{pspicture}
\end{center}
\caption{\scriptsize The transition graph of the fuzzy automaton ${\cal A}$ from Example \ref{ex:N.ri.wri}.}\label{fig:N.ri.wri.0}
\end{figure}
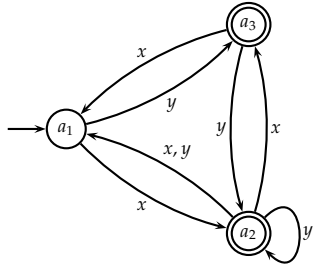

The transition tree and the transition graph of the Nerode automaton ${\cal
A}_N$ of $\cal A$, constructed by means~of Algorithm \ref{alg:A.phi}, are presented in Fig.~\ref{fig:N.ri.wri.1}.~We see that the Nerode automaton
${\cal A}_N$ has 7 states.

By means of Algorithms \ref{alg:gri} and \ref{alg:gwri} we compute the gratest
right invariant fuzzy quasi-order~$\varphi^{\textrm{ri}}$~and~the gratest
weakly right invariant fuzzy quasi-order $\varphi^{\textrm{wri}}$ on $\cal
A$, which are represented by the following Boolean matrices:
\[
\varphi^{\textrm{ri}}=\begin{bmatrix} 1 & 0 & 0 \\ 0 & 1 & 1 \\ 0 & 0 & 1 \end{bmatrix}, \qquad \varphi^{\textrm{wri}}=\begin{bmatrix} 1 & 0 & 0 \\ 1 & 1 & 1 \\ 0 & 0 & 1 \end{bmatrix}.
\]
Then, using Algorithm \ref{alg:A.phi} we construct automata ${\cal A}_{\varphi^{\textrm{ri}}}$ and ${\cal A}_{\varphi^{\textrm{wri}}}$, whose transition trees and graphs are presented in Fig. \ref{fig:N.ri.wri.2} and \ref{fig:N.ri.wri.3}, respectively.~The automaton ${\cal A}_{\varphi^{\textrm{ri}}}$ has 5 states, whereas the automaton ${\cal A}_{\varphi^{\textrm{wri}}}$~has 3 states.~Finally, using Algorithm \ref{alg:A.phi.c} and the transition tree of ${\cal A}_N$ from Fig.~\ref{fig:N.ri.wri.1}, we construct~the~children auto\-maton ${\cal A}^c_N$ of ${\cal A}_N$ presented in Fig.~\ref{fig:N.ri.wri.3}.~Clearly, ${\cal A}^c_N$ is isomorphic to ${\cal A}_{\varphi^{\textrm{ri}}}$.

\begin{figure}
\begin{center}
\psset{unit=1cm}
\newpsobject{showgrid}{psgrid}{subgriddiv=1,griddots=10,gridlabels=6pt}
\begin{pspicture}(-6,-0.2)(11,5)
\rput(-6,5){\textsf{a)}}
\rput(6,5){\textsf{b)}}
\pnode(0,0){C}
\SpecialCoor
\rput(C){\cnode{3.5mm}{SE}}
\rput(SE){\scriptsize$\sigma_\varepsilon$}
\rput([angle=165,nodesep=30mm,offset=0pt]C){\cnode{3.5mm}{SX}}
\rput(SX){\scriptsize$\sigma_x$}
\rput([angle=15,nodesep=30mm,offset=0pt]C){\cnode{3.5mm}{SY}}
\rput(SY){\scriptsize$\sigma_y$}
\rput([angle=150,nodesep=20mm,offset=0pt]SX){\cnode{3.5mm}{SX2}}
\rput(SX2){\scriptsize$\sigma_{\!x^2}$}
\rput([angle=30,nodesep=20mm,offset=0pt]SX){\cnode{3.5mm}{SXY}}
\rput(SXY){\scriptsize$\sigma_{\!xy}$}
\rput([angle=150,nodesep=20mm,offset=0pt]SY){\cnode{3.5mm}{SYX}}
\rput(SYX){\scriptsize$\sigma_{\!yx}$}
\nput[labelsep=1pt]{90}{SYX}{$\blacksquare$}
\rput([angle=30,nodesep=20mm,offset=0pt]SY){\cnode{3.5mm}{SY2}}
\rput(SY2){\scriptsize$\sigma_{\!y^2}$}
\nput[labelsep=1pt]{90}{SY2}{$\blacksquare$}
\rput([angle=120,nodesep=12mm,offset=0pt]SX2){\cnode{3.5mm}{SX3}}
\rput(SX3){\scriptsize$\sigma_{\!x^3}$}
\nput[labelsep=1pt]{90}{SX3}{$\blacksquare$}
\rput([angle=60,nodesep=12mm,offset=0pt]SX2){\cnode{3.5mm}{SX2Y}}
\rput(SX2Y){\scriptsize$\sigma_{\!x^2y}$}
\rput([angle=120,nodesep=12mm,offset=0pt]SXY){\cnode{3.5mm}{SXYX}}
\rput(SXYX){\scriptsize$\sigma_{\!xyx}$}
\rput([angle=60,nodesep=12mm,offset=0pt]SXY){\cnode{3.5mm}{SXY2}}
\rput(SXY2){\scriptsize$\sigma_{\!xy^2}$}
\nput[labelsep=1pt]{90}{SXY2}{$\blacksquare$}
\rput([angle=120,nodesep=10mm,offset=0pt]SX2Y){\cnode{3.5mm}{SX2YX}}
\rput(SX2YX){\scriptsize$\sigma_{\!x^2yx}$}
\nput[labelsep=1pt]{90}{SX2YX}{$\blacksquare$}
\rput([angle=60,nodesep=10mm,offset=0pt]SX2Y){\cnode{3.5mm}{SX2Y2}}
\rput(SX2Y2){\scriptsize$\sigma_{\!x^2y^2}$}
\nput[labelsep=1pt]{90}{SX2Y2}{$\blacksquare$}
\rput([angle=120,nodesep=10mm,offset=0pt]SXYX){\cnode{3.5mm}{SXYX2}}
\rput(SXYX2){\scriptsize$\sigma_{\!xyx^2}$}
\nput[labelsep=1pt]{90}{SXYX2}{$\blacksquare$}
\rput([angle=60,nodesep=10mm,offset=0pt]SXYX){\cnode{3.5mm}{SXYXY}}
\rput(SXYXY){\scriptsize$\sigma_{\!(xy)^2}$}
\nput[labelsep=1pt]{90}{SXYXY}{$\blacksquare$}
\NormalCoor
\ncline{->}{SE}{SX}\aput[1pt](.6){\scriptsize $x$}
\ncline{->}{SE}{SY}\bput[1pt](.6){\scriptsize $y$}
\ncline{->}{SX}{SX2}\aput[1pt](.6){\scriptsize $x$}
\ncline{->}{SX}{SXY}\bput[1pt](.6){\scriptsize $y$}
\ncline{->}{SY}{SYX}\aput[1pt](.6){\scriptsize $x$}
\ncline{->}{SY}{SY2}\bput[1pt](.6){\scriptsize $y$}
\ncline{->}{SX2}{SX3}\aput[1pt](.5){\scriptsize $x$}
\ncline{->}{SX2}{SX2Y}\bput[1pt](.5){\scriptsize $y$}
\ncline{->}{SXY}{SXYX}\aput[1pt](.5){\scriptsize $x$}
\ncline{->}{SXY}{SXY2}\bput[1pt](.5){\scriptsize $y$}
\ncline{->}{SX2Y}{SX2YX}\aput[1pt](.5){\scriptsize $x$}
\ncline{->}{SX2Y}{SX2Y2}\bput[1pt](.5){\scriptsize $y$}
\ncline{->}{SXYX}{SXYX2}\aput[1pt](.5){\scriptsize $x$}
\ncline{->}{SXYX}{SXYXY}\bput[1pt](.5){\scriptsize $y$}
\ncarc[linestyle=dashed, dash=0.7pt 2pt]{->}{SYX}{SE}
\ncarc[linestyle=dashed, dash=0.7pt 2pt]{->}{SY2}{SX}
\ncarc[linestyle=dashed, dash=0.7pt 2pt]{<-}{SXY}{SX3}
\ncarc[linestyle=dashed, dash=0.7pt 2pt]{->}{SXY2}{SXYX}
\ncarc[linestyle=dashed, dash=0.7pt 2pt]{<-}{SX2}{SX2YX}
\ncarc[linestyle=dashed, dash=0.7pt 2pt]{<-}{SXY}{SX2Y2}
\ncarc[linestyle=dashed, dash=0.7pt 2pt,arcangle=20]{->}{SXYX2}{SXYX}
\ncarc[linestyle=dashed, dash=0.7pt 2pt,arcangle=20]{<-}{SXYX}{SXYXY}
\pnode(9.5,0){C2}
\SpecialCoor
\rput(C2){\cnode{3mm}{GSE}}
\rput(GSE){\scriptsize$\sigma_\varepsilon$}
\rput([angle=180,nodesep=5mm,offset=0pt]GSE){\pnode{GI}}
\ncline{->}{GI}{GSE}
%
\rput([angle=135,nodesep=15mm,offset=0pt]GSE){\cnode[doubleline=true]{3.5mm}{GSX}}
\rput(GSX){\scriptsize$\sigma_x$}
\rput([angle=45,nodesep=15mm,offset=0pt]GSE){\cnode[doubleline=true]{3.5mm}{GSY}}
\rput(GSY){\scriptsize$\sigma_y$}
\rput([angle=135,nodesep=15mm,offset=0pt]GSX){\cnode[doubleline=true]{3.5mm}{GSX2}}
\rput(GSX2){\scriptsize$\sigma_{\!x^2}$}
%
\rput([angle=45,nodesep=15mm,offset=0pt]GSX){\cnode[doubleline=true]{3.5mm}{GSXY}}
\rput(GSXY){\scriptsize$\sigma_{\!xy}$}
\rput([angle=90,nodesep=13mm,offset=0pt]GSX2){\cnode[doubleline=true]{3.5mm}{GSX2Y}}
\rput(GSX2Y){\scriptsize$\sigma_{\!x^2y}$}
\rput([angle=90,nodesep=13mm,offset=0pt]GSXY){\cnode[doubleline=true]{3.5mm}{GSXYX}}
\rput(GSXYX){\scriptsize$\sigma_{\!xyx}$}
\ncline{->}{GSE}{GSX}\aput[1pt](.5){\scriptsize $x$}
\ncarc[arcangle=16]{<-}{GSY}{GSE}\aput[1pt](.5){\scriptsize $y$}
\ncline{->}{GSX}{GSX2}\aput[1pt](.5){\scriptsize $x$}
\ncline{->}{GSX}{GSXY}\bput[1pt](.5){\scriptsize $y$}
\ncline{->}{GSX2}{GSXY}\bput[1pt](.5){\scriptsize $x$}
\ncarc[arcangle=16]{<-}{GSE}{GSY}\aput[1pt](.5){\scriptsize $x$}
\ncarc{<-}{GSX}{GSY}\aput[1pt](.5){\scriptsize $y$}
\ncarc[arcangle=16]{->}{GSX2}{GSX2Y}\aput[1pt](.5){\scriptsize $y$}
\ncarc[arcangle=16]{->}{GSX2Y}{GSX2}\aput[1pt](.5){\scriptsize $x$}
\ncline{->}{GSXY}{GSXYX}\bput[1pt](.5){\scriptsize $x,y$}
\ncline{->}{GSX2Y}{GSXY}\bput[1pt](.5){\scriptsize $y$}
\nccurve[angleA=135,angleB=45,ncurv=4]{->}{GSXYX}{GSXYX}\aput[0.5pt](.50){\scriptsize $x,y$}
%
\NormalCoor
\end{pspicture}
\end{center}
\caption{\scriptsize The transition tree (a)) and the transition graph (b)) of the Nerode automaton ${\cal A}_N$ of the fuzzy automaton ${\cal A}$ from Example \ref{ex:N.ri.wri}.}\label{fig:N.ri.wri.1}
\end{figure}
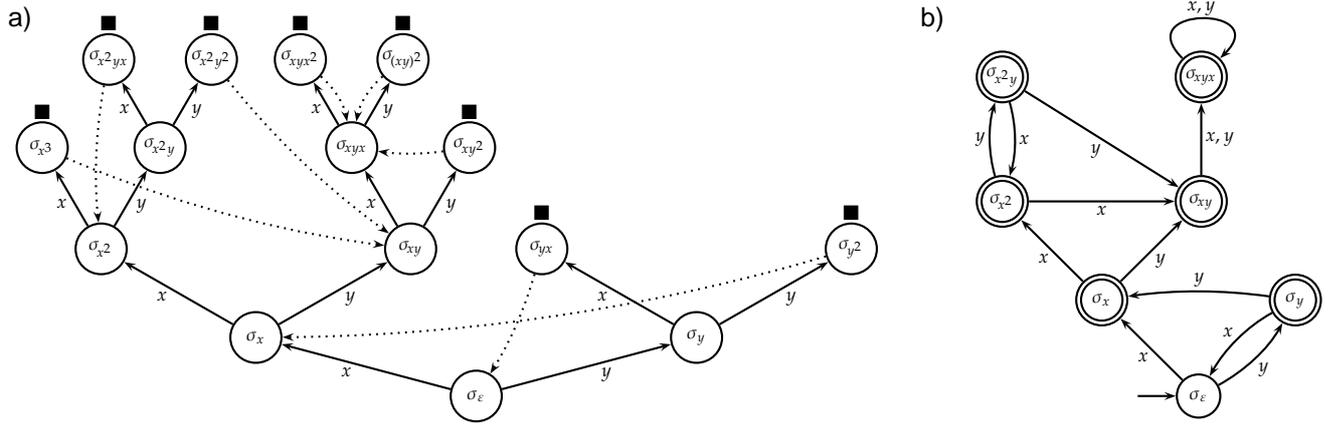

\begin{figure}
\begin{center}
\psset{unit=1cm}
\newpsobject{showgrid}{psgrid}{subgriddiv=1,griddots=10,gridlabels=6pt}
\begin{pspicture}(-6,-0.2)(11,4)
\rput(-6,4){\textsf{a)}}
\rput(4.5,4){\textsf{b)}}
\pnode(-1,0){C}
\SpecialCoor
\rput(C){\cnode{3.5mm}{SE}}
\rput(SE){\scriptsize$\varphi^{\textrm{ri}}_\varepsilon$}
\rput([angle=160,nodesep=22mm,offset=0pt]C){\cnode{3.5mm}{SX}}
\rput(SX){\scriptsize$\varphi^{\textrm{ri}}_x$}
\rput([angle=20,nodesep=22mm,offset=0pt]C){\cnode{3.5mm}{SY}}
\rput(SY){\scriptsize$\varphi^{\textrm{ri}}_y$}
\rput([angle=135,nodesep=15mm,offset=0pt]SX){\cnode{3.5mm}{SX2}}
\rput(SX2){\scriptsize$\varphi^{\textrm{ri}}_{\!x^2}$}
\rput([angle=45,nodesep=15mm,offset=0pt]SX){\cnode{3.5mm}{SXY}}
\rput(SXY){\scriptsize$\varphi^{\textrm{ri}}_{\!xy}$}
\rput([angle=135,nodesep=15mm,offset=0pt]SY){\cnode{3.5mm}{SYX}}
\rput(SYX){\scriptsize$\varphi^{\textrm{ri}}_{\!yx}$}
\nput[labelsep=1pt]{90}{SYX}{$\blacksquare$}
\rput([angle=45,nodesep=15mm,offset=0pt]SY){\cnode{3.5mm}{SY2}}
\rput(SY2){\scriptsize$\varphi^{\textrm{ri}}_{\!y^2}$}
\nput[labelsep=1pt]{90}{SY2}{$\blacksquare$}
\rput([angle=120,nodesep=12mm,offset=0pt]SX2){\cnode{3.5mm}{SX3}}
\rput(SX3){\scriptsize$\varphi^{\textrm{ri}}_{\!x^3}$}
\nput[labelsep=1pt]{90}{SX3}{$\blacksquare$}
\rput([angle=60,nodesep=12mm,offset=0pt]SX2){\cnode{3.5mm}{SX2Y}}
\rput(SX2Y){\scriptsize$\varphi^{\textrm{ri}}_{\!x^2y}$}
\nput[labelsep=1pt]{90}{SX2Y}{$\blacksquare$}
\rput([angle=120,nodesep=12mm,offset=0pt]SXY){\cnode{3.5mm}{SXYX}}
\rput(SXYX){\scriptsize$\varphi^{\textrm{ri}}_{\!xyx}$}
\nput[labelsep=1pt]{90}{SXYX}{$\blacksquare$}
\rput([angle=60,nodesep=12mm,offset=0pt]SXY){\cnode{3.5mm}{SXY2}}
\rput(SXY2){\scriptsize$\varphi^{\textrm{ri}}_{\!xy^2}$}
\nput[labelsep=1pt]{90}{SXY2}{$\blacksquare$}
\NormalCoor
\ncline{->}{SE}{SX}\aput[1pt](.6){\scriptsize $x$}
\ncline{->}{SE}{SY}\bput[1pt](.6){\scriptsize $y$}
\ncline{->}{SX}{SX2}\aput[1pt](.6){\scriptsize $x$}
\ncline{->}{SX}{SXY}\bput[1pt](.6){\scriptsize $y$}
\ncline{->}{SY}{SYX}\aput[1pt](.6){\scriptsize $x$}
\ncline{->}{SY}{SY2}\bput[1pt](.6){\scriptsize $y$}
\ncline{->}{SX2}{SX3}\aput[1pt](.5){\scriptsize $x$}
\ncline{->}{SX2}{SX2Y}\bput[1pt](.5){\scriptsize $y$}
\ncline{->}{SXY}{SXYX}\aput[1pt](.5){\scriptsize $x$}
\ncline{->}{SXY}{SXY2}\bput[1pt](.5){\scriptsize $y$}
\ncarc[linestyle=dashed, dash=0.7pt 2pt]{->}{SYX}{SE}
\ncarc[linestyle=dashed, dash=0.7pt 2pt]{->}{SY2}{SX}
\ncarc[linestyle=dashed, dash=0.7pt 2pt]{<-}{SXY}{SX3}
\ncarc[linestyle=dashed, dash=0.7pt 2pt]{<-}{SX}{SX2Y}
\ncarc[linestyle=dashed, dash=0.7pt 2pt,arcangle=20]{->}{SXYX}{SXY}
\ncarc[linestyle=dashed, dash=0.7pt 2pt,arcangle=20]{<-}{SXY}{SXY2}
\pnode(8,1){C2}
\SpecialCoor
\rput(C2){\cnode{3mm}{GSE}}
\rput(GSE){\scriptsize$\varphi^{\textrm{ri}}_\varepsilon$}
\rput([angle=180,nodesep=5mm,offset=0pt]GSE){\pnode{GI}}
\ncline{->}{GI}{GSE}
\rput([angle=135,nodesep=15mm,offset=0pt]GSE){\cnode[doubleline=true]{3.5mm}{GSX}}
\rput(GSX){\scriptsize$\varphi^{\textrm{ri}}_x$}
\rput([angle=45,nodesep=15mm,offset=0pt]GSE){\cnode[doubleline=true]{3.5mm}{GSY}}
\rput(GSY){\scriptsize$\varphi^{\textrm{ri}}_y$}
\rput([angle=135,nodesep=15mm,offset=0pt]GSX){\cnode[doubleline=true]{3.5mm}{GSX2}}
\rput(GSX2){\scriptsize$\varphi^{\textrm{ri}}_{\!x^2}$}
\rput([angle=45,nodesep=15mm,offset=0pt]GSX){\cnode[doubleline=true]{3.5mm}{GSXY}}
\rput(GSXY){\scriptsize$\varphi^{\textrm{ri}}_{\!xy}$}
\ncline{->}{GSE}{GSX}\aput[1pt](.5){\scriptsize $x$}
\ncarc[arcangle=16]{<-}{GSY}{GSE}\aput[1pt](.5){\scriptsize $y$}
\ncline{->}{GSX}{GSXY}\bput[1pt](.5){\scriptsize $y$}
\ncarc[arcangle=16]{<-}{GSE}{GSY}\aput[1pt](.5){\scriptsize $x$}
\ncarc{<-}{GSX}{GSY}\aput[1pt](.5){\scriptsize $y$}
\ncarc[arcangle=16]{->}{GSX2}{GSX}\aput[1pt](.5){\scriptsize $y$}
\ncarc[arcangle=16]{->}{GSX}{GSX2}\aput[1pt](.5){\scriptsize $x$}
\ncarc{->}{GSX2}{GSXY}\aput[1pt](.5){\scriptsize $x$}
\nccurve[angleA=135,angleB=45,ncurv=4]{->}{GSXY}{GSXY}\aput[0.5pt](.50){\scriptsize $x,y$}
%
\NormalCoor
\end{pspicture}
\end{center}
\caption{\scriptsize The transition tree (a)) and the transition graph (b)) of the automaton ${\cal A}_{\varphi^{\textrm{ri}}}$ for the fuzzy automaton ${\cal A}$ from Example \ref{ex:N.ri.wri}.}\label{fig:N.ri.wri.2}
\end{figure}

\begin{figure}
\begin{center}
\psset{unit=1cm}
\newpsobject{showgrid}{psgrid}{subgriddiv=1,griddots=10,gridlabels=6pt}
\begin{pspicture}(-6,-0.2)(11,3)
\rput(-6,3){\textsf{a)}}
\rput(4.5,3){\textsf{b)}}
\pnode(-1.5,0){C}
\SpecialCoor
\rput(C){\cnode{3.5mm}{SE}}
\rput(SE){\scriptsize$\varphi^{\textrm{wri}}_\varepsilon$}
\rput([angle=145,nodesep=20mm,offset=0pt]C){\cnode{3.5mm}{SX}}
\rput(SX){\scriptsize$\varphi^{\textrm{wri}}_x$}
\rput([angle=35,nodesep=20mm,offset=0pt]C){\cnode{3.5mm}{SY}}
\rput(SY){\scriptsize$\varphi^{\textrm{wri}}_y$}
\rput([angle=120,nodesep=15mm,offset=0pt]SX){\cnode{3.5mm}{SX2}}
\rput(SX2){\scriptsize$\varphi^{\textrm{wri}}_{\!x^2}$}
\nput[labelsep=1pt]{90}{SX2}{$\blacksquare$}
\rput([angle=60,nodesep=15mm,offset=0pt]SX){\cnode{3.5mm}{SXY}}
\rput(SXY){\scriptsize$\varphi^{\textrm{wri}}_{\!xy}$}
\nput[labelsep=1pt]{90}{SXY}{$\blacksquare$}
\rput([angle=120,nodesep=15mm,offset=0pt]SY){\cnode{3.5mm}{SYX}}
\rput(SYX){\scriptsize$\varphi^{\textrm{wri}}_{\!yx}$}
\nput[labelsep=1pt]{90}{SYX}{$\blacksquare$}
\rput([angle=60,nodesep=15mm,offset=0pt]SY){\cnode{3.5mm}{SY2}}
\rput(SY2){\scriptsize$\varphi^{\textrm{wri}}_{\!y^2}$}
\nput[labelsep=1pt]{90}{SY2}{$\blacksquare$}
\NormalCoor
\ncline{->}{SE}{SX}\aput[1pt](.6){\scriptsize $x$}
\ncline{->}{SE}{SY}\bput[1pt](.6){\scriptsize $y$}
\ncline{->}{SX}{SX2}\aput[1pt](.6){\scriptsize $x$}
\ncline{->}{SX}{SXY}\bput[1pt](.6){\scriptsize $y$}
\ncline{->}{SY}{SYX}\aput[1pt](.6){\scriptsize $x$}
\ncline{->}{SY}{SY2}\bput[1pt](.6){\scriptsize $y$}
\ncarc[linestyle=dashed, dash=0.7pt 2pt]{->}{SYX}{SE}
\ncarc[linestyle=dashed, dash=0.7pt 2pt,arcangle=16]{->}{SY2}{SX}
\ncarc[linestyle=dashed, dash=0.7pt 2pt,arcangle=20]{->}{SX2}{SX}
\ncarc[linestyle=dashed, dash=0.7pt 2pt,arcangle=-20]{->}{SXY}{SX}
\pnode(7.5,0.6){C2}
\SpecialCoor
\rput(C2){\cnode{3.5mm}{GSE}}
\rput(GSE){\scriptsize$\varphi^{\textrm{wri}}_\varepsilon$}
\rput([angle=180,nodesep=5mm,offset=0pt]GSE){\pnode{GI}}
\ncline{->}{GI}{GSE}
\rput([angle=135,nodesep=15mm,offset=0pt]GSE){\cnode[doubleline=true]{4mm}{GSX}}
\rput(GSX){\scriptsize$\varphi^{\textrm{wri}}_x$}
\rput([angle=45,nodesep=15mm,offset=0pt]GSE){\cnode[doubleline=true]{4mm}{GSY}}
\rput(GSY){\scriptsize$\varphi^{\textrm{wri}}_y$}
\ncline{->}{GSE}{GSX}\aput[1pt](.5){\scriptsize $x$}
\ncarc[arcangle=16]{<-}{GSY}{GSE}\aput[1pt](.5){\scriptsize $y$}
\ncarc[arcangle=16]{<-}{GSE}{GSY}\aput[1pt](.5){\scriptsize $x$}
\ncarc{<-}{GSX}{GSY}\aput[1pt](.5){\scriptsize $y$}
\nccurve[angleA=135,angleB=45,ncurv=4]{<-}{GSX}{GSX}\aput[0.5pt](.50){\scriptsize $x,y$}
\NormalCoor
\end{pspicture}
\end{center}
\caption{\scriptsize The transition tree (a)) and the transition graph (b)) of the automaton ${\cal A}_{\varphi^{\textrm{wri}}}$ for the fuzzy automaton ${\cal A}$ from Example \ref{ex:N.ri.wri}.}\label{fig:N.ri.wri.3}
\end{figure}

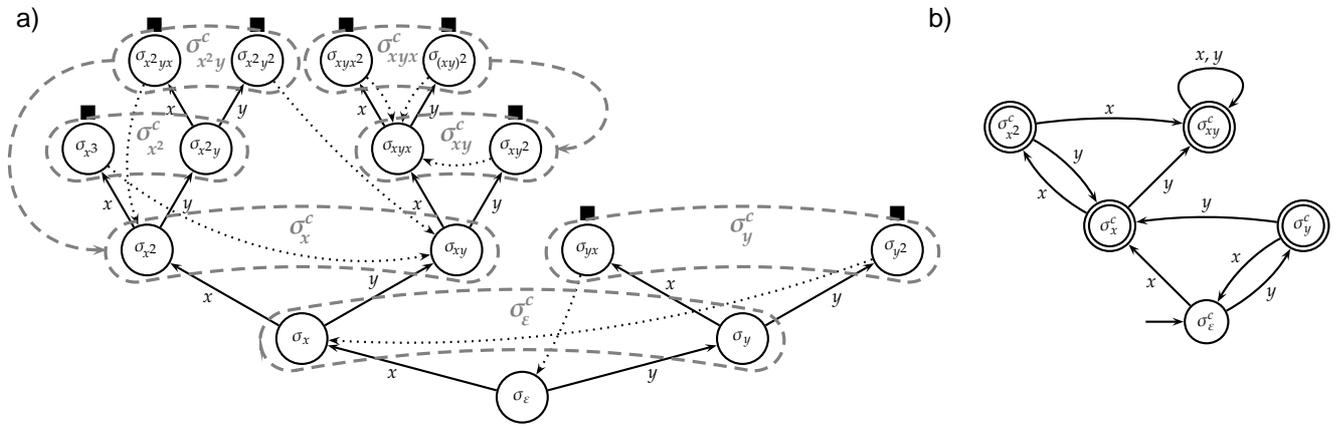
\begin{figure}
\begin{center}
\psset{unit=1cm}
\newpsobject{showgrid}{psgrid}{subgriddiv=1,griddots=10,gridlabels=6pt}
\begin{pspicture}(-6.5,-0.2)(10.5,5)
\rput(-6.5,5){\textsf{a)}}
\rput(5.5,5){\textsf{b)}}
\pnode(0,0){C}
\SpecialCoor
\rput(C){\cnode{3.5mm}{SE}}
\rput(SE){\scriptsize$\sigma_\varepsilon$}
\rput([angle=165,nodesep=30mm,offset=0pt]C){\cnode{3.5mm}{SX}}
\rput(SX){\scriptsize$\sigma_x$}
\rput([angle=15,nodesep=30mm,offset=0pt]C){\cnode{3.5mm}{SY}}
\rput(SY){\scriptsize$\sigma_y$}
\rput([angle=150,nodesep=20mm,offset=0pt]SX){\cnode{3.5mm}{SX2}}
\rput(SX2){\scriptsize$\sigma_{\!x^2}$}
\rput([angle=30,nodesep=20mm,offset=0pt]SX){\cnode{3.5mm}{SXY}}
\rput(SXY){\scriptsize$\sigma_{\!xy}$}
\rput([angle=150,nodesep=20mm,offset=0pt]SY){\cnode{3.5mm}{SYX}}
\rput(SYX){\scriptsize$\sigma_{\!yx}$}
\nput[labelsep=1pt]{90}{SYX}{$\blacksquare$}
\rput([angle=30,nodesep=20mm,offset=0pt]SY){\cnode{3.5mm}{SY2}}
\rput(SY2){\scriptsize$\sigma_{\!y^2}$}
\nput[labelsep=1pt]{90}{SY2}{$\blacksquare$}
\rput([angle=120,nodesep=12mm,offset=0pt]SX2){\cnode{3.5mm}{SX3}}
\rput(SX3){\scriptsize$\sigma_{\!x^3}$}
\nput[labelsep=1pt]{90}{SX3}{$\blacksquare$}
\rput([angle=60,nodesep=12mm,offset=0pt]SX2){\cnode{3.5mm}{SX2Y}}
\rput(SX2Y){\scriptsize$\sigma_{\!x^2y}$}
\rput([angle=120,nodesep=12mm,offset=0pt]SXY){\cnode{3.5mm}{SXYX}}
\rput(SXYX){\scriptsize$\sigma_{\!xyx}$}
\rput([angle=60,nodesep=12mm,offset=0pt]SXY){\cnode{3.5mm}{SXY2}}
\rput(SXY2){\scriptsize$\sigma_{\!xy^2}$}
\nput[labelsep=1pt]{90}{SXY2}{$\blacksquare$}
\rput([angle=120,nodesep=10mm,offset=0pt]SX2Y){\cnode{3.5mm}{SX2YX}}
\rput(SX2YX){\scriptsize$\sigma_{\!x^2yx}$}
\nput[labelsep=1pt]{90}{SX2YX}{$\blacksquare$}
\rput([angle=60,nodesep=10mm,offset=0pt]SX2Y){\cnode{3.5mm}{SX2Y2}}
\rput(SX2Y2){\scriptsize$\sigma_{\!x^2y^2}$}
\nput[labelsep=1pt]{90}{SX2Y2}{$\blacksquare$}
\rput([angle=120,nodesep=10mm,offset=0pt]SXYX){\cnode{3.5mm}{SXYX2}}
\rput(SXYX2){\scriptsize$\sigma_{\!xyx^2}$}
\nput[labelsep=1pt]{90}{SXYX2}{$\blacksquare$}
\rput([angle=60,nodesep=10mm,offset=0pt]SXYX){\cnode{3.5mm}{SXYXY}}
\rput(SXYXY){\scriptsize$\sigma_{\!(xy)^2}$}
\nput[labelsep=1pt]{90}{SXYXY}{$\blacksquare$}
\NormalCoor
\ncline{->}{SE}{SX}\aput[1pt](.6){\scriptsize $x$}
\ncline{->}{SE}{SY}\bput[1pt](.6){\scriptsize $y$}
\ncline{->}{SX}{SX2}\aput[1pt](.6){\scriptsize $x$}
\ncline{->}{SX}{SXY}\aput[1pt](.5){\scriptsize $y$}
\ncline{->}{SY}{SYX}\bput[1pt](.5){\scriptsize $x$}
\ncline{->}{SY}{SY2}\bput[1pt](.6){\scriptsize $y$}
\ncline{->}{SX2}{SX3}\aput[1pt](.5){\scriptsize $x$}
\ncline{->}{SX2}{SX2Y}\bput[1pt](.5){\scriptsize $y$}
\ncline{->}{SXY}{SXYX}\aput[1pt](.5){\scriptsize $x$}
\ncline{->}{SXY}{SXY2}\bput[1pt](.5){\scriptsize $y$}
\ncline{->}{SX2Y}{SX2YX}\aput[1pt](.5){\scriptsize $x$}
\ncline{->}{SX2Y}{SX2Y2}\bput[1pt](.5){\scriptsize $y$}
\ncline{->}{SXYX}{SXYX2}\aput[1pt](.5){\scriptsize $x$}
\ncline{->}{SXYX}{SXYXY}\bput[1pt](.5){\scriptsize $y$}
\ncarc[linestyle=dashed, dash=0.7pt 2pt]{->}{SYX}{SE}
\ncarc[linestyle=dashed, dash=0.7pt 2pt,arcangle=12]{->}{SY2}{SX}
\ncarc[linestyle=dashed, dash=0.7pt 2pt,arcangle=26]{<-}{SXY}{SX3}
\ncarc[linestyle=dashed, dash=0.7pt 2pt,arcangle=22]{->}{SXY2}{SXYX}
\ncarc[linestyle=dashed, dash=0.7pt 2pt,arcangle=22]{<-}{SX2}{SX2YX}
\ncarc[linestyle=dashed, dash=0.7pt 2pt]{<-}{SXY}{SX2Y2}
\ncarc[linestyle=dashed, dash=0.7pt 2pt,arcangle=20]{->}{SXYX2}{SXYX}
\ncarc[linestyle=dashed, dash=0.7pt 2pt,arcangle=20]{<-}{SXYX}{SXYXY}
\ncarcbox[nodesep=.2cm,boxsize=.4,linearc=.4,arcangle=10,linestyle=dashed,linewidth=1.2pt,linecolor=gray]{SY}{SX}
\ncarcbox[nodesep=.2cm,boxsize=.4,linearc=.4,arcangle=10,linestyle=dashed,linewidth=1.2pt,linecolor=gray]{SXY}{SX2}
\ncarcbox[nodesep=.2cm,boxsize=.4,linearc=.4,arcangle=10,linestyle=dashed,linewidth=1.2pt,linecolor=gray]{SY2}{SYX}
\ncarcbox[nodesep=.2cm,boxsize=.4,linearc=.4,arcangle=10,linestyle=dashed,linewidth=1.2pt,linecolor=gray]{SX2Y}{SX3}
\ncarcbox[nodesep=.2cm,boxsize=.4,linearc=.4,arcangle=10,linestyle=dashed,linewidth=1.2pt,linecolor=gray]{SXY2}{SXYX}
\ncarcbox[nodesep=.2cm,boxsize=.4,linearc=.4,arcangle=10,linestyle=dashed,linewidth=1.2pt,linecolor=gray]{SX2Y2}{SX2YX}
\ncarcbox[nodesep=.2cm,boxsize=.4,linearc=.4,arcangle=10,linestyle=dashed,linewidth=1.2pt,linecolor=gray]{SXYXY}{SXYX2}
\nccurve[angleA=180,angleB=180,ncurv=1.4,linestyle=dashed, nodesep=5pt,linewidth=1.2pt,linecolor=gray]{->}{SX2YX}{SX2}
\nccurve[angleA=0,angleB=0,ncurvA=3,ncurvB=1,linestyle=dashed, nodesep=5pt,linewidth=1.2pt,linecolor=gray]{->}{SXYXY}{SXY2}
\nput[labelsep=17pt]{90}{SE}{\textcolor{gray}{\boldmath{$\sigma^c_{\!\varepsilon}$}}}
\nput[labelsep=26pt]{90}{SX}{\textcolor{gray}{\boldmath{$\sigma^c_{\!x}$}}}
\nput[labelsep=25pt]{90}{SY}{\textcolor{gray}{\boldmath{$\sigma^c_{\!y}$}}}
\nput[labelsep=26pt]{88}{SX2}{\textcolor{gray}{\boldmath{$\sigma^c_{\!x^2}$}}}
\nput[labelsep=26pt]{90}{SXY}{\textcolor{gray}{\boldmath{$\sigma^c_{\!xy}$}}}
\nput[labelsep=19pt]{90}{SX2Y}{\textcolor{gray}{\boldmath{$\sigma^c_{\!x^2y}$}}}
\nput[labelsep=21pt]{90}{SXYX}{\textcolor{gray}{\boldmath{$\sigma^c_{\!xyx}$}}}
%
\pnode(9,1){C2}
\SpecialCoor
\rput(C2){\cnode{3mm}{GSE}}
\rput(GSE){\scriptsize$\sigma^c_\varepsilon$}
\rput([angle=180,nodesep=5mm,offset=0pt]GSE){\pnode{GI}}
\ncline{->}{GI}{GSE}
\rput([angle=135,nodesep=15mm,offset=0pt]GSE){\cnode[doubleline=true]{3.5mm}{GSX}}
\rput(GSX){\scriptsize$\sigma^c_x$}
\rput([angle=45,nodesep=15mm,offset=0pt]GSE){\cnode[doubleline=true]{3.5mm}{GSY}}
\rput(GSY){\scriptsize$\sigma^c_y$}
\rput([angle=135,nodesep=15mm,offset=0pt]GSX){\cnode[doubleline=true]{3.5mm}{GSX2}}
\rput(GSX2){\scriptsize$\sigma^c_{\!x^2}$}
\rput([angle=45,nodesep=15mm,offset=0pt]GSX){\cnode[doubleline=true]{3.5mm}{GSXY}}
\rput(GSXY){\scriptsize$\sigma^c_{\!xy}$}
\ncline{->}{GSE}{GSX}\aput[1pt](.5){\scriptsize $x$}
\ncarc[arcangle=16]{<-}{GSY}{GSE}\aput[1pt](.5){\scriptsize $y$}
\ncline{->}{GSX}{GSXY}\bput[1pt](.5){\scriptsize $y$}
\ncarc[arcangle=16]{<-}{GSE}{GSY}\aput[1pt](.5){\scriptsize $x$}
\ncarc{<-}{GSX}{GSY}\aput[1pt](.5){\scriptsize $y$}
\ncarc[arcangle=16]{->}{GSX2}{GSX}\aput[1pt](.5){\scriptsize $y$}
\ncarc[arcangle=16]{->}{GSX}{GSX2}\aput[1pt](.5){\scriptsize $x$}
\ncarc{->}{GSX2}{GSXY}\aput[1pt](.5){\scriptsize $x$}
\nccurve[angleA=135,angleB=45,ncurv=4]{->}{GSXY}{GSXY}\aput[0.5pt](.50){\scriptsize $x,y$}
\NormalCoor
\end{pspicture}
\end{center}
\caption{\scriptsize Construction of the transition graph of the children automaton ${\cal A}_N^c$ from  the transition tree of the Nerode automaton ${\cal A}_N$ from Fig. \ref{fig:N.ri.wri.1}.}\label{fig:N.ri.wri.4}
\end{figure}

According to Theorem \ref{th:hom.im}, the number of states of the automaton ${\cal A}_{\varphi^{\textrm{ri}}}$
is less than or equal to the~number of states of the Nerode automaton ${\cal
A}_N$, for every fuzzy automaton $\cal A$.~This example shows that~${\cal A}_{\varphi^{\textrm{ri}}}$~can be strictly~smaller~than ${\cal
A}_N$.~The example also shows that the greatest weakly right invariant
fuzzy~quasi-order $\varphi^{\textrm{wri}}$ can give better results in determinization
than the greatest right invariant
fuzzy quasi-order~$\varphi^{\textrm{ri}}$.

Finally, we find that all aftersets of fuzzy quasi-orders $\varphi^{\textrm{ri}}$
and  $\varphi^{\textrm{wri}}$ are different, which means that~$\varphi^{\textrm{ri}}$
and  $\varphi^{\textrm{wri}}$ do not reduce the number of states of the automaton
$\cal A$, but despite this, they produce~crisp-deterministic fuzzy automata which are smaller than the Nerode automaton of $\cal A$.
\end{example}

\begin{example}\label{ex:Nc.ri}\rm
Let ${\cal A}$ be a Boolean automaton over the two-element alphabet $X=\{x,y\}$ given by the transition graph shown in Fig.~\ref{fig:Nc.ri} a).~The transition
graphs of the Nerode automaton ${\cal A}_N$ and its children auto\-maton ${\cal
A}^c_N$, constructed by means of Algorithms \ref{alg:A.phi} and \ref{alg:A.phi.c},
are represented by Fig. \ref{fig:Nc.ri}~b)~and~c).~Clearly, the Nerode automaton ${\cal A}_N$ has 7 states, and its children auto\-maton ${\cal
A}^c_N$ has 6 states.

On the other hand, both the greatest right invariant fuzzy~quasi-order  $\varphi^{\textrm{ri}}$
and the greatest weakly right
invariant fuzzy~quasi-order  $\varphi^{\textrm{wri}}$ on $\cal A$, computed
using Algorithms \ref{alg:gri} and \ref{alg:gwri},~are~equal~to~the~equality relation on the set of states of $\cal A$, so both automata  ${\cal A}_{\varphi^{\textrm{ri}}}$
and  ${\cal A}_{\varphi^{\textrm{wri}}}$ are isomorphic to ${\cal A}_N$.~Therefore,
we have that  $|{\cal A}^c_N|<|{\cal A}_N|=|{\cal A}_{\varphi^{\textrm{ri}}}|=|{\cal A}_{\varphi^{\textrm{wri}}}|$,~which demonstrates that construction of the~children~auto\-maton, applied to the Nerode automaton of $\cal A$, can~give better
results in determinization of $\cal A$~than construc\-tions based on the greatest right invariant and weakly right invariant fuzzy~quasi-orders on $\cal A$.

\begin{figure}
\begin{center}
\psset{unit=1cm}
\newpsobject{showgrid}{psgrid}{subgriddiv=1,griddots=10,gridlabels=6pt}
\begin{pspicture}(-6,-0.2)(11,4.8)
\rput(-6,4.8){\textsf{a)}}
\rput(-1.1,4.8){\textsf{b)}}
\rput(4.9,4.8){\textsf{c)}}
\pnode(-5,2){AP1}
\SpecialCoor
\rput(AP1){\cnode[doubleline=true]{3mm}{AP1A1}}
\rput(AP1A1){\scriptsize$a_1$}
\rput([angle=-30,nodesep=25mm,offset=0pt]AP1A1){\cnode{2.7mm}{AP1A2}}
\rput(AP1A2){\scriptsize$a_2$}
\rput([angle=30,nodesep=25mm,offset=0pt]AP1A1){\cnode[doubleline=true]{3mm}{AP1A3}}
\rput(AP1A3){\scriptsize$a_3$}
\rput([angle=180,nodesep=5mm,offset=0pt]AP1A1){\pnode{AP1I}}
\ncline{->}{AP1I}{AP1A1}
\ncarc[arcangle=16]{<-}{AP1A1}{AP1A2}\aput[1pt](.5){\scriptsize $x,y$}
\ncarc[arcangle=16]{<-}{AP1A2}{AP1A1}\aput[1pt](.5){\scriptsize $x$}
\ncarc[arcangle=16]{<-}{AP1A1}{AP1A3}\aput[1pt](.5){\scriptsize $x$}
\ncarc[arcangle=16]{<-}{AP1A3}{AP1A1}\aput[1pt](.5){\scriptsize $y$}
\ncarc[arcangle=16]{<-}{AP1A2}{AP1A3}\aput[1pt](.5){\scriptsize $y$}
\ncarc[arcangle=16]{<-}{AP1A3}{AP1A2}\aput[1pt](.5){\scriptsize $x$}
%
\nccurve[angleA=45,angleB=-45,ncurv=4]{->}{AP1A2}{AP1A2}\aput[0.5pt](.50){\scriptsize $y$}
\NormalCoor
\pnode(2.5,0){C2}
\SpecialCoor
\rput(C2){\cnode[doubleline=true]{3.5mm}{GSE}}
\rput(GSE){\scriptsize$\sigma_\varepsilon$}
\rput([angle=180,nodesep=5mm,offset=0pt]GSE){\pnode{GI}}
\ncline{->}{GI}{GSE}
\rput([angle=135,nodesep=15mm,offset=0pt]GSE){\cnode{3mm}{GSX}}
\rput(GSX){\scriptsize$\sigma_x$}
\rput([angle=45,nodesep=15mm,offset=0pt]GSE){\cnode[doubleline=true]{3.5mm}{GSY}}
\rput(GSY){\scriptsize$\sigma_y$}
\rput([angle=135,nodesep=15mm,offset=0pt]GSX){\cnode[doubleline=true]{3.5mm}{GSX2}}
\rput(GSX2){\scriptsize$\sigma_{\!x^2}$}
\rput([angle=45,nodesep=15mm,offset=0pt]GSX){\cnode[doubleline=true]{3.5mm}{GSXY}}
\rput(GSXY){\scriptsize$\sigma_{\!xy}$}
\rput([angle=90,nodesep=13mm,offset=0pt]GSX2){\cnode[doubleline=true]{3.5mm}{GSX2Y}}
\rput(GSX2Y){\scriptsize$\sigma_{\!x^2y}$}
\rput([angle=90,nodesep=13mm,offset=0pt]GSXY){\cnode[doubleline=true]{3.5mm}{GSXYX}}
\rput(GSXYX){\scriptsize$\sigma_{\!xyx}$}
\ncline{->}{GSE}{GSX}\aput[1pt](.5){\scriptsize $x$}
\ncarc[arcangle=16]{<-}{GSY}{GSE}\aput[1pt](.5){\scriptsize $y$}
\ncline{->}{GSX}{GSX2}\aput[1pt](.5){\scriptsize $x$}
\ncline{->}{GSX}{GSXY}\bput[1pt](.5){\scriptsize $y$}
\ncline{->}{GSX2}{GSXY}\bput[1pt](.5){\scriptsize $x$}
\ncarc[arcangle=16]{<-}{GSE}{GSY}\aput[1pt](.5){\scriptsize $x$}
\ncarc{<-}{GSX}{GSY}\aput[1pt](.5){\scriptsize $y$}
\ncarc[arcangle=16]{->}{GSX2}{GSX2Y}\aput[1pt](.5){\scriptsize $y$}
\ncarc[arcangle=16]{->}{GSX2Y}{GSX2}\aput[1pt](.5){\scriptsize $x$}
\ncline{->}{GSXY}{GSXYX}\bput[1pt](.5){\scriptsize $x,y$}
\ncline{->}{GSX2Y}{GSXY}\bput[1pt](.5){\scriptsize $y$}
\nccurve[angleA=135,angleB=45,ncurv=4]{->}{GSXYX}{GSXYX}\aput[0.5pt](.50){\scriptsize $x,y$}
%
\NormalCoor
\pnode(8.5,0){C2}
\SpecialCoor
\rput(C2){\cnode[doubleline=true]{3.6mm}{GSE}}
\rput(GSE){\scriptsize$\sigma^c_\varepsilon$}
\rput([angle=180,nodesep=5mm,offset=0pt]GSE){\pnode{GI}}
\ncline{->}{GI}{GSE}
\rput([angle=135,nodesep=15mm,offset=0pt]GSE){\cnode{3mm}{GSX}}
\rput(GSX){\scriptsize$\sigma^c_x$}
\rput([angle=45,nodesep=15mm,offset=0pt]GSE){\cnode[doubleline=true]{3.6mm}{GSY}}
\rput(GSY){\scriptsize$\sigma^c_y$}
\rput([angle=135,nodesep=15mm,offset=0pt]GSX){\cnode[doubleline=true]{3.6mm}{GSX2}}
\rput(GSX2){\scriptsize$\sigma^c_{\!x^2}$}
\rput([angle=45,nodesep=15mm,offset=0pt]GSX){\cnode[doubleline=true]{3.6mm}{GSXY}}
\rput(GSXY){\scriptsize$\sigma^c_{\!xy}$}
\rput([angle=90,nodesep=13mm,offset=0pt]GSX2){\cnode[doubleline=true]{3.6mm}{GSX2Y}}
\rput(GSX2Y){\scriptsize$\sigma^c_{\!x^2y}$}
\ncline{->}{GSE}{GSX}\aput[1pt](.5){\scriptsize $x$}
\ncarc[arcangle=16]{<-}{GSY}{GSE}\aput[1pt](.5){\scriptsize $y$}
\ncline{->}{GSX}{GSX2}\aput[1pt](.5){\scriptsize $x$}
\ncline{->}{GSX}{GSXY}\bput[1pt](.5){\scriptsize $y$}
\ncline{->}{GSX2}{GSXY}\bput[1pt](.5){\scriptsize $x$}
\ncarc[arcangle=16]{<-}{GSE}{GSY}\aput[1pt](.5){\scriptsize $x$}
\ncarc{<-}{GSX}{GSY}\aput[1pt](.5){\scriptsize $y$}
\ncarc[arcangle=16]{->}{GSX2}{GSX2Y}\aput[1pt](.5){\scriptsize $y$}
\ncarc[arcangle=16]{->}{GSX2Y}{GSX2}\aput[1pt](.5){\scriptsize $x$}
\ncline{->}{GSX2Y}{GSXY}\bput[1pt](.5){\scriptsize $y$}
\nccurve[angleA=45,angleB=-45,ncurv=4]{->}{GSXY}{GSXY}\aput[0.5pt](.50){\scriptsize $x,y$}
%
\NormalCoor
\end{pspicture}
\end{center}
\caption{\scriptsize Transition graphs of the fuzzy automaton ${\cal A}$ from Example \ref{ex:Nc.ri} (a)), its Nerode automaton ${\cal A}_N$ (b)), and the children automaton ${\cal A}^c_N$ (c)) .}\label{fig:Nc.ri}
\end{figure}
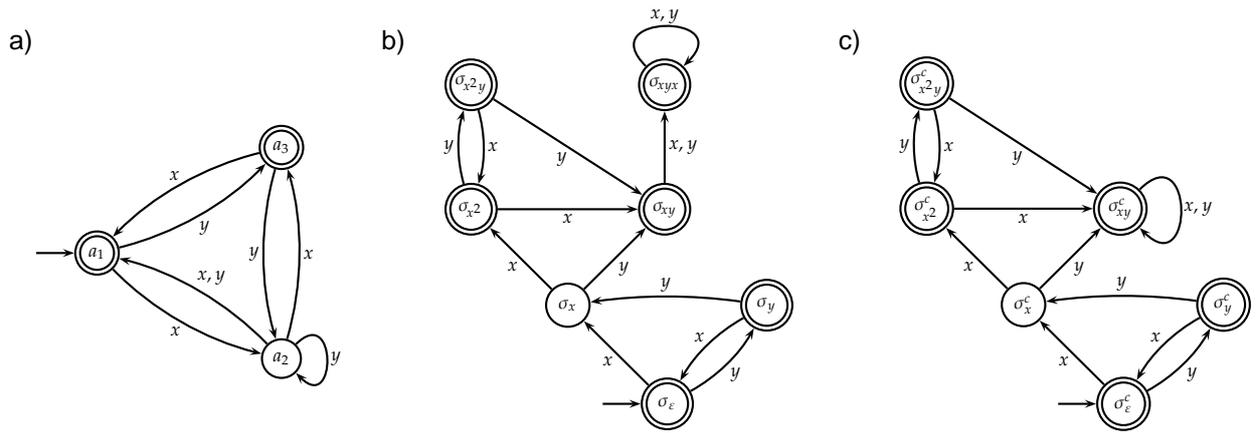
\end{example}

\begin{example}\label{ex:ri.Nc}\rm
Let ${\cal A}$ be a Boolean automaton over the two-element alphabet $X=\{x,y\}$ given by the transition graph shown in Fig.~\ref{fig:ri.Nc} a).~The Nerode
automaton ${\cal A}_N$ and the children automaton ${\cal A}^c_N$~are~repre\-sented
by graphs in Fig.~\ref{fig:ri.Nc} b) and c).~Moreover, using Algorithms \ref{alg:gri} and \ref{alg:gwri} we obtain that
\[
\varphi^{\textrm{ri}}=\varphi^{\textrm{wri}}=\begin{bmatrix} 1 & 0 & 0 \\ 1 & 1 & 1 \\ 0 & 0 &
1\end{bmatrix},
\]
so automata   ${\cal A}_{\varphi^{\textrm{ri}}}$ and ${\cal A}_{\varphi^{\textrm{wri}}}$
are  mutually isomorphic and are given by the transition graph in Fig.~\ref{fig:ri.Nc}~d),~and we conclude that $|{\cal A}_{\varphi^{\textrm{ri}}}|=|{\cal A}_{\varphi^{\textrm{wri}}}|<|{\cal A}^c_N|<|{\cal A}_N|$.

Thus, in contrast to the previous one, this example shows that there are cases where~determinization of~a fuzzy automaton $\cal A$ by means of the greatest right invariant and weakly right invariant~fuzzy~quasi-orders can give better results than construction of the children automaton of the~Nerode~automaton~of~$\cal A$.
\begin{figure}
\begin{center}
\psset{unit=1cm}
\newpsobject{showgrid}{psgrid}{subgriddiv=1,griddots=10,gridlabels=6pt}
\begin{pspicture}(-6.5,0)(11,4.8)
\rput(-6.5,4.8){\textsf{a)}}
\rput(-2.5,4.8){\textsf{b)}}
\rput(2.2,4.3){\textsf{c)}}
\rput(6.9,4.3){\textsf{d)}}
\pnode(-6.4,2.5){AP1}
\SpecialCoor
\rput(AP1){\cnode[doubleline=true]{3mm}{AP1A1}}
\rput(AP1A1){\scriptsize$a_1$}
\rput([angle=-30,nodesep=25mm,offset=0pt]AP1A1){\cnode[doubleline=true]{3mm}{AP1A2}}
\rput(AP1A2){\scriptsize$a_2$}
\rput([angle=30,nodesep=25mm,offset=0pt]AP1A1){\cnode{2.7mm}{AP1A3}}
\rput(AP1A3){\scriptsize$a_3$}
\rput([angle=270,nodesep=5mm,offset=0pt]AP1A1){\pnode{AP1I}}
\ncline{->}{AP1I}{AP1A1}
\ncline{<-}{AP1A1}{AP1A2}\bput[1pt](.5){\scriptsize $x,y$}
\ncline{<-}{AP1A3}{AP1A1}\bput[1pt](.5){\scriptsize $x,y$}
\ncline{->}{AP1A3}{AP1A2}\aput[1pt](.5){\scriptsize $x$}
%
\nccurve[angleA=0,angleB=-90,ncurv=4]{->}{AP1A2}{AP1A2}\aput[0.5pt](.50){\scriptsize $x,y$}
\nccurve[angleA=90,angleB=0,ncurv=4]{->}{AP1A3}{AP1A3}\aput[0.5pt](.50){\scriptsize $y$}
\nccurve[angleA=180,angleB=90,ncurv=4]{->}{AP1A1}{AP1A1}\aput[0.5pt](.50){\scriptsize $x$}
\NormalCoor
\pnode(-0.6,0){C2}
\SpecialCoor
\rput(C2){\cnode[doubleline=true]{3.5mm}{GSE}}
\rput(GSE){\scriptsize$\sigma_\varepsilon$}
\rput([angle=180,nodesep=5mm,offset=0pt]GSE){\pnode{GI}}
\ncline{->}{GI}{GSE}
\rput([angle=135,nodesep=15mm,offset=0pt]GSE){\cnode[doubleline=true]{3.5mm}{GSX}}
\rput(GSX){\scriptsize$\sigma_x$}
\rput([angle=45,nodesep=15mm,offset=0pt]GSE){\cnode{3mm}{GSY}}
\rput(GSY){\scriptsize$\sigma_y$}
\rput([angle=90,nodesep=13mm,offset=0pt]GSX){\cnode[doubleline=true]{3.5mm}{GSX2}}
\rput(GSX2){\scriptsize$\sigma_{\!x^2}$}
\rput([angle=90,nodesep=13mm,offset=0pt]GSY){\cnode[doubleline=true]{3.5mm}{GSYX}}
\rput(GSYX){\scriptsize$\sigma_{\!yx}$}
\rput([angle=90,nodesep=13mm,offset=0pt]GSYX){\cnode[doubleline=true]{3.5mm}{GSYX2}}
\rput(GSYX2){\scriptsize$\sigma_{\!yx^2}$}
\ncline{->}{GSE}{GSX}\aput[1pt](.5){\scriptsize $x$}
\ncline{<-}{GSY}{GSE}\aput[1pt](.5){\scriptsize $y$}
\ncline{->}{GSX}{GSX2}\aput[1pt](.5){\scriptsize $x$}
\ncline{->}{GSX}{GSY}\aput[1pt](.5){\scriptsize $y$}
\ncline{->}{GSY}{GSYX}\bput[1pt](.5){\scriptsize $x$}
\ncline{->}{GSYX2}{GSX2}\bput[1pt](.5){\scriptsize $x,y$}
\ncline{->}{GSYX}{GSYX2}\bput[1pt](.5){\scriptsize $x,y$}
\nccurve[angleA=180,angleB=90,ncurv=4]{->}{GSX2}{GSX2}\aput[0.5pt](.70){\scriptsize $x,y$}
\nccurve[angleA=45,angleB=-45,ncurv=4]{->}{GSY}{GSY}\aput[0.5pt](.50){\scriptsize $y$}
\NormalCoor
\pnode(4.2,0){C2}
\SpecialCoor
\rput(C2){\cnode[doubleline=true]{3.5mm}{GSE}}
\rput(GSE){\scriptsize$\sigma^c_\varepsilon$}
\rput([angle=180,nodesep=5mm,offset=0pt]GSE){\pnode{GI}}
\ncline{->}{GI}{GSE}
\rput([angle=135,nodesep=15mm,offset=0pt]GSE){\cnode[doubleline=true]{3.5mm}{GSX}}
\rput(GSX){\scriptsize$\sigma^c_x$}
\rput([angle=45,nodesep=15mm,offset=0pt]GSE){\cnode{3mm}{GSY}}
\rput(GSY){\scriptsize$\sigma^c_y$}
\rput([angle=90,nodesep=13mm,offset=0pt]GSX){\cnode[doubleline=true]{3.5mm}{GSX2}}
\rput(GSX2){\scriptsize$\sigma^c_{\!x^2}$}
\rput([angle=90,nodesep=13.5mm,offset=0pt]GSY){\cnode[doubleline=true]{3.5mm}{GSYX}}
\rput(GSYX){\scriptsize$\sigma^c_{\!yx}$}
\ncline{->}{GSE}{GSX}\aput[1pt](.5){\scriptsize $x$}
\ncline{<-}{GSY}{GSE}\aput[1pt](.5){\scriptsize $y$}
\ncline{->}{GSX}{GSX2}\aput[1pt](.5){\scriptsize $x$}
\ncline{->}{GSX}{GSY}\aput[1pt](.5){\scriptsize $y$}
\ncline{->}{GSY}{GSYX}\bput[1pt](.5){\scriptsize $x$}
\ncline{->}{GSYX}{GSX2}\bput[1pt](.5){\scriptsize $x,y$}
\nccurve[angleA=135,angleB=45,ncurv=4]{->}{GSX2}{GSX2}\aput[0.5pt](.50){\scriptsize $x,y$}
\nccurve[angleA=45,angleB=-45,ncurv=4]{->}{GSY}{GSY}\aput[0.5pt](.50){\scriptsize $y$}
\NormalCoor
\pnode(8.8,0){C2}
\SpecialCoor
\rput(C2){\cnode[doubleline=true]{3.5mm}{GSE}}
\rput(GSE){\scriptsize$\varphi^{\textrm{ri}}_\varepsilon$}
\rput([angle=180,nodesep=5mm,offset=0pt]GSE){\pnode{GI}}
\ncline{->}{GI}{GSE}
\rput([angle=135,nodesep=15mm,offset=0pt]GSE){\cnode[doubleline=true]{3.5mm}{GSX}}
\rput(GSX){\scriptsize$\varphi^{\textrm{ri}}_x$}
\rput([angle=45,nodesep=15mm,offset=0pt]GSE){\cnode{3mm}{GSY}}
\rput(GSY){\scriptsize$\varphi^{\textrm{ri}}_y$}
\rput([angle=45,nodesep=15mm,offset=0pt]GSX){\cnode[doubleline=true]{3.5mm}{GSX2}}
\rput(GSX2){\scriptsize$\varphi^{\textrm{ri}}_{\!x^2}$}
\ncline{->}{GSE}{GSX}\aput[1pt](.5){\scriptsize $x$}
\ncline{<-}{GSY}{GSE}\aput[1pt](.5){\scriptsize $y$}
\ncline{->}{GSX}{GSX2}\aput[1pt](.5){\scriptsize $x$}
\ncline{->}{GSX}{GSY}\aput[1pt](.5){\scriptsize $y$}
\ncline{->}{GSY}{GSX2}\bput[1pt](.5){\scriptsize $x$}
\nccurve[angleA=135,angleB=45,ncurv=4]{->}{GSX2}{GSX2}\aput[0.5pt](.50){\scriptsize $x,y$}
\nccurve[angleA=45,angleB=-45,ncurv=4]{->}{GSY}{GSY}\aput[0.5pt](.50){\scriptsize $y$}
\NormalCoor
\end{pspicture}
\end{center}
\caption{\scriptsize Transition graphs of the fuzzy automaton ${\cal A}$ from Example \ref{ex:ri.Nc} (a)), its Nerode automaton ${\cal A}_N$ (b)),  the children automaton ${\cal A}^c_N$ of ${\cal A}_N$ (c)), and the automaton
${\cal A}_{\varphi^{\textrm{ri}}}\cong {\cal A}_{\varphi^{\textrm{wri}}}$ (d)).}\label{fig:ri.Nc}
\end{figure}
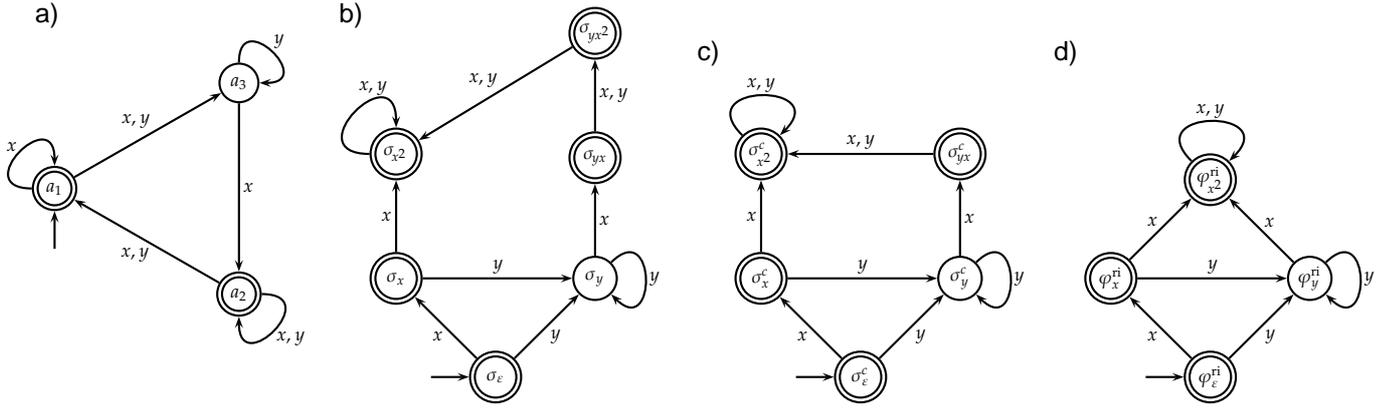
\end{example}

\begin{example}\label{ex:all.AN}\rm
Let ${\cal A}$ be a Boolean automaton over the two-element alphabet $X=\{x,y\}$ given by the transition graph shown in Fig.~\ref{fig:All.N} a).~When we compute    $\varphi^{\textrm{ri}}$
and   $\varphi^{\textrm{wri}}$  we obtain that both of them~are~equal to
the equality relation on the set of states of $\cal A$, and therefore, both automata  ${\cal A}_{\varphi^{\textrm{ri}}}$
and  ${\cal A}_{\varphi^{\textrm{wri}}}$~are isomorphic to the Nerode automaton ${\cal A}_N$, which is represented by the transition graph in Fig.~\ref{fig:All.N} b).~More\-over, we obtain that the children automaton ${\cal A}^c_N$ of ${\cal A}_N$ is also isomorphic to the  Nerode automaton ${\cal A}_N$.~Hence, in
this case none of the methods discussed in this paper does not give an automaton with smaller number of states than the Nerode automaton ${\cal A}_N$.~It should be noted that the Nerode automaton ${\cal A}_N$ is not minimal, the minimal deterministic automaton equivalent to ${\cal A}_N$ is represented by the graph shown in Fig.~\ref{fig:All.N} c).
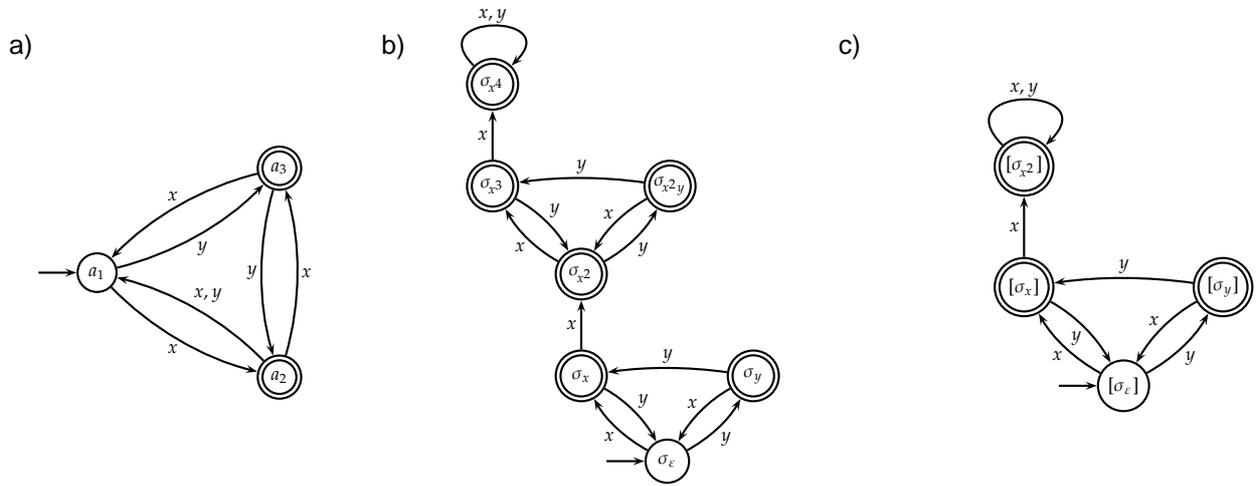
\begin{figure}
\begin{center}
\psset{unit=1cm}
\newpsobject{showgrid}{psgrid}{subgriddiv=1,griddots=10,gridlabels=6pt}
\begin{pspicture}(-6,-0.1)(11,5.7)
\rput(-6,5.5){\textsf{a)}}
\rput(-1.1,5.5){\textsf{b)}}
\rput(4.9,5.5){\textsf{c)}}
\pnode(-5,2.5){AP1}
\SpecialCoor
\rput(AP1){\cnode{2.7mm}{AP1A1}}
\rput(AP1A1){\scriptsize$a_1$}
\rput([angle=-30,nodesep=25mm,offset=0pt]AP1A1){\cnode[doubleline=true]{3mm}{AP1A2}}
\rput(AP1A2){\scriptsize$a_2$}
\rput([angle=30,nodesep=25mm,offset=0pt]AP1A1){\cnode[doubleline=true]{3mm}{AP1A3}}
\rput(AP1A3){\scriptsize$a_3$}
\rput([angle=180,nodesep=5mm,offset=0pt]AP1A1){\pnode{AP1I}}
\ncline{->}{AP1I}{AP1A1}
\ncarc[arcangle=16]{<-}{AP1A1}{AP1A2}\aput[1pt](.5){\scriptsize $x,y$}
\ncarc[arcangle=16]{<-}{AP1A2}{AP1A1}\aput[1pt](.5){\scriptsize $x$}
\ncarc[arcangle=16]{<-}{AP1A1}{AP1A3}\aput[1pt](.5){\scriptsize $x$}
\ncarc[arcangle=16]{<-}{AP1A3}{AP1A1}\aput[1pt](.5){\scriptsize $y$}
\ncarc[arcangle=16]{<-}{AP1A2}{AP1A3}\aput[1pt](.5){\scriptsize $y$}
\ncarc[arcangle=16]{<-}{AP1A3}{AP1A2}\aput[1pt](.5){\scriptsize $x$}
\NormalCoor
\pnode(2.5,0){C2}
\SpecialCoor
\rput(C2){\cnode{3mm}{GSE}}
\rput(GSE){\scriptsize$\sigma_\varepsilon$}
\rput([angle=180,nodesep=5mm,offset=0pt]GSE){\pnode{GI}}
\ncline{->}{GI}{GSE}
\rput([angle=135,nodesep=13mm,offset=0pt]GSE){\cnode[doubleline=true]{3.5mm}{GSX}}
\rput(GSX){\scriptsize$\sigma_x$}
\rput([angle=45,nodesep=13mm,offset=0pt]GSE){\cnode[doubleline=true]{3.5mm}{GSY}}
\rput(GSY){\scriptsize$\sigma_y$}
\rput([angle=90,nodesep=10mm,offset=0pt]GSX){\cnode[doubleline=true]{3.5mm}{GSX2}}
\rput(GSX2){\scriptsize$\sigma_{\!x^2}$}
\rput([angle=45,nodesep=13mm,offset=0pt]GSX2){\cnode[doubleline=true]{3.5mm}{GSX2Y}}
\rput(GSX2Y){\scriptsize$\sigma_{\!x^2y}$}
\rput([angle=135,nodesep=13mm,offset=0pt]GSX2){\cnode[doubleline=true]{3.5mm}{GSX3}}
\rput(GSX3){\scriptsize$\sigma_{\!x^3}$}
\rput([angle=90,nodesep=10mm,offset=0pt]GSX3){\cnode[doubleline=true]{3.5mm}{GSX4}}
\rput(GSX4){\scriptsize$\sigma_{\!x^4}$}
\ncarc[arcangle=16]{->}{GSE}{GSX}\aput[1pt](.5){\scriptsize $x$}
\ncarc[arcangle=16]{<-}{GSY}{GSE}\aput[1pt](.5){\scriptsize $y$}
\ncarc[arcangle=16]{->}{GSX}{GSE}\aput[1pt](.5){\scriptsize $y$}
\ncline{->}{GSX}{GSX2}\aput[1pt](.5){\scriptsize $x$}
\ncarc[arcangle=16]{<-}{GSE}{GSY}\aput[1pt](.5){\scriptsize $x$}
\ncarc{<-}{GSX}{GSY}\aput[1pt](.5){\scriptsize $y$}
\ncarc[arcangle=16]{->}{GSX2}{GSX3}\aput[1pt](.5){\scriptsize $x$}
\ncarc[arcangle=16]{->}{GSX3}{GSX2}\aput[1pt](.5){\scriptsize $y$}
\ncarc[arcangle=16]{<-}{GSX2}{GSX2Y}\aput[1pt](.5){\scriptsize $x$}
\ncarc[arcangle=16]{<-}{GSX2Y}{GSX2}\aput[1pt](.5){\scriptsize $y$}
\ncarc{<-}{GSX3}{GSX2Y}\aput[1pt](.5){\scriptsize $y$}
\ncline{->}{GSX3}{GSX4}\aput[1pt](.5){\scriptsize $x$}
\nccurve[angleA=135,angleB=45,ncurv=4]{->}{GSX4}{GSX4}\aput[0.5pt](.50){\scriptsize $x,y$}
%
\NormalCoor
\pnode(8.5,1){C2}
\SpecialCoor
\rput(C2){\cnode{3.5mm}{GSE}}
\rput(GSE){\scriptsize$[\sigma_\varepsilon]$}
\rput([angle=180,nodesep=5mm,offset=0pt]GSE){\pnode{GI}}
\ncline{->}{GI}{GSE}
\rput([angle=135,nodesep=15mm,offset=0pt]GSE){\cnode[doubleline=true]{4mm}{GSX}}
\rput(GSX){\scriptsize$[\sigma_x]$}
\rput([angle=45,nodesep=15mm,offset=0pt]GSE){\cnode[doubleline=true]{4mm}{GSY}}
\rput(GSY){\scriptsize$[\sigma_y]$}
\rput([angle=90,nodesep=12mm,offset=0pt]GSX){\cnode[doubleline=true]{4mm}{GSX2}}
\rput(GSX2){\scriptsize$[\sigma_{\!x^2}]$}
\ncarc[arcangle=16]{<-}{GSY}{GSE}\aput[1pt](.5){\scriptsize $y$}
\ncline{->}{GSX}{GSX2}\aput[1pt](.5){\scriptsize $x$}
\ncarc[arcangle=16]{->}{GSX}{GSE}\bput[1pt](.5){\scriptsize $y$}
\ncarc[arcangle=16]{<-}{GSE}{GSY}\aput[1pt](.5){\scriptsize $x$}
\ncarc{<-}{GSX}{GSY}\aput[1pt](.5){\scriptsize $y$}
\ncarc[arcangle=16]{->}{GSE}{GSX}\aput[1pt](.5){\scriptsize $x$}
\nccurve[angleA=135,angleB=45,ncurv=4]{->}{GSX2}{GSX2}\aput[0.5pt](.50){\scriptsize $x,y$}
%
\NormalCoor
\end{pspicture}
\end{center}
\caption{\scriptsize Transition graphs of the fuzzy automaton ${\cal A}$ from Example \ref{ex:all.AN} (a)), its Nerode automaton ${\cal A}_N$ (b)), and the minimal crisp-deterministic fuzzy automaton equivalent to ${\cal A}_N$ (c)) .}\label{fig:All.N}
\end{figure}
\end{example}

\begin{example}\label{ex:red.no.det}\rm
Let ${\cal A}$ be a Boolean automaton over the two-element alphabet $X=\{x,y\}$ given by the transition graph shown in Fig.~\ref{fig:red.no.det} a).~The
transition graph of the Nerode automaton ${\cal A}_N$ of $\cal A$ is given
in Fig.~\ref{fig:red.no.det}~b).~Using Algorithms \ref{alg:gri} and \ref{alg:gwri} we obtain that
\[
\varphi^{\textrm{ri}}=\varphi^{\textrm{wri}}=\begin{bmatrix}
1 & 1 & 0 & 0 & 0 \\
1 & 1 & 0 & 0 & 0 \\
0 & 0 & 1 & 0 & 0 \\
0 & 0 & 0 & 1 & 0 \\
0 & 0 & 1 & 0 & 1
\end{bmatrix},
\]
and we can easily show that ${\cal A}_{\varphi^{\textrm{ri}}}$ is isomorphic
to the Nerode automaton ${\cal A}_N$.~Note that $\varphi^{\textrm{ri}}$ has
4 different aftersets, which means that the afterset fuzzy automaton of $\cal A$ with respect to $\varphi^{\textrm{ri}}$ has 4 states. Therefore, although
$\varphi^{\textrm{ri}}$ reduces the number of states of $\cal A$, it does not give an automaton with smaller number of states than the Nerode automaton ${\cal A}_N$.
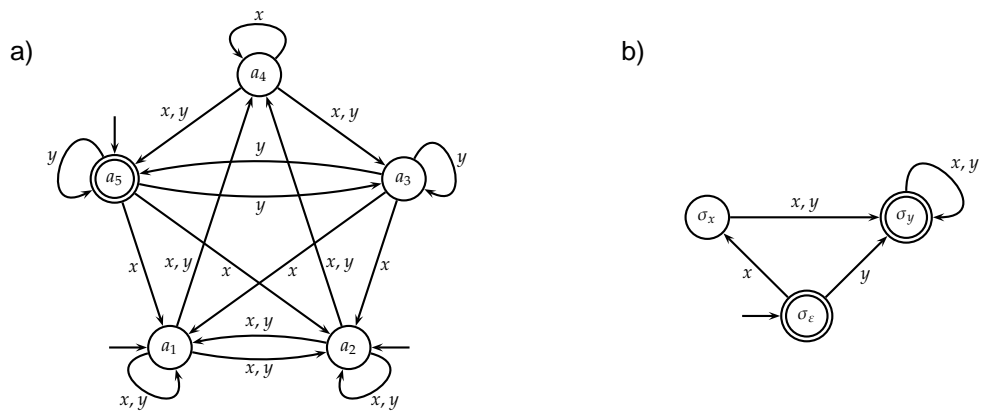
\begin{figure}
\begin{center}
\psset{unit=1.2cm}
\newpsobject{showgrid}{psgrid}{subgriddiv=1,griddots=10,gridlabels=6pt}
\begin{pspicture}(-3,-1.9)(8,2)
\rput(-2.6,1.9){\textsf{a)}}
\rput(4.1,1.9){\textsf{b)}}
\pnode(0,0){C}
\SpecialCoor
\degrees[100]
\rput([angle=65,nodesep=20mm,offset=0pt]C){\cnode{3mm}{A1}}
\rput(A1){\scriptsize$a_1$}
\rput([angle=85,nodesep=20mm,offset=0pt]C){\cnode{3mm}{A2}}
\rput(A2){\scriptsize$a_2$}
\rput([angle=5,nodesep=20mm,offset=0pt]C){\cnode{3mm}{A3}}
\rput(A3){\scriptsize$a_3$}
\rput([angle=25,nodesep=20mm,offset=0pt]C){\cnode{3mm}{A4}}
\rput(A4){\scriptsize$a_4$}
\rput([angle=45,nodesep=20mm,offset=0pt]C){\cnode[doubleline=true]{3.3mm}{A5}}
\rput(A5){\scriptsize$a_5$}
\degrees[360]
\rput([angle=180,nodesep=5mm,offset=0pt]A1){\pnode{I1}}
\ncline{->}{I1}{A1}
\rput([angle=0,nodesep=5mm,offset=0pt]A2){\pnode{I2}}
\ncline{->}{I2}{A2}
\rput([angle=90,nodesep=5mm,offset=0pt]A5){\pnode{I5}}
\ncline{->}{I5}{A5}
\ncline{->}{A1}{A4}\aput[0pt](.25){\scriptsize $x,y$}
\ncline{->}{A2}{A4}\bput[0pt](.25){\scriptsize $x,y$}
\ncline{->}{A3}{A1}\aput[0.5pt](.5){\scriptsize $x$}
\ncline{->}{A3}{A2}\aput[0.5pt](.5){\scriptsize $x$}
\ncline{->}{A4}{A3}\aput[0.5pt](.5){\scriptsize $x,y$}
\ncline{->}{A4}{A5}\bput[0.5pt](.5){\scriptsize $x,y$}
\ncline{->}{A5}{A1}\bput[0.5pt](.5){\scriptsize $x$}
\ncline{->}{A5}{A2}\bput[0.5pt](.5){\scriptsize $x$}
\ncarc[arcangle=12]{<-}{A1}{A2}\aput[1pt](.5){\scriptsize $x,y$}
\ncarc[arcangle=12]{<-}{A2}{A1}\aput[1pt](.5){\scriptsize $x,y$}
\ncarc[arcangle=12]{<-}{A5}{A3}\aput[1pt](.5){\scriptsize $y$}
\ncarc[arcangle=12]{<-}{A3}{A5}\aput[1pt](.5){\scriptsize $y$}
\nccurve[angleA=195,angleB=285,ncurv=4]{->}{A1}{A1}\bput[0.5pt](.50){\scriptsize $x,y$}
\nccurve[angleA=-15,angleB=-105,ncurv=4]{->}{A2}{A2}\aput[0.5pt](.50){\scriptsize $x,y$}
\nccurve[angleA=63,angleB=-27,ncurv=4]{->}{A3}{A3}\aput[0.5pt](.50){\scriptsize $y$}
\nccurve[angleA=45,angleB=135,ncurv=4]{->}{A4}{A4}\bput[1pt](.50){\scriptsize $x$}
\nccurve[angleA=117,angleB=207,ncurv=4]{->}{A5}{A5}\bput[0.5pt](.50){\scriptsize $y$}
\NormalCoor
\pnode(6,-1){C2}
\SpecialCoor
\rput(C2){\cnode[doubleline=true]{3.5mm}{GSE}}
\rput(GSE){\scriptsize$\sigma_\varepsilon$}
\rput([angle=180,nodesep=5mm,offset=0pt]GSE){\pnode{GI}}
\ncline{->}{GI}{GSE}
\rput([angle=135,nodesep=15mm,offset=0pt]GSE){\cnode{3mm}{GSX}}
\rput(GSX){\scriptsize$\sigma_x$}
\rput([angle=45,nodesep=15mm,offset=0pt]GSE){\cnode[doubleline=true]{3.5mm}{GSY}}
\rput(GSY){\scriptsize$\sigma_y$}
\ncline{->}{GSE}{GSX}\aput[1pt](.5){\scriptsize $x$}
\ncline{->}{GSX}{GSY}\aput[1pt](.5){\scriptsize $x,y$}
\ncline{->}{GSE}{GSY}\bput[1pt](.5){\scriptsize $y$}
\nccurve[angleA=90,angleB=0,ncurv=4]{->}{GSY}{GSY}\aput[1pt](.50){\scriptsize $x,y$}
\NormalCoor
\end{pspicture}
\end{center}
\caption{\scriptsize The transition graph of the fuzzy automaton from Example \ref{ex:red.no.det} (a)), and its Nerode automaton ${\cal A}_N$ (b)).}\label{fig:red.no.det}
\end{figure}
\end{example}

\begin{example}\label{ex:inf}\rm
Let ${\cal A}$ be an automaton over the one-element alphabet $X=\{x\}$ and the Goguen (product) structure given by the transition graph shown in Fig.~\ref{fig:inf} a).
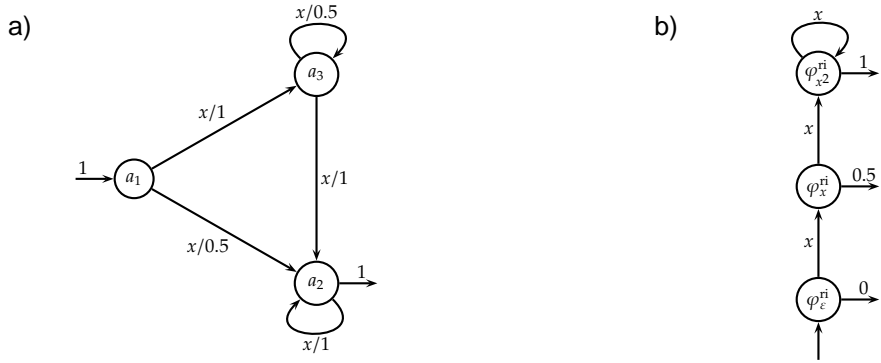
\begin{figure}
\begin{center}
\psset{unit=1cm}
\newpsobject{showgrid}{psgrid}{subgriddiv=1,griddots=10,gridlabels=6pt}
\begin{pspicture}(-6,-0.1)(11,4)
\rput(-4.5,4){\textsf{a)}}
\rput(4,4){\textsf{b)}}
\pnode(-3,2){AP1}
\SpecialCoor
\rput(AP1){\cnode{2.7mm}{AP1A1}}
\rput(AP1A1){\scriptsize$a_1$}
\rput([angle=-30,nodesep=25mm,offset=0pt]AP1A1){\cnode{3mm}{AP1A2}}
\rput(AP1A2){\scriptsize$a_2$}
\rput([angle=30,nodesep=25mm,offset=0pt]AP1A1){\cnode{3mm}{AP1A3}}
\rput(AP1A3){\scriptsize$a_3$}
\rput([angle=180,nodesep=5mm,offset=0pt]AP1A1){\pnode{AP1I}}
\ncline{->}{AP1I}{AP1A1}\aput[1pt](.2){\scriptsize $1$}
\rput([angle=0,nodesep=5mm,offset=0pt]AP1A2){\pnode{AP2O}}
\ncline{->}{AP1A2}{AP2O}\aput[1pt](.6){\scriptsize $1$}
\ncline{<-}{AP1A2}{AP1A1}\aput[1pt](.5){\scriptsize $x/0.5$}
\ncline{->}{AP1A1}{AP1A3}\aput[1pt](.5){\scriptsize $x/1$}
\ncline{->}{AP1A3}{AP1A2}\aput[1pt](.5){\scriptsize $x/1$}
\nccurve[angleA=135,angleB=45,ncurv=4]{->}{AP1A3}{AP1A3}\aput[0.5pt](.50){\scriptsize $x/0.5$}
\nccurve[angleA=-45,angleB=-135,ncurv=4]{->}{AP1A2}{AP1A2}\aput[0.5pt](.50){\scriptsize $x/1$}
\NormalCoor
\pnode(6,0.4){C2}
\SpecialCoor
\rput(C2){\cnode{3mm}{GSE}}
\rput(GSE){\scriptsize$\varphi^{\textrm{ri}}_\varepsilon$}
\rput([angle=-90,nodesep=5mm,offset=0pt]GSE){\pnode{GI}}
\ncline{->}{GI}{GSE}
\rput([angle=0,nodesep=5mm,offset=0pt]GSE){\pnode{GSEO}}
\ncline{->}{GSE}{GSEO}\aput[1pt](.6){\scriptsize $0$}
\rput([angle=90,nodesep=12mm,offset=0pt]GSE){\cnode{3mm}{GSX}}
\rput(GSX){\scriptsize$\varphi^{\textrm{ri}}_{x}$}
\rput([angle=0,nodesep=5mm,offset=0pt]GSX){\pnode{GSXO}}
\ncline{->}{GSX}{GSXO}\aput[1pt](.6){\scriptsize $0.5$}
\ncline{->}{GSE}{GSX}\aput[1pt](.5){\scriptsize $x$}
\rput([angle=90,nodesep=12mm,offset=0pt]GSX){\cnode{3mm}{GSX2}}
\rput(GSX2){\scriptsize$\varphi^{\textrm{ri}}_{\!x^2}$}
\ncline{->}{GSX}{GSX2}\aput[1pt](.5){\scriptsize $x$}
\rput([angle=0,nodesep=5mm,offset=0pt]GSX2){\pnode{GSX2O}}
\ncline{->}{GSX2}{GSX2O}\aput[1pt](.6){\scriptsize $1$}
\nccurve[angleA=135,angleB=45,ncurv=4]{->}{GSX2}{GSX2}\aput[0.5pt](.50){\scriptsize $x$}
\NormalCoor
\end{pspicture}
\end{center}
\caption{\scriptsize Transition graphs of the fuzzy automaton ${\cal A}$ from Example \ref{ex:inf} (a)), and the automaton ${\cal A}_{\varphi^{\textrm{ri}}}$ (b)).}\label{fig:inf}
\end{figure}

It is easy to check that $\sigma_\varepsilon=[\,1\ 0\ 0\,]$, $\sigma_x=[\,0\ 0.5\ 1\,]$,
and $\sigma_{x^n}=[\,0\ 1\ 0.5^{n-1}\,]$, for each $n\in \Bbb N$, $n\geqslant
2$.~Therefore, the Nerode automaton of $\cal A$ has infinitely many states.

On the other hand, using Algorithms \ref{alg:gri} and \ref{alg:gwri} we obtain that
\[
\varphi^{\textrm{ri}}=\varphi^{\textrm{wri}}=\begin{bmatrix}
1 & 0 & 0.5 \\
1 & 1 & 1 \\
1 & 0 & 1
\end{bmatrix},
\]
and we construct the automaton ${\cal A}_{\varphi^{\textrm{ri}}}$ which is
shown in Fig.~\ref{fig:inf} b).

Therefore, although the Nerode automaton of $\cal A$ is inifinite, using
the greatest right invariant fuzzy quasi-order on $\cal A$ we obtain a finite crisp-deterministic
fuzzy automaton which is equivalent to $\cal A$.

\end{example}

\end{document}